\setlist[enumerate]{leftmargin=*}
\newcommand{\bigtimes}{\mathop{\scalerel*{\times}{\sum}}}
\newcounter{dgp}                     
\newcommand{\beq}{\begin{eqnarray*}}
\newcommand{\eeq}{\end{eqnarray*}}
\newcommand{\beqn}{\begin{eqnarray}}
\newcommand{\eeqn}{\end{eqnarray}}
\newcommand{\ra}{\rightarrow}
\newcommand{\var}{{\rm var}}
\newcommand{\bi}{\begin{itemize}}
\newcommand{\ei}{\end{itemize}}
\newcommand{\be}{\begin{equation}}
\newcommand{\ee}{\end{equation}}
\newcommand{\bfm}[1]{\ensuremath{\mathbf{#1}}}
\def\bi{\bfm i}
\DeclareMathOperator{\plim}{plim}
\algrenewcommand\algorithmicrequire{\textbf{Input:}}
\algrenewcommand\algorithmicensure{\textbf{Output:}}
\renewcommand{\[}{\left\[}
\renewcommand{\]}{\right\]}
\newcommand{\argmin}{\mathop{\rm arg\min}}
\numberwithin{equation}{section}
\theoremstyle{plain}
\newtheorem{lemma}{Lemma}[section]
\newtheorem{theorem}{Theorem}[section]
\theoremstyle{definition}       
\newtheorem{assumption}{Assumption}[section]
\DeclareMathOperator{\npr}{pr}
\DeclareMathOperator{\ncov}{cov}
\newtheorem{remark}{Remark}[section]
\newtheorem{definition}{Definition}[section]
\def\be{\begin{equation}}
\def\ee{\end{equation}}
\author[1]{Shihao Zhang}
\affil[1]{School of Mathematical Sciences, University of Southampton, Southampton, UK}
\author[2]{Zudi Lu}
\affil[2]{Department of Biostatistics, City University of Hong Kong, Hong Kong SAR, China}  
\author[1]{Chao Zheng}
\begin{document}

\title{A Simple and Effective Random Forest Modelling for Nonlinear Time Series Data}

\date{}
\maketitle
\vspace{-1.6cm}
\begin{abstract}

In this paper, we propose Random Forests by Random Weights (RF-RW), a theoretically grounded and practically effective alternative RF modelling for nonlinear time series data, where existing RF-based approaches struggle to adequately capture temporal dependence. RF-RW reconciles the strengths of classic RF with the temporal dependence inherent in time series forecasting. Specifically, it avoids the bootstrap resampling procedure, therefore preserves the serial dependence structure, whilst incorporates independent random weights to reduce correlations among trees. We establish non-asymptotic concentration bounds and asymptotic uniform consistency guarantees, for both fixed- and high-dimensional feature spaces, which extend beyond existing theoretical analyses of RF. Extensive simulation studies demonstrate that RF-RW outperforms existing RF-based approaches and other benchmarks such as SVM and LSTM. It also achieves the lowest error among competitors in our real-data example of predicting UK COVID-19 daily cases.

\end{abstract}

\noindent
{\bf Keywords}: {High-dimensional data; Random forests; Random weights; Time series forecasting. }

\section{Introduction} \label{sec:intro}
\begin{bibunit}
Time series data, exhibiting temporal dependence and ordering that may be only partially observed, are prevalent across scientific and applied disciplines. Efficient modelling of such data is essential both for understanding the underlying generating processes and for improving forecasting performance.  However, this remains challenging due to the complex and often nonlinear dependence structures that characterize many real-world series. To address this, a wide range of nonparametric approaches have been proposed \citep{Masini2023,Fan2003,Gao2007,terasvirta2010}, offering greater flexibility than traditional parametric models, which are typically constrained by rigid distributional or structural assumptions. 

Random Forests (RF) \citep{Breiman2001} have gained increasing prominence in nonlinear classification and regression problems, owing to their robustness to overfitting, relatively low tuning requirements, and ability to handle high-dimensional features. The central idea is to construct a large collection of randomized decision trees \citep{Breiman1984}, each trained on a subsample of the data. While the prediction from an individual tree may be highly variable, aggregating across many trees substantially reduces variance and thereby improves predictive accuracy \citep{Hastie2009}. 
However, direct application of RF to time series forecasting \citep{YEŞİLKANAT2020,Galasso2022,Rady2021,Kane2014} fails to account for temporal dependence, as the bootstrap subsampling shuffles the data, thus disrupting the inherent sequential structure of the time series. To address this issue, recent studies have focused on modifying the structure of RF to better accommodate dependent data. The most straightforward approach is to remove the bootstrapping procedure altogether. For example, \citet{Chen2024} implement a quantile random forest without bootstrapping in a panel and time series setting. Beyond these, \citet{Saha2023} propose the generalized least squares Random Forest (RF-GLS), which decorrelates the data prior to subsampling, and \citet{Goehry2023} developed a block bootstrap Random Forest that preserves dependence within each data block while assuming approximate independence across blocks.


The use of bootstrap subsampling, random feature in splitting, and highly data-dependent partitioning makes it challenging to analyse the theoretical properties of RF, especially for consistency. Existing theoretical results on RF have largely been restricted to the i.i.d. setting, where pointwise consistency \citep{Wager2015}, $L_2$ consistency \citep{Biau2012,Scornet2015,Chi2022}, and pointwise asymptotic normality \citep{Wager2018,Athey2019} have been established. We refer to \cite{Scornet2025}  for a comprehensive review. Studies on the asymptotic properties of RF and its variants in time series settings remain quite limited and continue to attract growing interest. \cite{Davis2020} provides the pointwise consistency of RF without bootstrapping for nonlinear time series, while \cite{Shiraishi2024} and \cite{Chen2024} focus on quantile regression forests for time series and panel data, respectively.

In this paper, we propose a new framework for random forests modelling, called Random Forests by Random Weights (RF-RW), which is specially designed for nonlinear time series modelling. The key idea is to replace the classic bootstrapping, which distorts temporal dependence when subsampling the time series, with an alternative weighting scheme. Specifically, we assign a set of independent random weights to the series when constructing each individual tree in the forest. Beyond the random forest framework, the idea of random-weighting bootstrap is similar in spirit to that in \cite{Jin2001} and \citet{Zheng2018} for objective perturbation in estimating minimands and quantiles, respectively. It is also closely related to the multiplier (wild) bootstrap approach, which has been extensively applied in the Gaussian approximation for the maximum sums of independent high-dimensional random vectors \citep{Chernozhukov2013, Deng2020}. A primary contribution of this work is the establishment of rigorous theoretical guarantees for RF-RW in time series settings. We derive non-asymptotic concentration inequalities and asymptotic uniform consistency results that hold under both fixed- and high-dimensional feature spaces. To the best of our knowledge, these results extend beyond the scope of most existing theoretical analyses of Random Forests, which largely assume i.i.d. data, fixed-dimensional covariates, and typically address only pointwise consistency. We further demonstrate the practical performance of RF-RW through extensive simulation studies and an empirical application to modelling daily COVID-19 case counts in the United Kingdom. Across these studies, we can see RF-RW consistently and significantly outperforms existing benchmark methods. An \texttt{R} package providing an implementation of the proposed RF-RW algorithm is publicly available at \texttt{\url{https://github.com/shihaozhang73/RFRW}}.


The remainder of this paper is organized as follows. Section \ref{sec:method} introduces the methodology of RF-RW, with particular emphasis on the role of random weights and their effect on the construction of the forests. Section \ref{sec:theory} presents the theoretical results, including the uniform consistency for fixed- and high-dimensional nonlinear time series data. Section \ref{sec:sim} consists of simulation studies, while Section \ref{sec:covid} demonstrates the application of RF-RW to modelling UK COVID-19 daily cases. Additional simulation results and proofs of the theoretical results are deferred to the Supplementary Material.

\section{Random Forests by
Random Weights} \label{sec:method}

Suppose we have temporal dependent data pairs $D_T = \{(X_1,Y_1),\ldots,(X_T,Y_T)\}$, where $X_t\in  \mathbb{R}^p $ are $p$ explanatory features and $Y_t\in \mathbb{R}$ is the response over time $t\in\{1,\ldots,T\}$.  Consider the nonparametric time series regression model:
\begin{equation}
\label{eqn:rfrw_time_series}
    Y_t = f(X_t)+\epsilon_t,
\end{equation}
for some unknown measurable function $f:\mathbb{R}^p \rightarrow \mathbb{R}$ and i.i.d. centred random errors $(\epsilon_1,\ldots,\epsilon_T)^\top$.
As a special case, if we take $X_t=(Y_{t-1},\cdots,Y_{t-p})^\top$, i.e., the feature vector $X_t$ consists of $p$ lagged values of $Y_t$, then model (\ref{eqn:rfrw_time_series}) becomes a nonlinear autoregressive process of order $p$ (NLAR(p) process). 

When RF algorithm is applied directly to time series data, each regression tree within the ensemble is trained on a bootstrapped subsample of the original series. This random resampling and permutation process inherently disrupts the temporal dependence structure of the data, resulting in misspecified modelling. A similar phenomenon can also occur when employing block bootstrap, especially in the presence of long-run dependence. If we remove the bootstrap step to preserve the data dependence, for example, training every tree based on the full series $D_T$ as in \cite{Davis2020}, the inter-tree randomness will decrease, resulting in highly correlated trees that undermine the averaging effect of the forest and weaken the overall model robustness.









Our proposed solution to the above dilemma is conceptually straightforward: instead of removing the bootstrap, we modify the bootstrap procedure while retaining the original series, thereby preserving the temporal dependence structure without compromising inter-tree randomness. Specifically, consider assigning a set of random weights to the original series when growing each tree, where these weights are totally independent across different trees.  In this way, every regression tree $T_{\Lambda}$ is trained on a weighted version of $D_T$ with a set of i.i.d. random weights,  where $\Lambda$ is the partition that recursively divides the feature space into a set of non-overlapping hyper-rectangles (nodes). 
The forests RF-RW, ${H}_{\mathbf{\Lambda}}$, consists of B trees $\{T_{\Lambda_b}\}_1^B$ with $\boldsymbol\Lambda=\{\Lambda_1,\ldots,\Lambda_B\}$.  

We summarize the construction of RF-RW for time series modelling in Algorithm \ref{alg:rfrw} below.

\begin{algorithm}[H]
\caption{Random Forests by Random Weights (RF-RW) }
\label{alg:rfrw}
\begin{algorithmic}[1]
 \Require Training data $D_T$, test data $x$, model parameters $(B, k, m_{\text{try}})$:  
\For{$b=1,\ldots,B$}
\State (i) Sample i.i.d. random weights $(\omega_1,\ldots,\omega_T)^\top$.
\State (ii) Train a tree $T_{\Lambda_b}$ w.r.t  recursive partition $\Lambda_b$ : 
\AlgIndent
\State For each leaf node:
\While{\ leaf node sizes $\geq$ $k$}
\State (a) Sample $\mathcal{M}_{\text{try}}\subseteq\{1,\ldots,p\}$ of size $m_{\text{try}}$.
\State (b) Find splitting feature and relative  position  $\widehat{j}$ and $\widehat{\tau}$, respectively:
\begin{equation}
\label{eqn:rfrw_weight_splitting}
    (\widehat{j},\widehat{\tau})=\argmin_{j\in \mathcal{M}_{\text{try}},\tau}\left[\sum_{X_t\in A_\text{left}}\omega_t(Y_t-\hat{c}_\text{left})^2+\sum_{X_t\in A_\text{right}}\omega_t(Y_t-\hat{c}_\text{right})^2\right].
\end{equation}
\State (c) Split the current leaf node into $A_\text{left}$ and $A_\text{right}$.
\State (d) Set $A_\text{left}$ and $A_\text{right}$ as the new leaf nodes.
\EndWhile
\AlgEndIndent
\EndFor
\Ensure ${H}_{\mathbf{\Lambda}}(x)={B}^{-1}\sum_{b=1}^B{T}_{\Lambda_b}(x)$.

\end{algorithmic}
\end{algorithm}

The algorithm is similar to the vanilla Random Forests, except for the employment of random weights. Here, we sample a set of i.i.d. nonnegative weights with $\mathbb{E}[\omega_t]=1$ for $t=1\dots, T$ during the construction of each tree. In theoretical analysis, we require the weights to be sub-exponential, which allows for a wide range of distributions. Our simulation studies suggest that RF-RW achieves comparable predictive performance across various distributions of random weights.

In the recursive partition $\Lambda$, suppose that the current leaf node is $A=\bigtimes_{i=1}^p[v_i^-,v_i^+]\subseteq \mathbb{R}^p$, where $\bigtimes$ denotes the Cartesian product.  Based on splitting feature $j\in \mathcal{M}_{\text{try}}$ and relative position  $\tau \in (0,1)$,  we split $A$ into left child node $A_\text{left}:=\{x\in A: x^{(j)} \leq v_{j,\tau}\}$ and the right child node $A_\text{right}:=\{x\in A:x^{(j)} > v_{j,\tau}\}$, where $x=(x^{(1)},\ldots,x^{(p)})^\top\in\mathbb{R}^p$ is the observed explanatory feature with $x^{(j)}$ being the $j$-th feature, and $v_{j,\tau}:=\tau v_j^- + (1-\tau)v_j^+$ is the split position. To find the splitting feature and relative position in the partition, we need to calculate  $\hat{c}_\text{left}$ and $\hat{c}_\text{right}$ in the algorithm, which is done by minimizing the following weighted least-squares, 
\begin{equation}
\label{eqn:rfrw_weight_sum_of_square}
\min_{c}\sum_{t=1}^T\omega_tI(X_t\in A)(Y_t-c)^2, \quad 
\end{equation}
where we take $A=A_\text{left}$ and $A=A_\text{right}$, respectively.
Therefore, it it straightforward that $\hat{c}_\text{left}$ and $\hat{c}_\text{right}$ are just standardized weighted average of the response $Y_t$ in $A_\text{left}$ and $A_\text{right}$, respectively, i.e.,
\begin{equation*}
\label{eqn:rfrw_node_constant}
    \hat{c} = \frac{1}{\sum_{t=1}^T\omega_tI(X_t \in A)}\sum_{t=1}^T\omega_tY_tI(X_t \in A).
\end{equation*}

To grow a tree $T_\Lambda$, we recursively partition its nodes until the number of data points in the leaf node falls below a threshold $k$, which is referred as $k$-valid partition ($\Lambda\in \mathcal{V}_k$) in \cite{Wager2015} and \cite{Davis2020}. In this way, $T_\Lambda$ outputs a collection of non-overlapping hyper-rectangles (nodes) that together form a partition of $\mathbb{R}^p$. 

Using $T_\Lambda$ to predict the response corresponding to a new data point with 
explanatory value $x$ is similar to that in the vanilla RF. Let $L_{\Lambda}(x)$ be the 
unique leaf node, which is a hyper-rectangle subspace of $\mathbb{R}^p$, that contains $x$. The prediction is the standardized weighted average of the responses within $L_{\Lambda}$: 
\begin{equation}
\label{eqn:rfrw_tree_reg}
T_{\Lambda}(x)=\frac{1}{\sum_{t=1}^T\omega_tI\big(X_t \in L_{\Lambda}(x)\big)}\sum_{t=1}^T\omega_tY_tI\big(X_t \in L_{\Lambda}(x)\big).
\end{equation}

The construction of RF-RW involves two sources of injected randomness: the random weights assigned to each tree and the random sampling of features at each split. Random forests that omit the bootstrap procedure \citep{Davis2020, Chen2024} can be viewed as a special case of RF-RW with deterministic weights 
 $\omega_t=1$ for all $t\in\{1,\ldots,T\}$. For such models, variability arises only through feature sampling, resulting in reduced tree diversity. In contrast, the introduction of random weights in the construction of RF-RW provides an additional source of randomness that replaces the bootstrap subsampling used in classical random forest algorithms. The variation in random weights perturbs each observation’s contribution both to the selection of $(\widehat{j},\widehat{\tau})$ in the split criterion~(\ref{eqn:rfrw_weight_splitting}) and to the computation of 
$\hat{c}$ in the within-node weighted least-squares estimator~(\ref{eqn:rfrw_weight_sum_of_square}). Independent draws of random weights across trees also induce distinct perturbations in the splitting, thereby promoting additional heterogeneity in tree structures beyond feature sampling and consequently reducing inter-tree correlations.




\section{Theoretical properties}\label{sec:theory}


In this section, we establish convergence properties for the proposed RF-RW algorithm under the nonlinear time series setup. Specifically, we develop a non-asymptotic uniform concentration bound of RF-RW around their optimal counterparts (the so-called population forests, formally defined in Section \ref{sec:theory_concentration}), and further obtain asymptotic uniform consistency of RF-RW for both fixed-dimensional and high-dimensional feature spaces. 

\subsection{Transformation of explanatory features}\label{sec:theory_trans}

While it is common to assume the explanatory features $X_t\in[0,1]^p$ in the literature of random forests \citep{Wager2015,Chen2024,Meinshausen2006,Biau2012,Scornet2015}, this condition is restricted in the time series setting, for example, in the autoregressive model we should consider the $p$-variate lag features $X_t$ taking values in $\mathbb{R}^p$.

To address this issue,  we consider a transformation of $x=(x^{(1)},\ldots,x^{(p)})^\top\in\mathbb{R}^p$ to $[0,1]^p$, based on which we establish most of the theoretical proofs of RF-RW. A common way of the transformation is to apply some cumulative distribution function to each feature, i.e., $F_h(x^{(i)})=\int_{-\infty}^{x^{(i)}} h(y)dy$, where $h:\mathbb{R}\rightarrow[0,\infty)$ is a probability density function, which is strictly positive almost everywhere.

Define the domain of $F_h$ as $\bar{\mathbb{R}}:=\mathbb{R}\cup\{\pm\infty\}$ , such that $F_h(-\infty)=0$ and $F_h(+\infty)=1$. Denote the one-to-one mapping $\iota_h:\bar{\mathbb{R}}^p\rightarrow[0,1]^p$:
$$\iota_h:(x^{(1)},\ldots,x^{(p)})\mapsto\big(F_h(x^{(1)}),\ldots,F_h(x^{(p)})\big).$$
Note that $h$ and $F_h$ can be any density and corresponding cumulative distribution functions, we can choose a suitable $h$ that satisfies Assumption \ref{assum:rfrw_h_density} and impose the signal structure on the transformed feature $z$ in Section \ref{sec:theory_consistency_high} when deriving the theoretical properties of RF-RW.

The transformed feature can be denoted as $z=\iota_h(x) \in [0,1]^p$. During the tree construction procedure, a tree generated by recursively partitioning the feature space of $x\in \mathbb{R}^p$ can equivalently be viewed as being grown on the transformed space of $z\in[0,1]^p$. Since the regression trees are invariant to monotone transformation of the feature spaces, the selections of $(\widehat{j},\widehat{\tau})$ in the split criterion remain unchanged.

\subsection{Assumptions}\label{sec:theory_assump}

We introduce the following assumptions on the data-generating process and the tree construction algorithm, which help to develop the non-asymptotic concentration of the proposed method in Section \ref{sec:theory_concentration}.

\begin{assumption}
\label{assump:rfrw_all}
\edef\assprefix{\theassumption}
\begin{enumerate}[label=(\roman*), ref=\assprefix(\roman*)]
  \item[] 
    \item\label{assum:rfrw_h_density} (Density $h$). Let $h_X:\mathbb{R}^p\rightarrow[0,\infty)$ be the density of $X$. There exists $h$ and some constant $\zeta>1$, such that $\zeta^{-1} \prod^p_{i=1}{h(x_i)}\leq h_X(x)\leq \zeta\prod^p_{i=1}{h(x_i)}$, for $x=(x^{(1)},\ldots,x^{(p)})^\top\in\mathbb{R}^p$. 
  
  \item\label{assum:rfrw_eps} (Sub-gaussian error). Let $(\epsilon_1,\ldots,\epsilon_T)^\top$ be a sequence of i.i.d. sub-gaussian random error with $\mathbb{E}[\epsilon_t]=0$, and satisfies
  \begin{equation}
  \label{eqn:rfrw_subgaussian}
  \npr(|\epsilon_t|>x)\leq \gamma_1 \exp({-\gamma_2x^2}), \quad x>0,
  \end{equation}
  where $\gamma_1>0, \gamma_2>0$ are finite constants.


  \item\label{assum:rfrw_fbound} (Bounded $f$). The function $f$ in (\ref{eqn:rfrw_time_series}) satisfies
  \begin{equation*}
  \label{eqn:rfrw_assum_fbound}
  M:=\sup_{x\in\mathbb{R}^p}|f(x)|<\infty.
  \end{equation*}
  
  \item\label{assum:rfrw_alpha_mixing} (Stationarity and $\alpha$-Mixing). The time series $D_T$ is a strictly stationary and exponentially $\alpha$-mixing sequence, with the $\alpha$-mixing coefficient $$\alpha(t):= \sup_{A\in\sigma(X_1),B\in\sigma(X_{t+1})}\left|\npr(A\cap B)-\npr(A)\npr(B)\right|,$$ satisfying that $\alpha(t) \leq \exp(-C_{\alpha}t)$ for all $t\geq 1$ and some constant $C_{\alpha}\geq 1$. 
  
  \item\label{assum:rfrw_random_weights} (Sub-exponential random weights). Let the random weights $(\omega_1,\ldots,\omega_T)^\top$ in each tree be a set of i.i.d. nonnegative sub-exponential  random variables with $\mathbb{E}[\omega_t]=1$.
  
  \item\label{assum:rfrw_randomness} (Injected randomness). Let $\Theta_b=\{\theta^{\text{rw}}_{b},\theta^{\text{fs}}_{b}\}$ be the i.i.d. random object characterizing the randomness in the $b$-th tree for $b=1,\ldots,B$. Note that $\Theta_b$ consists of the set of random weights $\theta^{\text{rw}}_{b}$, which are independent of the random object $\theta^{\text{fs}}_{b}$ corresponding to the collection of random index sets for each feature splitting.
\end{enumerate}
\end{assumption}

Assumption \ref{assum:rfrw_h_density} is used to obtain a bounded density $h_Z:[0,1]^p\rightarrow[0,\infty)$ of the transformed random variable $Z$ (see Lemma \ref{lemma:rfrw_zeta} in the Supplementary Material for details). Similar results or assumptions can be found in \cite{Davis2020,Shiraishi2024,Chen2024,Wager2015}. Assumption \ref{assum:rfrw_eps} on sub-gaussian error is quite standard in RF literature.  Assumptions~\ref{assum:rfrw_fbound}–\ref{assum:rfrw_alpha_mixing}, which impose boundedness of the mean function and  $\alpha$-mixing data, are commonly adopted in the theoretical analysis of time series. Specifically, Assumption \ref{assum:rfrw_alpha_mixing} generalizes the widely used i.i.d. data assumption in existing works \citep{Wager2015,Wager2018,Athey2019,Scornet2015} to accommodate temporal dependence. Combined with Assumption~\ref{assum:rfrw_random_weights} on sub-exponential random weights, these conditions enable the application of Bernstein-type inequalities to derive exponential tail bounds for sums of dependent random variables. 
 Assumption \ref{assum:rfrw_randomness} is a standard technical condition that allows the separation of different sources of randomness in the theoretical analysis. 


\subsection{Non-asymptotic concentration of RF-RW} \label{sec:theory_concentration}

We obtained a $k$-valid weighted regression tree $T_{\Lambda}$ by taking the weighted average of the responses in the corresponding leaf nodes as in (\ref{eqn:rfrw_tree_reg}). A tree $T_{\Lambda}$ trained on the entire series gives rise to its partition-optimal counterpart $T_\Lambda^*$ (the population tree), which is defined as:
\begin{equation}
\label{eqn:rfrw_rw_tree_optimal}
T_\Lambda^*(x):= \mathbb{E}_\Lambda[ Y| X\in L_{\Lambda}(x)],
\end{equation}
where $\mathbb{E}_\Lambda$ denotes the expectation with respect to the conditional probability measure $\npr_\Lambda:=\npr(\cdot|D_T,\Theta)$ with $\Theta$ being the random object of an individual tree. The optimal $T_\Lambda^*$ is represented as the population-average response inside the leaf.

In Algorithm \ref{alg:rfrw}, the forest $H_{\boldsymbol\Lambda}$ is construed by averaging all the trees $T_{\Lambda}$. The corresponding partition-optimal forests $H_{\boldsymbol\Lambda}^*$ (the population forests) can be obtained by averaging the partition-optimal trees $T_\Lambda^*$. Specifically, let $\mathcal{W}_k:=\{\boldsymbol\Lambda\subseteq\mathcal{V}_k:|\boldsymbol\Lambda|<\infty\}$ contains all finite collections of $k$-valid partitions. Given an element $\boldsymbol\Lambda=\{\Lambda_1,\ldots,\Lambda_B\}$ of $\mathcal{W}_k$, the $k$-valid RF-RW $H_{\boldsymbol\Lambda}$ and its partition-optimal forest $H_{\boldsymbol\Lambda}^*$ are defined as,
\begin{equation*}
\label{eqn:rfrw_forest_define}
H_{\boldsymbol\Lambda}(x) =\frac{1}{B}\sum_{b=1}^BT_{\Lambda_b}(x), \quad H_{\boldsymbol\Lambda}^*(x) = \frac{1}{B}\sum_{b=1}^BT_{\Lambda_b}^*(x), \quad x\in \mathbb{R}^p.
\end{equation*}

Based on Assumption \ref{assump:rfrw_all}, we can show that the fitted forest $H_{\boldsymbol\Lambda}$ constitutes a good approximation to the partition-optimal forest $H_{\boldsymbol\Lambda}^*$ for all sufficiently large $T$.

\begin{theorem}
\label{theo:rfrw}
\edef\assprefix{\thetheorem}
    Under Assumption \ref{assump:rfrw_all}, there exists a positive constant $\beta$ such that the following two statements both hold with probability at least $1-6T^{-1}$ for all sufficiently large $T$:

    \begin{enumerate}[label=(\roman*), ref=\assprefix(\roman*)]
    \item\label{theo:rfrw_1} For fixed-dimensional feature space with  $p<\infty$,
    \begin{equation}
    \label{eqn:concentration_rfrw_low}
    \sup_{(x,\boldsymbol\Lambda)\in \mathbb{R}^p \times \mathcal{W}_k}|H_{\boldsymbol\Lambda}(x)-H_{\boldsymbol\Lambda}^*(x)|\leq \beta\frac{(\log T)^{3/2}}{k^{1/2}}.
    \end{equation}
    \item\label{theo:rfrw_2} For high-dimensional feature space with $\lim \inf p/T>0$, 
    \begin{equation}
    \label{eqn:concentration_rfrw_high}
    \sup_{(x,\boldsymbol\Lambda)\in \mathbb{R}^p \times \mathcal{W}_k}|H_{\boldsymbol\Lambda}(x)-H_{\boldsymbol\Lambda}^*(x)|\leq \beta\frac{(\log T)^{3/2}(\log p)^{1/2}}{k^{1/2}}.
    \end{equation}
\end{enumerate}
\end{theorem}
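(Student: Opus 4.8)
The plan is to pass from the forest to individual trees, then to individual leaves, bound the deviation on a single leaf by Bernstein-type inequalities that disentangle the three sources of randomness (the weights, the regression errors, and the $\alpha$-mixing design), and finally take a union bound over a suitably discretised family of leaves; this last step is the crux and is where the fixed- and high-dimensional regimes part ways.

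Since $H_{\boldsymbol\Lambda}(x)-H_{\boldsymbol\Lambda}^{*}(x)=B^{-1}\sum_{b=1}^{B}\{T_{\Lambda_b}(x)-T_{\Lambda_b}^{*}(x)\}$, the triangle inequality reduces both parts to a uniform bound on $\sup_{x\in\mathbb{R}^{p},\,\Lambda\in\mathcal{V}_k}|T_{\Lambda}(x)-T_{\Lambda}^{*}(x)|$ for a single $k$-valid tree. Since $T_{\Lambda}(x)$ and $T_{\Lambda}^{*}(x)$ depend on $x$ only through the leaf $L_{\Lambda}(x)$, which is a hyper-rectangle containing at least $k$ design points, and since a regression tree is unchanged by the monotone transformation $\iota_h$ so that one may work with the transformed design $Z_t=\iota_h(X_t)$ (which has a bounded density on $[0,1]^p$ by Assumption~\ref{assum:rfrw_h_density}), it is enough to bound $\max_{L}|\widehat{c}_L-c_L^{*}|$ over the class $\mathcal{L}_k$ of axis-aligned hyper-rectangles $L\subseteq[0,1]^{p}$ with $n_L:=\#\{t:Z_t\in L\}\ge k$, where $\widehat{c}_L$ is the weighted leaf average in \eqref{eqn:rfrw_tree_reg} and $c_L^{*}$ the corresponding population average. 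Because $\epsilon_t$ is centred and independent of $X_t$, one has $c_L^{*}=\mathbb{E}[Y\mid X\in L]=\mathbb{E}[f(X)\mid X\in L]$, which is convenient below.

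I would then work on a ``good'' event $\mathcal{G}$ of probability at least $1-O(T^{-1})$, the intersection of: $\max_{t\le T}\omega_t\lesssim\log T$ and $\max_{t\le T}|\epsilon_t|\lesssim(\log T)^{1/2}$ (sub-exponentiality of the weights, Assumption~\ref{assum:rfrw_eps}, and union bounds); and $\sum_{t:Z_t\in L}\omega_t\asymp n_L\asymp T\,\mathbb{P}(Z\in L)$ uniformly over $L\in\mathcal{L}_k$ (a Bernstein bound for the i.i.d.\ sub-exponential weights applied conditionally on the data, and an $\alpha$-mixing empirical-process bound for hyper-rectangles, using the counting argument below). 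On $\mathcal{G}$ the denominators of $\widehat{c}_L$ and $c_L^{*}$ are both of order $n_L/T$, and the numerator summands are of order at most $(\log T)^{3/2}$. For fixed $L$ I would split $\widehat{c}_L-c_L^{*}=(\widehat{c}_L-\bar f_L)+(\bar f_L-c_L^{*})$ with $\bar f_L=n_L^{-1}\sum_{t:Z_t\in L}f(X_t)$. Conditionally on the design, the numerator of $\widehat{c}_L-\bar f_L$ is a sum over a \emph{fixed} index set of independent terms---weights times the bounded coefficients $f(X_t)-\bar f_L$, and weights times sub-Gaussian errors---so a Bernstein-type inequality (the sub-exponential regime for the first block; the $\exp(-c\,x^{2/3})$ tail of a sub-exponential$\times$sub-Gaussian product for the second) gives on $\mathcal{G}$ that $|\widehat{c}_L-\bar f_L|\lesssim (M{+}1)\{(\log T/n_L)^{1/2}+(\log T)^{3/2}/n_L\}$. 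The term $\bar f_L-c_L^{*}$ is a ratio of centred increments of the empirical process of the $\alpha$-mixing sequence $\{Z_t\}$ indexed by hyper-rectangles; a blocking argument with blocks of length of order $\log T$ (so that the mixing remainder is $O(T^{-1})$ by Assumption~\ref{assum:rfrw_alpha_mixing}), together with a Bernstein inequality for bounded functionals of $\alpha$-mixing data, gives $|\bar f_L-c_L^{*}|\lesssim M\,(V_{\mathcal{L}_k}\log T/n_L)^{1/2}$, where $V_{\mathcal{L}_k}$ is a complexity index of the relevant family of rectangles.

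The hard part, and the point where the two regimes separate, is the union bound over $\mathcal{L}_k$: one must control $\log|\mathcal{L}_k|$ (the number of distinct data footprints $\{t:Z_t\in L\}$) and $V_{\mathcal{L}_k}$. For fixed $p$, the footprint of an axis-aligned rectangle is pinned down by the positions of its $2p$ faces among the $T{+}1$ gaps of the order statistics, so $|\mathcal{L}_k|\le(T{+}1)^{2p}$ and $V_{\mathcal{L}_k}=O(p)$; as $p$ is constant, $\log|\mathcal{L}_k|\lesssim\log T$ and combining the per-leaf displays above with $n_L\ge k$ gives a bound $\lesssim(\log T)^{1/2}k^{-1/2}+(\log T)^{3/2}k^{-1}$, which is $\le\beta(\log T)^{3/2}k^{-1/2}$ once $k\gtrsim(\log T)^{2}$ and, for smaller $k$, is covered by the crude estimate $|\widehat{c}_L-c_L^{*}|\le M+\max_{t}|\epsilon_t|\lesssim(\log T)^{3/2}$ valid on $\mathcal{G}$. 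For high-dimensional $p$ with $\liminf p/T>0$ the count $(T{+}1)^{2p}$ is useless; here I would instead exploit the structure of $k$-valid partitions---namely that a leaf of such a partition is reached by $O(\log(T/k))=O(\log T)$ recursive splits and hence has a non-trivial range in at most $O(\log T)$ of the $p$ coordinates, an observation of the type used in \cite{Wager2015}---so that the number of distinct footprints is at most $\binom{p}{O(\log T)}(T{+}1)^{O(\log T)}$, giving $\log|\mathcal{L}_k|\lesssim(\log T)\log p$ and $V_{\mathcal{L}_k}=O((\log T)\log p)$; combining with the per-leaf bounds then yields $\le\beta(\log T)^{3/2}(\log p)^{1/2}k^{-1/2}$ after the same absorption of lower-order terms. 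Finally, summing the $O(T^{-1})$ exceptional probabilities of the (six) events used along the way yields the probability $1-6T^{-1}$ in the statement.
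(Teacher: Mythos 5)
Your proposal follows essentially the same architecture as the paper's proof: the forest bound is reduced to a single-tree bound by the triangle inequality (this reduction \emph{is} the paper's entire proof of Theorem \ref{theo:rfrw}; all the work lives in the tree-level result, Theorem \ref{theo:rfrw_rw_regression_tree}), the tree bound is reduced to a uniform leaf-level bound on $[0,1]^p$ after the $\iota_h$ transform, the three randomness sources are separated by Freedman/Bernstein-type inequalities under exactly the same six high-probability events (max weight, max error, counts, weight sums, $\sum\omega_t\epsilon_t$, $\sum\omega_t f(X_t)$), and the entropy of the rectangle class is $O(\log T)$ versus $O(\log T\log p)$ for precisely the reason you give (the Wager--Walther observation that a $k$-valid leaf is carved by $O(\log T)$ splits). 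The one organizational difference is the discretization: the paper uses a single volume-controlled net $\mathcal{R}_{\varsigma,w}$ of approximating rectangles with $L_-^\varsigma\subseteq L$ and $\mathrm{Leb}(L)\le e^{\varsigma}\mathrm{Leb}(L_-^\varsigma)$, which simultaneously controls the empirical count discrepancy $\#L-\#L_-^\varsigma$ \emph{and} the population discrepancy $\eta(L_-^\varsigma)-\eta(L)$ via the bounded density of $Z$; your scheme instead combines a union bound over data footprints (for the conditional-on-design randomness) with a separate empirical-process/bracketing bound for the design part. That split is workable, but note that a footprint union bound alone does not control $c_L^*=\mathbb{E}[f(X)\mid X\in L]$, which is not constant on footprint equivalence classes, so the continuity-over-rectangles argument you delegate to the complexity index $V_{\mathcal{L}_k}$ must be made explicit (the paper does this in its Lemma bounding $|\eta(L_-^\varsigma)-\eta(L)|\le 2M\zeta^2\varsigma$); with that detail filled in, your route delivers the stated rates and the $1-6T^{-1}$ probability in the same way.
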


Theorem \ref{theo:rfrw} measures the deviation between $H_{\boldsymbol\Lambda}$ and the population forests $H_{\boldsymbol\Lambda}^*$ across $(x,\boldsymbol\Lambda)\in \mathbb{R}^p \times \mathcal{W}_k$. Note that in the high-dimensional case, the rate on the right-hand side of (\ref{eqn:concentration_rfrw_high}) incurs an additional factor $(\log p)^{1/2}$ compared to that of (\ref{eqn:concentration_rfrw_low}) for fixed $p$. As $T\rightarrow\infty$, we can choose a suitable slowly diverging threshold $k$ such that $(\log T)^{3/2}/k^{1/2} \rightarrow 0$ for fixed $p$, or $(\log T)^{3/2}(\log p)^{1/2}/k^{1/2} \rightarrow 0$ for $p\rightarrow\infty$ with $\liminf p/T >0$, which guarantees the uniform convergence in (\ref{eqn:concentration_rfrw_low}) and (\ref{eqn:concentration_rfrw_high}) with high probability, respectively. For example, we can choose $k=\lfloor T^c \rfloor$ for some $c\in(0,1)$ such that $k$ grows slowly with $T$. This condition of diverging leaf node sizes is similarly imposed in \cite{Wager2015} and \cite{Davis2020}. In contrast, the classic implementations of random forests \citep{Breiman2001} suggest a fixed $k$ that does not grow with $T$.

\begin{remark}
     Theorem \ref{theo:rfrw} is an extension of the fixed-dimensional random forests in \cite{Davis2020} to account for high-dimensional data under the time series setup. Under the fixed-dimensional case, we achieve a faster concentration rate $(\log T)^{3/2}/k^{1/2}$ for the proposed RF-RW than the rate $(\log T)^{2}/k^{1/2}$ obtained in \cite{Davis2020}. Under a restrictive setting of i.i.d. data and bounded response, \cite{Wager2015} considers the high-dimensional random forests and obtains a different concentration rate of $(\log T \log p)^{1/2}/k^{1/2}$.
\end{remark}

The non-asymptotic concentration of RF-RW provides the bound between $H_{\boldsymbol\Lambda}$ and $H_{\boldsymbol\Lambda}^*$. Next, we establish the asymptotic consistency for the proposed RF-RW under the cases of fixed-dimensional or high-dimensional feature spaces, by controlling the difference between the population forests $H_{\boldsymbol\Lambda}^*$ and the true function $f$.

\subsection{Consistency of fixed-dimensional RF-RW}\label{sec:theory_consistency_fixed}

To develop the consistency result of RF-RW in fixed dimension, we further pose the following additional requirements in constructing a single tree, which is similar to \cite{Davis2020}.
\begin{definition}[($\xi,k,m$)-valid partitions]
\label{def:rfrw_alpha_k_m_partition}
    Let $\xi \in (0,1/2)$, $k\geq 1$, and $m\geq 2k$. A partition $\Lambda$ is ($\xi,k,m$)-valid, denoted as $\Lambda\in \mathcal{V}_{\xi,k,m}$, if $\Lambda$ is a recursive partition, and the partition scheme follows the requirements below: (1) Every split is placed such that it puts at least a fraction $\xi$ of the data points in the parent node into each child node. (2) For each split, the probability $\pi_j$ that a node is split along the $j$-th feature is bounded from below by some $\rho>0$, for all $j\in\{1,\ldots,p\}$. (3) Node sizes of all leaves in the tree contain at least $k$ data points. (4) Any currently unsplit node will be split if it contains at least $m$ data points.
\end{definition}

Requirement\,(1) is often called the regular condition that imposes the restriction on the split position to prohibit the `edge splits'. Requirement\,(2) ensures that the split can be placed along any features, instead of being confined to only a subset of features. Requirement\,(3) guarantees that $\mathcal{V}_{\xi,k,m} \subseteq \mathcal{V}_{k}$, such that the non-asymptotic concentration result of RF-RW still applies.
Requirement (4) controls the maximum leaf node sizes. The splitting procedure proceeds if $m\geq2k$ and terminates when the new split violates Requirement\,(3). 

Moreover, we impose the following two assumptions on the continuity of the regression function $f$ and the tree depth.

\begin{assumption}
\label{assump:rfrw_fixed_consistency}
\edef\assprefix{\theassumption}
\begin{enumerate}[label=(\roman*), ref=\assprefix(\roman*)]
  \item[] 
  \item\label{assum:rfrw_lip} (Lipschitz continuity). For all $x^{\prime}$, $x^{\prime\prime} \in \mathbb{R}^p$, the function $f$ in (\ref{eqn:rfrw_time_series}) is Lipschitz continuous, satisfying that $|f(x^{\prime})-f(x^{\prime\prime})|\leq C_{f}\Vert x^{\prime}-x^{\prime\prime}\Vert$ for some positive constant $C_{f}$.

  \item\label{assum:rfrw_leave_alpha} (Tree depth). For $T\rightarrow\infty$, $\log (T/m)/\log (\xi^{-1}) \rightarrow \infty$.
\end{enumerate}
\end{assumption}

The Lipschitz continuity assumption \ref{assum:rfrw_lip} is standard in the RF literature \citep{Meinshausen2006,Scornet2015}. Let $d$ denote the depth of the tree. By Requirement (1) and (4) of Definition \ref{def:rfrw_alpha_k_m_partition}, we have $d\geq 1+\log (T/m)/\log (\xi^{-1})$. In conjunction with Assumption \ref{assum:rfrw_leave_alpha}, it implies that $d$ is diverging as $T\rightarrow \infty$, such that the maximum Euclidean distance between any two data points (diameter) of each leaf is restricted to zero, which helps to obtain the consistency of RF-RW for fixed $p$.

\begin{theorem}
\label{theo:rfrw_consistent}
\edef\assprefix{\thetheorem}
Let $\hat{H}_T$ be an $(\xi,k,m)$-forest and suppose that Assumption \ref{assump:rfrw_all}-\ref{assump:rfrw_fixed_consistency} are satisfied. For $T\rightarrow\infty$, fixed $p$, and $(\log T)^{3/2}/k^{1/2} \rightarrow 0$, the following statements hold:
\begin{enumerate}[label=(\roman*), ref=\assprefix(\roman*)]
    \item\label{theo:rfrw_consistent_1} $\hat{H}_T$ is a pointwise consistent estimator of $f$ for any $x\in \mathbb{R}^p$:
    $$\hat{H}_T(x) \overset{P}\longrightarrow f(x).$$ 
    \item\label{theo:rfrw_consistent_2} $\hat{H}_T(X)$ is a consistent estimator of $\mathbb{E}[Y|X]$:
    $$\hat{H}_T(X) \overset{P}\longrightarrow \mathbb{E}[Y|X].$$ 
    \item\label{theo:rfrw_consistent_3} If $\xi\in(0,0.2]$, then $\hat{H}_T$ is a uniformly consistent estimator of $f$:
    $$\sup_{x\in \mathbb{R}^p}\left|\hat{H}_T(x) - f(x)\right|\overset{P}\longrightarrow 0.$$
\end{enumerate}
\end{theorem}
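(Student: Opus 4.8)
Here is how I would attack Theorem~\ref{theo:rfrw_consistent}. All three parts rest on the same decomposition: for every $x\in\mathbb{R}^p$ and every realized forest partition $\boldsymbol\Lambda=\{\Lambda_1,\dots,\Lambda_B\}$,
\[
|\hat H_T(x)-f(x)|\le \underbrace{|\hat H_T(x)-H^*_{\boldsymbol\Lambda}(x)|}_{\text{estimation error}}+\underbrace{|H^*_{\boldsymbol\Lambda}(x)-f(x)|}_{\text{approximation error}} .
\]
The plan is to kill the first term using Theorem~\ref{theo:rfrw} and the second term by showing that the leaf cell containing $x$ shrinks to a point. For the estimation error, note that any $(\xi,k,m)$-valid partition is $k$-valid, so $\boldsymbol\Lambda\in\mathcal{W}_k$ and Theorem~\ref{theo:rfrw}\ref{theo:rfrw_1} applies: on an event of probability at least $1-6T^{-1}$, $\sup_{x,\boldsymbol\Lambda}|\hat H_T(x)-H^*_{\boldsymbol\Lambda}(x)|\le\beta(\log T)^{3/2}/k^{1/2}$, which tends to $0$ by hypothesis. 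Thus all three claims reduce to controlling the approximation error --- uniformly in $x$ in case \ref{theo:rfrw_consistent_3}, pointwise in case \ref{theo:rfrw_consistent_1}, and at a random point in case \ref{theo:rfrw_consistent_2}.

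For the approximation error I would first reduce it to a statement about leaf diameters. Since $H^*_{\boldsymbol\Lambda}$ is an average of population trees, and using that the innovations satisfy $\mathbb{E}[\epsilon_t\mid X_t]=0$ so that $T^*_{\Lambda}(x)=\mathbb{E}[f(X)\mid X\in L_{\Lambda}(x)]$ under \eqref{eqn:rfrw_time_series}, the Lipschitz bound of Assumption~\ref{assum:rfrw_lip} gives
\[
|H^*_{\boldsymbol\Lambda}(x)-f(x)|\le\max_{1\le b\le B}\bigl|T^*_{\Lambda_b}(x)-f(x)\bigr|\le C_f\max_{1\le b\le B}\mathrm{diam}\!\bigl(L_{\Lambda_b}(x)\bigr).
\]
Hence it suffices to show the diameter of the leaf containing $x$ vanishes. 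To this end I would pass to the transformed feature space $[0,1]^p$ of Section~\ref{sec:theory_trans}: trees are invariant under the coordinatewise monotone map $\iota_h$, and by Assumption~\ref{assum:rfrw_h_density} the transformed density $h_Z$ is bounded above and below on $[0,1]^p$ (Lemma~\ref{lemma:rfrw_zeta}). Combining this two-sided density bound with a Glivenko--Cantelli estimate for axis-aligned boxes (VC dimension $2p$) valid for exponentially $\alpha$-mixing sequences (Assumption~\ref{assum:rfrw_alpha_mixing}), Requirement~(1) of Definition~\ref{def:rfrw_alpha_k_m_partition}, which is stated for \emph{empirical} node masses, converts into a geometric statement: with probability tending to $1$, each split shrinks the corresponding side of a cell by a factor at most $1-c$ for a constant $c=c(\xi,\zeta)\in(0,1)$. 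Requirement~(2) ensures that along any root-to-leaf path each of the $p$ coordinates is used for splitting with frequency at least of order $\rho$ (a Hoeffding/optional-stopping argument over the feature-sampling randomness $\theta^{\mathrm{fs}}$), while Requirements~(1),(4) together with Assumption~\ref{assum:rfrw_leave_alpha} force every leaf to lie at depth $d\ge 1+\log(T/m)/\log(\xi^{-1})\to\infty$. Putting these together, every side of every leaf is at most $(1-c)^{\Omega(\rho\log(T/m)/\log\xi^{-1})}\to0$.

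For part \ref{theo:rfrw_consistent_1} only the single path to the cell containing the fixed $x$ is needed, so a single-path Hoeffding bound plus a union bound over the $p$ coordinates suffices, and $\mathrm{diam}(L_\Lambda(x))\overset{P}{\longrightarrow}0$; intersecting with the Theorem~\ref{theo:rfrw} event yields $\hat H_T(x)\overset{P}{\longrightarrow} f(x)$. Part \ref{theo:rfrw_consistent_2} follows by integrating the same statement over a fresh draw $X$ and noting $\mathbb{E}[Y\mid X]=f(X)$. The delicate part is \ref{theo:rfrw_consistent_3}: there one needs the diameter bound \emph{simultaneously} over all $x\in\mathbb{R}^p$, i.e.\ over all leaves of all $(\xi,k,m)$-valid partitions, and a naive union bound over partitions is hopeless since their number is super-polynomial in $T$; moreover the unbounded outer cells in $\mathbb{R}^p$ must be handled through their bounded images under $\iota_h$. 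The resolution is a more careful argument bounding the entropy of the collection of valid cells, and it is precisely here that the sharper restriction $\xi\le 0.2$ is invoked to provide enough slack for the counting to close --- this is the main obstacle of the proof. Once the uniform leaf-diameter bound is in hand, combining it with the already-uniform Theorem~\ref{theo:rfrw} bound on the same high-probability event gives $\sup_{x\in\mathbb{R}^p}|\hat H_T(x)-f(x)|\overset{P}{\longrightarrow}0$.
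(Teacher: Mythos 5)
Your decomposition and your treatment of parts \ref{theo:rfrw_consistent_1} and \ref{theo:rfrw_consistent_2} match the paper's proof: Theorem~\ref{theo:rfrw_1} kills the estimation error, the Lipschitz bound reduces the approximation error to $C_f\,\mathrm{diam}(L_{\Lambda_b}(x))$, and the pointwise diameter collapse follows from the depth lower bound $d\ge 1+\log(T/m)/\log(\xi^{-1})$, Requirement~(2), and the two-sided density bound of Lemma~\ref{lemma:rfrw_zeta} in the transformed space (the paper delegates this to Lemma~\ref{lemma:rfrw_diam}, citing Davis et al.). Part \ref{theo:rfrw_consistent_2} is indeed just the same bound evaluated at a fresh $X$.

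For part \ref{theo:rfrw_consistent_3}, however, you have left the decisive step unproven and you have misidentified the mechanism. The paper does \emph{not} resolve the uniformity by an entropy or cell-counting argument, and $\xi\le 0.2$ is not "slack for the counting to close." Instead, Lemma~\ref{lemma:rfrw_diam_x_uniform} proves the moment bound $\mathbb{E}\bigl[\sup_{x}\mathrm{diam}(L_\Lambda(x))\bigr]=O\bigl((T/m)^{-\gamma}\bigr)$ with $\gamma=\tfrac{\rho}{2}\log((1-\xi)^{-1})/\log(\xi^{-1})$, as follows: the number of splits $s_j$ along coordinate $j$ on any root-to-leaf path stochastically dominates a $\mathrm{Binom}(\bar s,\rho)$ with $\bar s=\log(T/m)/\log(\xi^{-1})$; a Chernoff bound gives $\npr\bigl(s_j\le(1+\delta)\bar s\rho/2\bigr)=O(\exp(-\bar s\rho/8))$; each side length contracts geometrically, $\mathrm{diam}_j(L)\le(1-\xi)^{s_j/(1+\delta)}$, via a Glivenko--Cantelli argument on the per-split contraction ratios combined with Lemma~\ref{lemma:rfrw_zeta}; and the restriction $\xi\le 0.2$ is exactly what makes the Chernoff exponent satisfy $\bar s\rho/8>\gamma\log(T/m)$, so that the bad event (too few splits along some coordinate) has probability of the \emph{same} polynomial order $(T/m)^{-\gamma}$ as the diameter on the good event; since the diameter is bounded by $\sqrt{p}$, splitting the expectation over the two events yields the rate, and the bound is independent of the query point, giving the supremum version and then uniform consistency by Markov's inequality. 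Your write-up flags this step as "the main obstacle" and gestures at an unspecified entropy computation, so as it stands the proof of \ref{theo:rfrw_consistent_3} is incomplete, and the route you propose for closing it is not the one that is needed.
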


The above results are all derived under the convergence in probability. For fixed $p$, the minimum leaf node sizes $k$ can be chosen such that $(\log T)^{3/2}/k^{1/2}\rightarrow0$, which ensures that the concentration bound from Theorem \ref{theo:rfrw_1} holds. Theorem \ref{theo:rfrw_consistent_1} demonstrates the pointwise consistency of the random forest estimator, which represents the most common type of result in the theoretical study of random forests \citep{Wager2018,Athey2019}. Theorem \ref{theo:rfrw_consistent_2} develops the consistency of the estimator evaluated at a new random data $X$, $\hat{H}_T(X)$, compared to the conditional mean $\mathbb{E}[Y|X]$. Theorem \ref{theo:rfrw_consistent_3} establishes the uniform consistency of RF-RW, which extends the pointwise convergence results of existing works \citep{Wager2015,Davis2020}.

\subsection{Consistency of high-dimensional RF-RW} \label{sec:theory_consistency_high}

In high-dimensional feature space, we generally assume that only a few (sparse) signal features contribute to estimating the true function $f$, while numerous noisy features are uninformative \citep{Biau2012}. This suggests the idea of distinguishing between informative and non-informative features.

Let $Q\subseteq\{1,\ldots,p\}$ be the informative feature index set. For the transformed feature $z=(z^{(1)},\ldots,z^{(p)})^\top\in [0,1]^p$, define $[z]_Q=(z^{(i_1)},\ldots,z^{(i_{|Q|})})^\top$, where $i_1<i_2<\ldots<i_{|Q|}$ and $i_j\in Q$ for all $j$. For any $z$, $z^\prime$, and $Q$, denote $u=[z]_Q \otimes [z^\prime]_{Q^C}$ with $u=(u^{(1)},\ldots,u^{(p)})^\top$, such that $u^{(j)}=z^{(j)}$ if $j\in Q$ and $u^{(j)}=z^{\prime^{(j)}}$ if $j\in Q^C$ for $j\in\{1,\ldots,p\}$. Thus, $z=[z]_Q \otimes [z]_{Q^C}$ holds for all $z$. Define the diameter of the leaf $\iota_h(L_{\Lambda}):=\{z=\iota_h(x):x\in L_{\Lambda}\}$ as $\text{diam}\big(\iota_h(L_{\Lambda})\big):=\sup_{z, z^{\prime} \in \iota_h(L_{\Lambda})}||z-z^{\prime}||$.  We impose an assumption on the signal structure under high-dimensional feature space in Assumption \ref{assump:rfrw_high_lip} below, which is similar to \cite{Chen2024}.

\begin{assumption}[Sparse signal and Lipschitz continuity]
\label{assump:rfrw_high_lip}
    There exists an informative feature index set $Q$ with $|Q|=o(\log T)$. For any $z$ and $y$,
    $$h_{Y|Z}(y|z)=h_{Y|[Z]_Q}(y|[z]_Q),$$
    where $h_{Y|Z}$ is the density of $Y$ given $Z$. In addition, for any $z$, $z^{\prime}$, $z^{\prime\prime}$ and $y$, it holds that
    $$\left|h(y,[z]_Q\otimes [z^{\prime\prime}]_{Q^C})-h(y,[z^{\prime}]_Q\otimes [z^{\prime\prime}]_{Q^C})\right|\leq L(y)\left \Vert [z]_Q-[z^{\prime}]_Q\right \Vert,$$
    where $\int_{-\infty}^{\infty}L(y)dy=L_1<\infty$, and $\int_{-\infty}^{\infty}|y|L(y)dy=L_2<\infty$.
\end{assumption}

Under Assumption \ref{assump:rfrw_high_lip}, $Y$ is independent of $[Z]_{Q^C}$ conditional on $[Z]_Q$, and the joint density $(Y,Z)$ is $L(y)$-Lipschitz in $[z]_Q$. In contrast, the signal structure of \cite{Wager2015} for high-dimensional random forests requires that the uninformative features are independent of the response, which is not appropriate for time series data. Assumption \ref{assump:rfrw_high_lip} also extends the signal condition in \cite{Chen2024,Wager2015} to accommodate a growing $|Q|$ as $T\rightarrow\infty$. 

Based on Assumption \ref{assump:rfrw_all} and \ref{assump:rfrw_high_lip}, we can control the difference between the partition-optimal forest $H_{\boldsymbol\Lambda}^*$ and the true function $f$, which is shown in the following theorem.

\begin{theorem}
\label{theo:rfrw_high_h_optimal_fx}
Under Assumption \ref{assump:rfrw_all} and \ref{assump:rfrw_high_lip}, there exists a positive constant $C_{\text{SIG}}$ such that
\begin{equation}
\label{eqn:rfrw_high_h_optimal_fx}
\sup_{(x,\boldsymbol\Lambda)\in \mathbb{R}^p \times \mathcal{W}_k}|H_{\boldsymbol\Lambda}^*(x)-f(x)|\leq C_{\text{SIG}}\sup_{x\in\mathbb{R}^p}\mathbb{E}_\Theta\normalfont{\text{diam}}\left([\iota_h\big(L_{\Lambda}(x)\big)]_Q\right),
\end{equation}
where $[\iota_h(L_{\Lambda})]_Q=\left\{[\iota_h(x)]_Q:x\in L_{\Lambda}\right\}$ and $\mathbb{E}_{\Theta}$ is the expectation with respect to $\Theta$.
\end{theorem}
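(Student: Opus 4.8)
The plan is to bound $|H_{\boldsymbol\Lambda}^*(x)-f(x)|$ pointwise in $x$ and uniformly in $\boldsymbol\Lambda$ by exploiting the averaging structure of the forest together with the conditional-independence and Lipschitz hypotheses of Assumption \ref{assump:rfrw_high_lip}. Since $H_{\boldsymbol\Lambda}^*(x)=B^{-1}\sum_b T_{\Lambda_b}^*(x)$ and $f(x)=B^{-1}\sum_b f(x)$, by the triangle inequality it suffices to control $|T_\Lambda^*(x)-f(x)|$ for a single $k$-valid partition $\Lambda$, uniformly over $\Lambda\in\mathcal{V}_k$ and $x$; the forest bound then follows by taking the same supremum inside the average and finishing with Jensen to pass from $\sup_\Lambda$ to $\sup_x\mathbb{E}_\Theta$ on the right-hand side.

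\noindent\textbf{Reduction to the informative coordinates.} First I would write, using $Y=f(X)+\epsilon$ with $\mathbb{E}[\epsilon\mid X]=0$ and the fact that the partition depends only on $D_T$ and $\Theta$,
\begin{equation*}
T_\Lambda^*(x)=\mathbb{E}_\Lambda[Y\mid X\in L_\Lambda(x)]=\mathbb{E}_\Lambda[f(X)\mid X\in L_\Lambda(x)],
\end{equation*}
so that
\begin{equation*}
|T_\Lambda^*(x)-f(x)|=\big|\mathbb{E}_\Lambda[f(X)-f(x)\mid X\in L_\Lambda(x)]\big|\leq \sup_{x'\in L_\Lambda(x)}|f(x')-f(x)|.
\end{equation*}
The key point is that, under Assumption \ref{assump:rfrw_high_lip}, $f(x)=\mathbb{E}[Y\mid Z=\iota_h(x)]$ depends on $z=\iota_h(x)$ only through $[z]_Q$ (integrate $y\,h_{Y|Z}(y|z)=y\,h_{Y|[Z]_Q}(y|[z]_Q)$), and the Lipschitz bound on the joint density $h(y,[z]_Q\otimes[z']_{Q^C})$ with integrable envelopes $L_1,L_2$ translates, after dividing by the marginal density of $Z$ (bounded below away from $0$ on $[0,1]^p$ via Lemma \ref{lemma:rfrw_zeta} and Assumption \ref{assum:rfrw_h_density}) and using $|f|\leq M$, into a Lipschitz bound $|f(x')-f(x)|\leq C_{\text{SIG}}\,\|[\iota_h(x')]_Q-[\iota_h(x)]_Q\|$ for a constant $C_{\text{SIG}}$ depending only on $\zeta,M,L_1,L_2$. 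Combining, $|T_\Lambda^*(x)-f(x)|\leq C_{\text{SIG}}\,\mathrm{diam}\big([\iota_h(L_\Lambda(x))]_Q\big)$.

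\noindent\textbf{Assembling the forest bound.} Averaging over $b$ gives $|H_{\boldsymbol\Lambda}^*(x)-f(x)|\leq C_{\text{SIG}}\,B^{-1}\sum_b \mathrm{diam}\big([\iota_h(L_{\Lambda_b}(x))]_Q\big)$. Taking $\sup_{(x,\boldsymbol\Lambda)}$ and noting that each $\Lambda_b$ is an independent copy of a $\Theta$-randomized partition, I would bound the empirical average by its worst-case-in-$x$ expectation $\sup_{x}\mathbb{E}_\Theta\,\mathrm{diam}\big([\iota_h(L_\Lambda(x))]_Q\big)$, which is exactly the right-hand side of \eqref{eqn:rfrw_high_h_optimal_fx}. (If one wants a deterministic-in-sample statement rather than an in-expectation one, the same supremum over $\boldsymbol\Lambda\in\mathcal{W}_k$ absorbs the averaging, since each summand is at most its own supremum over valid partitions; either way the stated inequality is obtained.)

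\noindent\textbf{Main obstacle.} The routine triangle-inequality and averaging steps are easy; the real work is the density-ratio argument that converts the Lipschitz bound on the \emph{joint} density into a Lipschitz bound on the regression function $f=\mathbb{E}[Y\mid Z=\cdot]$ \emph{restricted to the informative coordinates}. This requires: (i) invoking Lemma \ref{lemma:rfrw_zeta} and Assumption \ref{assum:rfrw_h_density} to keep the marginal density $h_Z$ uniformly bounded below on $[0,1]^p$ so the quotient is controlled; (ii) handling the integral $\int y\,[\,h(y,[z]_Q\otimes[z']_{Q^C})-h(y,[z']_Q\otimes[z']_{Q^C})\,]\,dy$ via the envelope $L_2$ for the numerator and $\int L(y)dy=L_1$ plus $|f|\leq M$ for the denominator correction term; and (iii) checking that the conditional-independence identity $h_{Y|Z}=h_{Y|[Z]_Q}$ indeed makes $f$ a function of $[z]_Q$ alone, so that varying the non-informative coordinates inside a leaf contributes nothing to $\sup_{x'\in L_\Lambda(x)}|f(x')-f(x)|$ and only $[\,\cdot\,]_Q$-diameter enters. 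Once that lemma is in hand, the theorem is immediate.
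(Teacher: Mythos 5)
Your proposal is correct and follows essentially the same route as the paper: reduce to a single population tree, use the conditional-independence identity $h_{Y|Z}=h_{Y|[Z]_Q}$ together with the Lipschitz joint density and the density lower bound from Lemma \ref{lemma:rfrw_zeta} to show the leaf-conditional mean deviates from $f$ by at most a constant times $\mathrm{diam}\bigl([\iota_h(L_\Lambda(x))]_Q\bigr)$, then average over trees. The only cosmetic difference is that you first establish $[\cdot]_Q$-Lipschitzness of $f$ and bound the leaf average by a supremum over the leaf, whereas the paper inserts a reference point $[z']_Q$ directly into the integral representation of $\mathbb{E}[Y\mid Z\in L(z)]$; the density-ratio computation and the resulting constant (depending on $\zeta, M, L_1, L_2$) are the same.
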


In Theorem \ref{theo:rfrw_high_h_optimal_fx}, we obtain the uniform convergence of the population forests approximating the true regression function that only depends on the informative features. To develop the consistency of high-dimensional RF-RW, we need to ensure the upper bound on the right-hand side of (\ref{eqn:rfrw_high_h_optimal_fx}) goes to $0$, which requires shrinking the maximum leaf diameter of the informative features. To this end, we need the following requirements on the recursive partition in growing individual trees.

\begin{definition}[($\tilde{\xi},k,m$)-valid partitions]
\label{def:rfrw_xi_k_m_partition}
    Let $\tilde{\xi} \in (0,1/2)$, $\xi \in (0,1/2)$, $k\geq 1$, and $m\geq 2k$. A partition $\Lambda$ is $(\tilde{\xi},k,m)$-valid $(\Lambda\in \mathcal{V}_{\tilde{\xi},k,m})$, if $\Lambda$ is a recursive partition, and the partition scheme follows the same requirements as the $(\xi,k,m)$-valid partitions (Definition \ref{def:rfrw_alpha_k_m_partition}) but excludes Requirement (2), which ensures that every splitting feature has a nonzero probability to be split.
\end{definition}

In Definition \ref{def:rfrw_xi_k_m_partition}, the fraction $\tilde{\xi}$ is used to control the split position (see Lemma \ref{lemma:rfrw_high_xi_range} in the Supplementary Material for details). Moreover, we relax Requirement (2) in Definition \ref{def:rfrw_alpha_k_m_partition}, such that the probability of splitting along the uninformative features is not bounded away from zero. When random forests are applied to high-dimensional data, we prefer that the informative features are prioritized in the splitting process, while the uninformative features are rarely chosen. Otherwise, allocating numerous splits to uninformative features results in little or no bias reduction in the averaging effect across trees \citep{Chen2024}. 

Under Assumption \ref{assump:rfrw_all}, \ref{assump:rfrw_high_lip}, and additional assumptions of informative feature splitting and uninformative feature controlling imposed in Assumption \ref{assump:rfrw_high_signal} in the Supplementary Material, we establish the following consistency result of RF-RW for $p\rightarrow\infty$.

\begin{theorem}
\label{theo:rfrw_high_consistent}
Let $\hat{H}_T$ be an $(\tilde{\xi},k,m)$-forest with $\tilde{\xi} \in (0,1/2)$ and suppose that Assumption \ref{assump:rfrw_all}, \ref{assump:rfrw_high_lip}, and \ref{assump:rfrw_high_signal} are satisfied. For $T\rightarrow\infty$, $p\rightarrow\infty$ with $\liminf p/T >0$, $m_{\text{try}}\asymp p$, $|Q|=o(\log T)$, and $(\log T)^{3/2}(\log p)^{1/2}/k^{1/2} \rightarrow 0$, it holds that
$$\sup_{x\in \mathbb{R}^p}\left|\hat{H}_T(x) - f(x)\right|\overset{P}\longrightarrow 0.$$
\end{theorem}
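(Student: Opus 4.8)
\textbf{Proof proposal for Theorem \ref{theo:rfrw_high_consistent}.}

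The plan is to combine the two ingredients already assembled: the non-asymptotic concentration bound of Theorem \ref{theo:rfrw_2}, which controls $\sup_{(x,\boldsymbol\Lambda)}|H_{\boldsymbol\Lambda}(x)-H_{\boldsymbol\Lambda}^*(x)|$ by $\beta (\log T)^{3/2}(\log p)^{1/2}/k^{1/2}$ with probability at least $1-6T^{-1}$, and the bias bound of Theorem \ref{theo:rfrw_high_h_optimal_fx}, which controls $\sup_{(x,\boldsymbol\Lambda)}|H_{\boldsymbol\Lambda}^*(x)-f(x)|$ by $C_{\text{SIG}}\sup_x\mathbb{E}_\Theta\,\text{diam}\big([\iota_h(L_{\Lambda}(x))]_Q\big)$. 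By the triangle inequality,
\begin{equation*}
\sup_{x\in\mathbb{R}^p}\big|\hat{H}_T(x)-f(x)\big|\leq \sup_{(x,\boldsymbol\Lambda)}\big|H_{\boldsymbol\Lambda}(x)-H_{\boldsymbol\Lambda}^*(x)\big|+\sup_{x}\mathbb{E}_\Theta\big|H_{\boldsymbol\Lambda}^*(x)-f(x)\big|,
\end{equation*}
noting that $\hat H_T$ corresponds to a particular $(\tilde\xi,k,m)$-valid random collection $\boldsymbol\Lambda$, so both suprema on the right dominate the relevant quantities for $\hat H_T$. The first term vanishes in probability by Theorem \ref{theo:rfrw_2} together with the hypothesis $(\log T)^{3/2}(\log p)^{1/2}/k^{1/2}\to 0$ (the exceptional probability $6T^{-1}\to 0$). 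So it remains to show the deterministic-in-$T$ bias term tends to zero.

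The core of the argument is therefore to prove that $\sup_x \mathbb{E}_\Theta\,\text{diam}\big([\iota_h(L_{\Lambda}(x))]_Q\big)\to 0$ under the $(\tilde\xi,k,m)$-valid partition scheme and Assumption \ref{assump:rfrw_high_signal}. First I would reduce the diameter of the $Q$-projected leaf to the sum of the side lengths of the leaf hyper-rectangle along the informative coordinates $j\in Q$, so it suffices to control $\mathbb{E}_\Theta\big[\sum_{j\in Q}(\text{side length along }j)\big]$, and since $|Q|=o(\log T)$ it suffices to show that the expected side length along each fixed informative coordinate decays faster than $1/|Q|$, uniformly in $x$. Each time a node is split along coordinate $j$, Requirement (1) of Definition \ref{def:rfrw_alpha_k_m_partition} (regularity with fraction $\tilde\xi$, after translating through $\iota_h$ and using Lemma \ref{lemma:rfrw_high_xi_range}) shrinks the (transformed) side length by a factor bounded away from $1$; so the side length along $j$ is at most a constant times $(1-c)^{N_j}$ where $N_j$ is the number of ancestral splits along $j$. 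The depth of the tree diverges because Requirements (1) and (4) force $d \gtrsim \log(T/m)/\log(\tilde\xi^{-1})\to\infty$ (the high-dimensional analogue of Assumption \ref{assum:rfrw_leave_alpha}, which will be part of Assumption \ref{assump:rfrw_high_signal}), and the informative-feature-splitting condition in Assumption \ref{assump:rfrw_high_signal} guarantees that a positive fraction of these splits land on $Q$ — more precisely that $\mathbb{E}_\Theta[N_j]\to\infty$ at a rate that beats $|Q|$. Combining, $\mathbb{E}_\Theta\big[\sum_{j\in Q}(1-c)^{N_j}\big]\to 0$ uniformly in $x$; a concavity/Jensen step converts the bound on $\mathbb{E}[N_j]$ into the desired bound on $\mathbb{E}[(1-c)^{N_j}]$.

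\textbf{Main obstacle.} The delicate point is the uniformity over $x\in\mathbb{R}^p$ combined with the growing cardinality $|Q|=o(\log T)$: we need the per-coordinate expected side length to decay at rate $o(1/|Q|)$ simultaneously for all starting points $x$, which forces a careful bookkeeping of how many of the $\Theta(\log(T/m))$ splits on any root-to-leaf path are guaranteed (in expectation, via the $m_{\text{try}}\asymp p$ feature-subsampling and the uninformative-feature-control part of Assumption \ref{assump:rfrw_high_signal}) to fall on each informative coordinate. Making the constants in ``a positive fraction of splits hits $Q$'' explicit enough that $|Q|\,(1-c)^{\mathbb{E}[N_j]}\to 0$ — rather than merely $(1-c)^{\mathbb{E}[N_j]}\to 0$ — is where the proof has to work, and it is precisely here that the $|Q|=o(\log T)$ scaling (versus the fixed-$|Q|$ assumption of \cite{Chen2024,Wager2015}) is used. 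The remaining steps — the triangle-inequality split, invoking Theorems \ref{theo:rfrw_2} and \ref{theo:rfrw_high_h_optimal_fx}, and the routine geometric-decay computation — are straightforward.
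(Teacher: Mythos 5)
Your overall architecture matches the paper's: triangle inequality, Theorem \ref{theo:rfrw_2} for the stochastic term, Theorem \ref{theo:rfrw_high_h_optimal_fx} for the bias, and then a reduction to showing that the $Q$-projected leaf diameters shrink. However, the way you propose to carry out that last (and central) step has a genuine flaw. You plan to bound $\mathbb{E}_\Theta\big[(1-c)^{N_j}\big]$ by first showing $\mathbb{E}_\Theta[N_j]\to\infty$ and then using ``a concavity/Jensen step.'' Jensen goes the wrong way here: $x\mapsto(1-c)^x$ is convex, so $\mathbb{E}_\Theta[(1-c)^{N_j}]\geq(1-c)^{\mathbb{E}_\Theta[N_j]}$, and a divergent mean for $N_j$ gives you no upper bound on $\mathbb{E}_\Theta[(1-c)^{N_j}]$ (e.g.\ if $N_j=0$ with small but non-vanishing probability, the expectation stays bounded away from zero no matter how large $\mathbb{E}_\Theta[N_j]$ is). What is actually needed, and what the paper proves in Lemmas \ref{lemma:rfrw_high_split_q1}--\ref{lemma:rfrw_high_split_q_infty}, is a high-probability lower bound on $N_j$ itself: for every fixed $d$, $\npr_\Theta\big(\min_{j\in Q}\mathcal{S}(z,\Theta,s,j)\geq d\big)\to1$ uniformly in $z$. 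The paper then splits $\mathbb{E}_\Theta[\mathrm{diam}]$ on this event and its complement, obtaining a bound of order $(1-\tilde\xi)^d$ for each fixed $d$; no rate comparison against $|Q|$ of the kind you describe is performed or needed.

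You also leave out the mechanism by which splits are forced onto every informative coordinate, which is the technical heart of the argument. It is not simply that ``a positive fraction of splits land on $Q$'': the paper first shows the empirical split criterion converges uniformly to its population version (Lemma \ref{lemma:rfrw_high_emp_l_pop_l}), then runs a pigeonhole/contradiction argument using the lower bound $\Delta(\upsilon)$ on the variance reduction of an under-split informative feature (Lemma \ref{lemma:rfrw_high_delta_lower_bound}) against the upper bound $C(1-\tilde\xi)^{d^*-1}$ for a feature already split $d^*$ times (Lemma \ref{lemma:rfrw_high_I_upper_bound}), together with Assumption \ref{assum:rfrw_high_q1_q} to rule out uninformative features. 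This is what converts ``$Q_1$ is frequently in the candidate set'' (which follows from $m_{\text{try}}\asymp p$ and $|Q_1|=o(\log T)$) into ``every $j\in Q$ is split at least $d$ times with $\npr_\Theta$-probability tending to one.'' Without this step, and with the Jensen direction corrected, your proposal does not close.
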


Theorem \ref{theo:rfrw_high_consistent} establishes the uniform consistency of RF-RW in probability under a high-dimensional regime such that $\lim \inf p/T>0$. The minimum leaf node sizes $k$ are required to grow with the sample sizes $T$ and the dimension $p$ at a slow rate of $(\log T)^{3/2}(\log p)^{1/2}/k^{1/2}$ to ensure that Theorem \ref{theo:rfrw_2} holds. Theorem \ref{theo:rfrw_high_consistent} extends the results of \cite{Wager2015,Biau2012} in the i.i.d case to accommodate time series data, and does not impose the bounded response condition as required in \cite{Wager2015,Chen2024}. The result is also not restricted to a specific tree structure \citep{Wager2015,Biau2012} and can be easily applied to many variants of random forests. Moreover, Theorem \ref{theo:rfrw_high_consistent} allows for an increasing $|Q|=o(\log T)$, which extends the results of \cite{Chen2024,Wager2015} that focus only on the fixed $|Q|$. 

\section{Simulation Studies} \label{sec:sim}

In this section, we measure the predictive performance of the proposed RF-RW in simulated data. Consider the following three data-generating processes of nonlinear time series:
\begin{itemize}
\item[M1:] \label{dgp:1} $Y_t = Y_{t-1}\exp({-0.4Y_{t-1}^2}) + 4Y_{t-3}^2\exp({-0.55Y_{t-3}^2})+ 4Y_{t-5}\exp({-0.3Y_{t-5}^2})  + \epsilon_t.$
    \item[M2:]\label{dgp:2} $Y_t = \sin{(5Y_{t-1})}\exp({-Y_{t-1}^2}) + 2\text{sign}(Y_{t-3})\cos{(\pi Y_{t-3})}  + \epsilon_t.$
    \item[M3:] \label{dgp:3} $Y_t = 6Y_{t-1}\exp({-0.6Y_{t-1}^2})+3\cos(X_{t-6})\exp({-0.5X_{t-6}^2})+4\sin(Y_{t-8}^2)\exp({-0.4Y_{t-8}^2})  + \epsilon_t,$
    with $X_t=2X_{t-2}\exp({-0.7X_{t-2}^2})+3\sin(X_{t-8})\text{sign}\left(\exp({-0.5X_{t-8}^2})\right) + \varepsilon_t$,\end{itemize}
    where $\epsilon_t \overset{i.i.d.}\sim N(0,1)$ and $\varepsilon_t \overset{i.i.d.}\sim N(0,1)$ are random errors. We compute the Root Mean Square Test Error (RMSTE)  over 100 Monte Carlo simulations to evaluate the predictive performance. In each simulation, we generate training data of size $T\in \{500, 1000, 2000\}$ and a subsequent test data set with size $100$. For the proposed RF-RW, the number of feature sampling at splitting $m_\text{try}$ is chosen to be $p/3$  or $p$ for the proposed method, which are denoted as RF-RW-1 and RF-RW-2, respectively. The random weights are drawn from $\text{Exp}(1)$. In Section \ref{sec:appen_sim} of the Supplementary Material, we also consider random weights following from different distributions, such as lognormal and square root of Gamma distribution. Our results indicate that the predictive performance of RF-RW is largely insensitive to the choice of random weights.

In reality, the temporal dependency of the data is typically unknown, which motivates examining the performance of methods when including different numbers of lags in the explanatory features. We consider three scenarios of lag features: fewer lag features than those in the data-generating processes (underspecified); exact lag features (correctly specified); more lag features (overspecified). 
For data-generating processes M1 and M2, methods including 3 and 2 lag features of 
$Y_t$, respectively, are regarded as underspecified, whereas those including 15 lag features are considered overspecified. For the true model M3, methods incorporating 6 lag features of both 
$Y_t$  and 	$X_t$  are treated as underspecified, while those including 15 lag features are viewed as overspecified. Among these three scenarios, including more lag features is a common approach to better incorporate the temporal information. Additional numerical studies on the effect of overspecification can be found in Section \ref{sec:appen_sim} of the Supplementary Material.

We compare the predictive performance of RF-RW to several other benchmark methods, including \texttt{RF}: classic Random Forest, \texttt{tsRF}: time series block bootstrapping Random Forest \citep{Goehry2023},  
\texttt{ExtraTrees}: extremely randomized trees \citep{Geurts2006}, \texttt{SVM}: support vector machine \citep{Chang2011}, and \texttt{LSTM}: long short-term memory neural network \citep{Paszke2019}. For forest-based methods, we fix the number of trees $B=100$, and set the minimum node size $k=\lfloor 0.02(\log T)^4\log(\log T) \rfloor$. For \texttt{tsRF}, we select the block size to be 10 in the moving block bootstrapping. \texttt{SVM} is implemented with the radial basis kernel. For \texttt{LSTM}, we employ a network with parameters optimized from three candidate sets, number of layers: $\{1,\ldots,5\}$, width: $\{16,32,64,128\}$, and dropout rate: $\{0.1,\ldots,0.5\}$. 

\begin{figure}[!ht]
\centering
\includegraphics[width=1\textwidth]{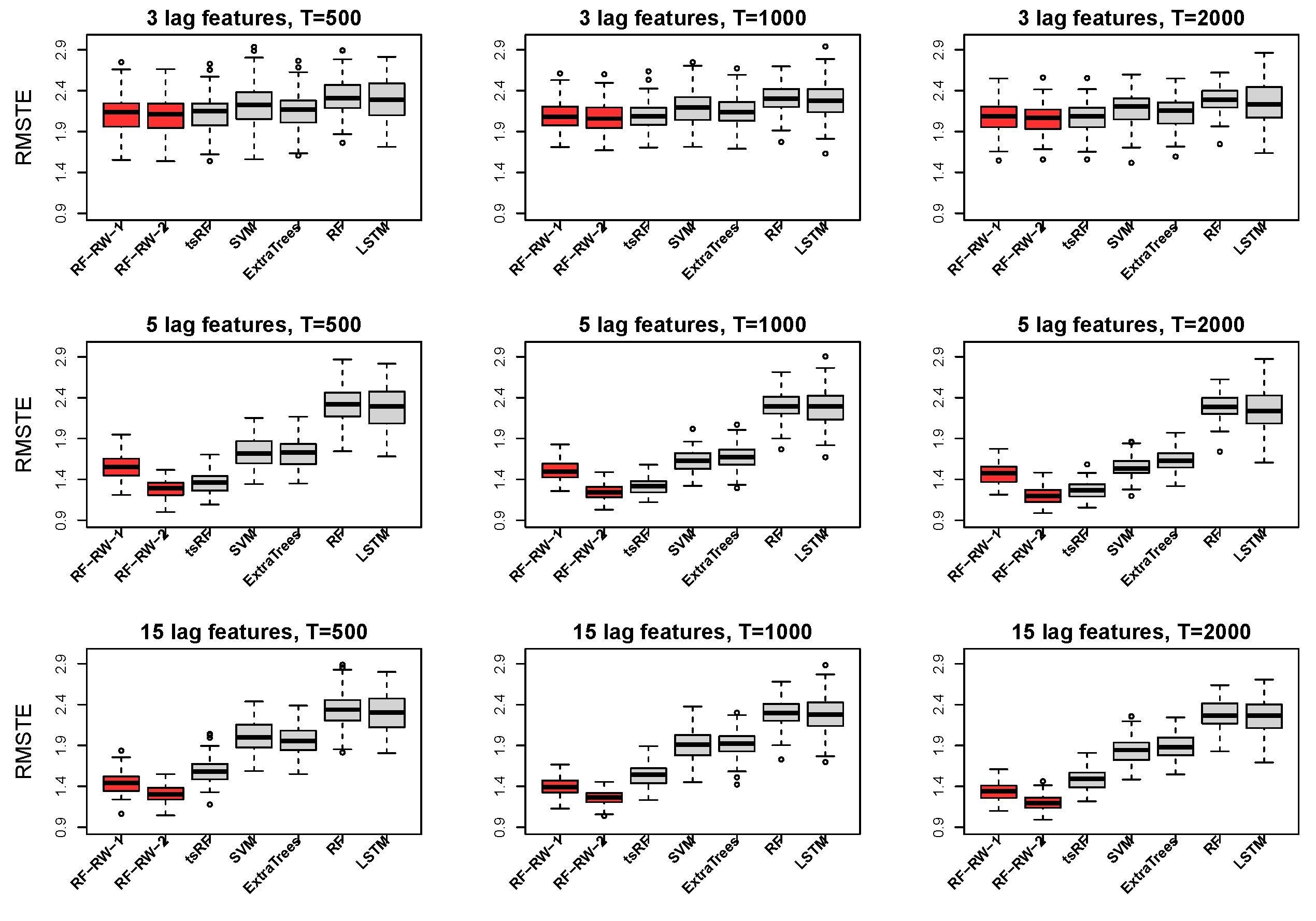}
\caption{RMSTE of  M1, including 3 lag features (underspecified), 5 lag features (correctly specified), 15 lag features (overspecified).}
\label{fig:rfrw_dgp1}
\end{figure}

\begin{figure}[!ht]
\centering
\includegraphics[width=1\textwidth]{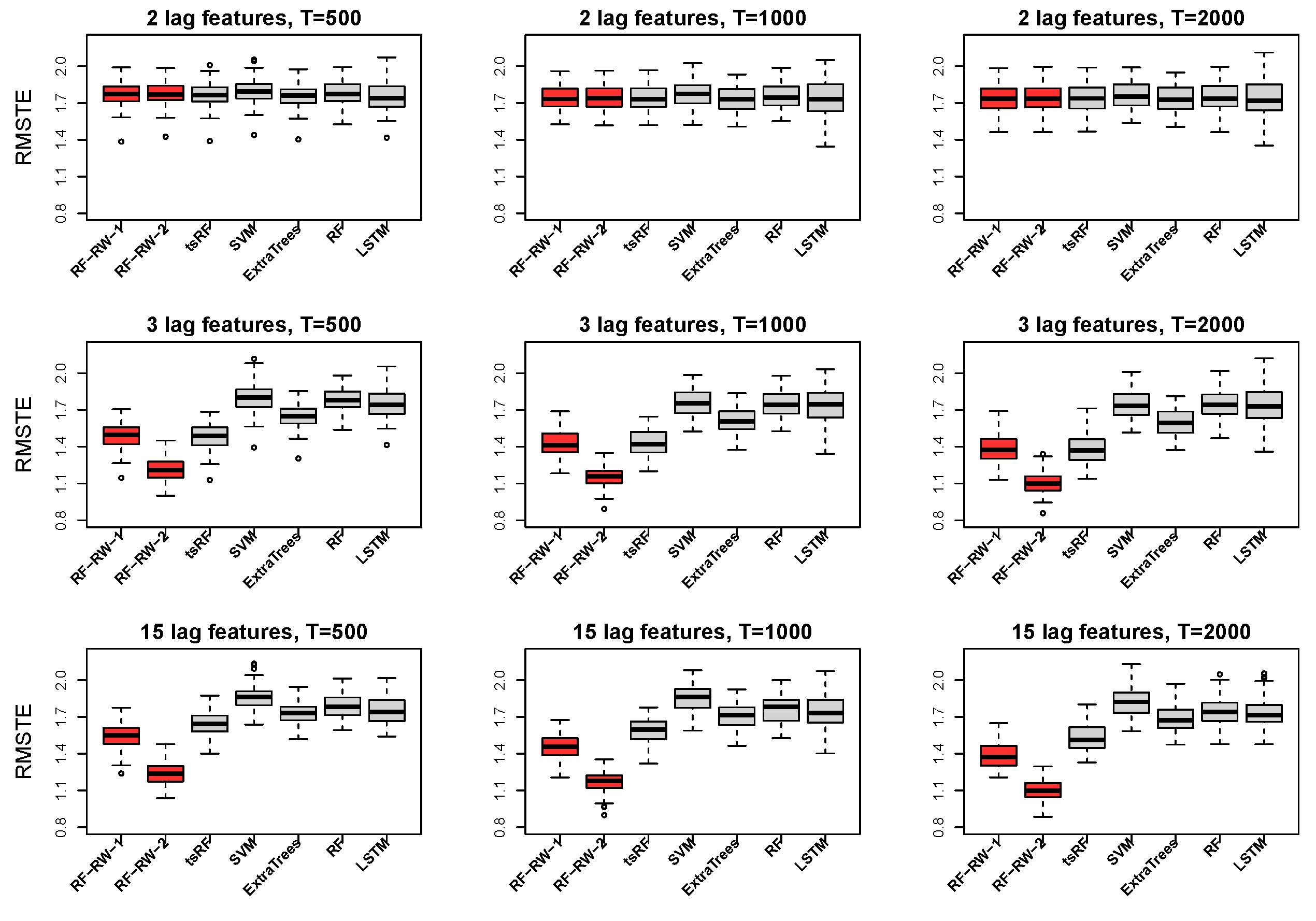}
\caption{RMSTE of M2, including 2 lag features (underspecified), 3 lag features (correctly specified), 15 lag features (overspecified).}
\label{fig:rfrw_dgp2}
\end{figure}

\begin{figure}[!ht]
\centering
\includegraphics[width=1\textwidth]{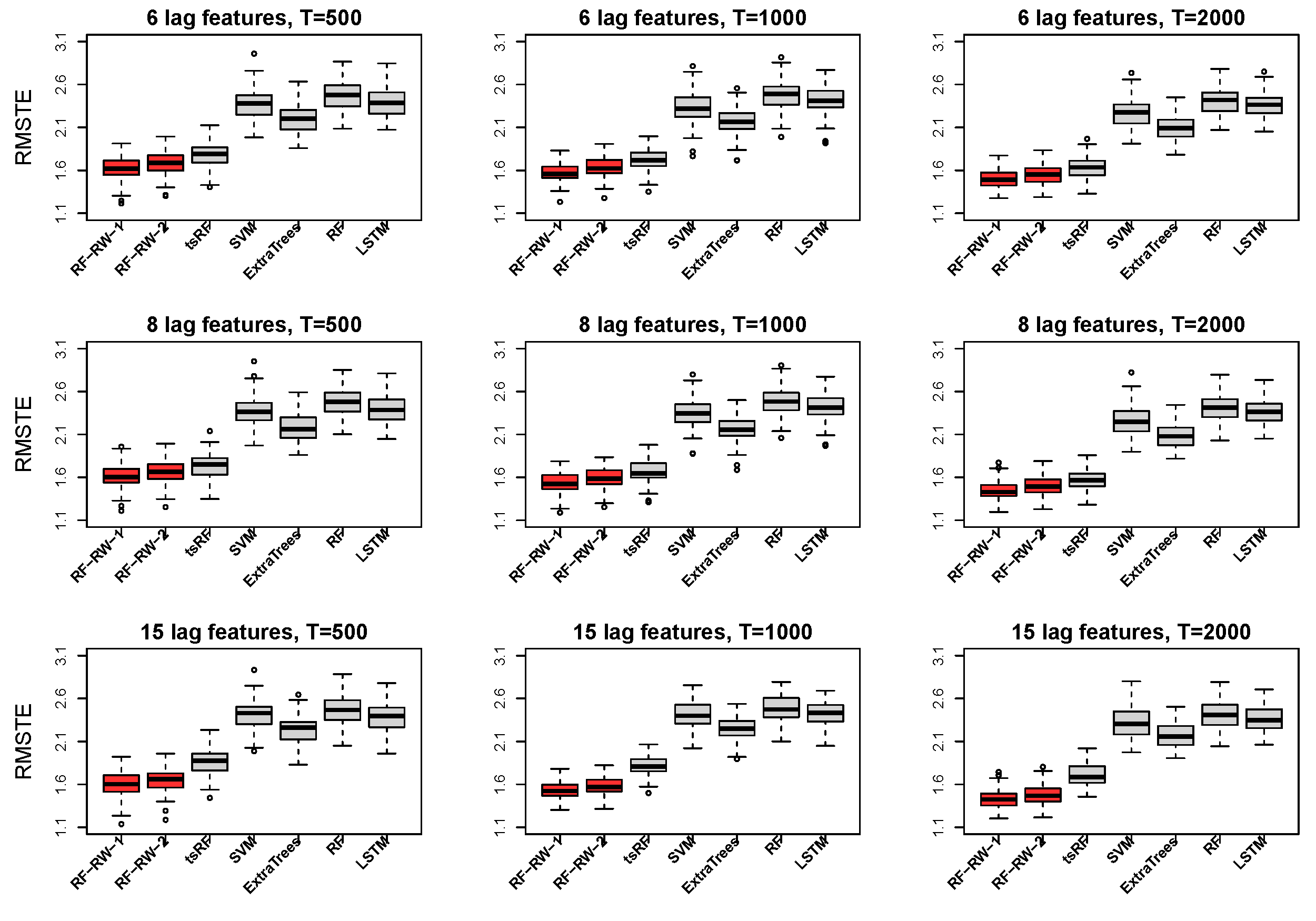}
\caption{RMSTE of M3, including 6 lag features (underspecified), 8 lag features (correctly specified), 15 lag features (overspecified).}
\label{fig:rfrw_dgp3}
\end{figure}

In Figures \ref{fig:rfrw_dgp1}-\ref{fig:rfrw_dgp3}, as the sample sizes $T$ increase, the RMSTE of the proposed method decreases in all settings, which matches our theoretical results that the RF-RW estimator converges to the true function $f$ as $T\ra \infty$. For underspecified scenarios, all the methods cannot adequately account for serial dependence due to missing important features, which yield pervasively higher RMSTE. RF-RW uniformly outperforms other methods under correctly and overspecified scenarios. Taking the predictive performance under the correctly specified scenarios as the baseline, we also find that the RMSTE of RF-RW remains stable as more lag features are included, indicating that the method tolerates feature overspecification and may mitigate the curse of dimensionality. This finding is further supported by additional simulation studies in the Supplementary Material, where RF-RW with 30 and 50 lag features exhibits performance comparable to that under the correctly specified scenarios. 

In contrast, although \texttt{tsRF} exhibits competitive predictive performance under correct specification, its RMSTE deteriorates as more lag features are included, limiting its applicability in high-dimensional settings.

\section{An application to UK COVID-19 daily cases modelling} \label{sec:covid}

We highlight the strength of the proposed RF-RW in predicting the COVID-19 daily cases in the UK. The data spans from 15 Apr 2020 to 15 May 2022, which consists of 761 days and can be accessed at \texttt{\url{https://ukhsa-dashboard.data.gov.uk/}}. Cases recorded from 15 Apr 2020 to 24 Apr 2022 (740 data) are taken as the training data, while the remaining 21 data recorded from 25 Apr 2022 to 15 May 2022 are used as test data.

To stationarize the series, we take the log difference of the original data, and apply a 7-day difference to eliminate the weekly seasonality, which is used as the response. The lag values of the response are used as explanatory features. The predicted cases on the test set, obtained by transforming the predicted response, are used to assess forecasting accuracy.

We evaluate the performance of RF-RW with random weights sampled from $\text{Exp}(1)$, and compare it with \texttt{tsRF}, \texttt{ExtraTrees}, \texttt{RF},  \texttt{RF-GLS} \citep{Saha2023} and the classic Linear Seasonal Autoregressive Integrated Moving Average (\texttt{LS-ARIMA}). For all the forest-based methods, we consider 7, 14 and 21 lags of the response included in the explanatory features. The number of trees is fixed at $B=250$ and the minimum node size is $k=5$. For \texttt{tsRF}, the block size is set to be 5, as suggested by \cite{Goehry2023} for weekly data. For \texttt{LS-ARIMA}, we choose the non-seasonal AR order from $\{7,14,21,28\}$, choose the seasonal AR order from  $\{0,1,2,3,4,5,6\}$, and set the seasonal period to be $7$, while the moving average orders and degrees of differencing are all set to be $0$. 

\begin{table}[!ht]
\centering
\caption{RMSTE of COVID-19 cases of the forest-based methods with optimal LS-ARIMA$(7,0,0)(1,0,0)_7$}
\label{table:covid_data_rmse}
\begin{tabular}{c | c | c  | c } 
 \hline
 \backslashbox[50mm]{Method}{Lags}  & 7 & 14 & 21 \\
 \hline
 RF-RW-1 & 931.78 & 958.92 & \textbf{888.01} \\ 
 RF-RW-2 & \textbf{897.76} & \textbf{950.60} & 917.84 \\ 
 tsRF & 920.31 & 999.59 & 965.38 \\ 
 ExtraTrees & 921.64 & 973.76 & 951.51 \\ 
 RF & 979.30 & 974.58 & 930.86 \\
 RF-GLS & 984.45 & 1061.84 & 1146.93 \\ 
 \hline
 LS-ARIMA$(7,0,0)(1,0,0)_7$ &  \multicolumn{3}{c}{1046.71} \\
 \hline
 
\end{tabular}
\end{table}

Table \ref{table:covid_data_rmse} shows that the proposed RF-RW generally outperforms other methods, with the lowest RMSTE of 888.01. The forest-based methods have better predictive performance than the \texttt{LS-ARIMA} methods in most cases, which is similar to the results of many previous studies. The inferior predictive performances of RF and its variants reveal their intrinsic limitation in adequately capturing the underlying temporal information.

In Figure \ref{fig:covid}, we present the results of RF-RW-1, which includes 21 lags of the response and achieves the minimum RMSTE. The left panel shows the estimated and predicted cases align with the actual cases against the date. The right panel highlights the predicted cases reproducing the overall pattern of the actual cases. These results exhibit the strength of the proposed RF-RW that attains accurate fitting and provides reliable forecasting.

\begin{figure}[H]
\centering
\includegraphics[width=1\textwidth]{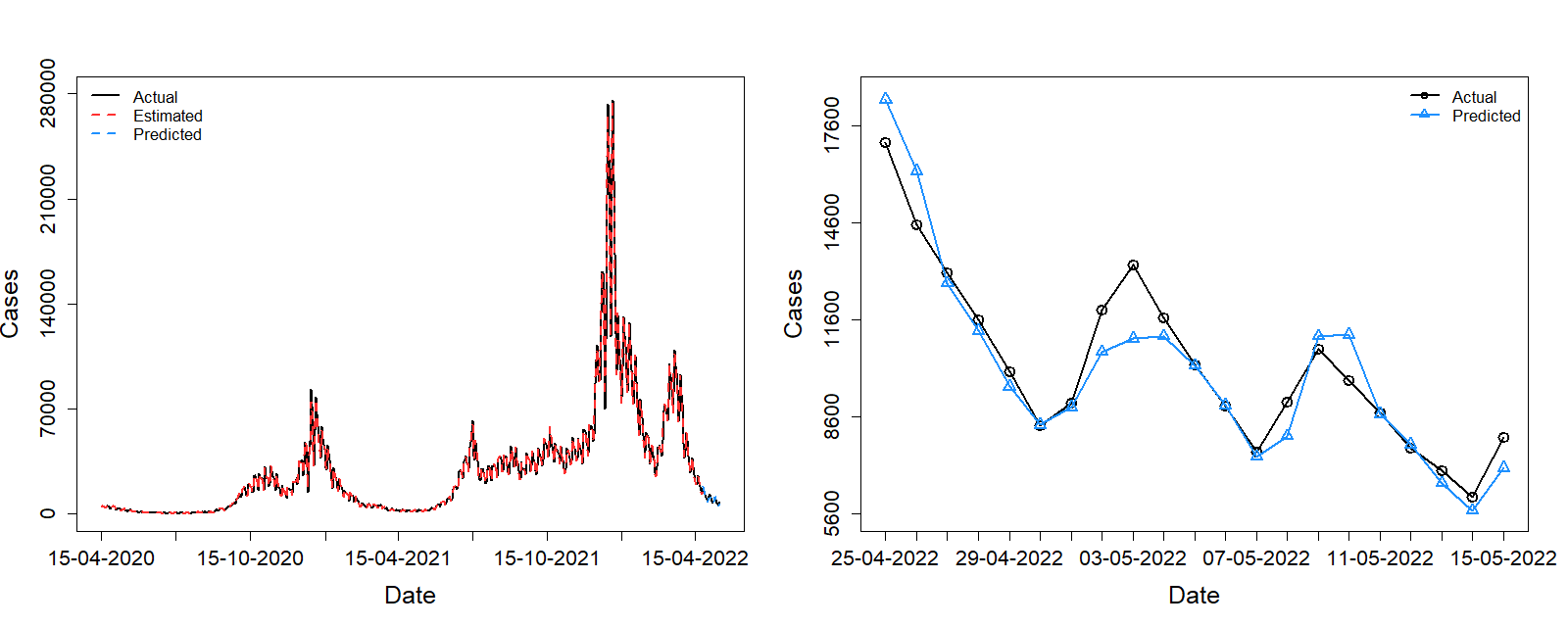}
\caption{RF-RW-1 that includes 21 lags of the response. Left: Actual, Estimated, and Predicted cases against Date. Right: Actual and Predicted cases against Date.}
\label{fig:covid}
\end{figure}

\section{Discussion}\label{sec:discussion}

In this paper, we propose the Random Forests by Random Weights (RF-RW) algorithm, a practically effective random forest framework for modeling nonlinear time series data. It is also of interest to extend the RF-RW methodology to spatio-temporal settings, enabling the model to capture both spatial and temporal dependencies as well as complex nonlinear interactions across space and time. A detailed investigation of such spatio-temporal extensions of RF-RW is left for future work. Moreover, the proposed random weighting scheme is broadly applicable beyond conditional mean prediction, for example, to estimating conditional quantile. From a theoretical perspective, similar to \cite{Wager2015} and \cite{Davis2020}, our results share the limitation that the concentration bounds for the forests are of the same order as those for individual trees, and therefore do not reflect the averaging effect of the ensemble. This theoretical gap remains an open problem in the literature and is left for future research.


\putbib[literature]
\end{bibunit}

\appendix
\newpage
\setcounter{page}{1}
\pagestyle{fancy}
\fancyhf{}

\setcounter{figure}{0}
\makeatletter
\renewcommand{\thefigure}{A\@arabic\c@figure}
\makeatother

\setcounter{table}{0}
 \makeatletter
\renewcommand{\thetable}{A\@arabic\c@table}
\makeatother
\rhead{\bfseries\thepage}
\setcounter{equation}{0}
\noindent


\section*{\centering Supplementary Material to ``A Simple and Effective Random Forest Modelling for Nonlinear Time Series Data"}\label{sec:supp}
\begin{center}

\large{Shihao Zhang, Zudi Lu, and Chao Zheng}\\



\end{center}

\bigskip

The organization of this supplementary material is as follows. Section \ref{sec:appen_transformation_supporting_lemma} introduces the density properties of the transformed feature and several auxiliary lemmas. Section \ref{sec:appen_proof_concentration} provides the proofs of the concentration bounds for the weighted regression trees and for RF-RW around their population counterparts. The consistency results for RF-RW in fixed and high dimensions are proved in Sections \ref{sec:appen_proof_fixed_consistency} and \ref{sec:appen_proof_high_consistency}, respectively. Additional simulation results are presented in Section \ref{sec:appen_sim}.

Throughout the supplementary material, we denote constants $C$ and $c$, whose values do not depend on $T$ and may vary from place to place. For two sequence $(a_t)_{t\geq 1}$ and $(b_t)_{t\geq 1}$, we write $a_t \lesssim b_t$ if there exists a constant $c\geq 1$ such that $a_t \leq cb_t$ for all $t$. The notation $a_t \asymp b_t$ represents that $a_t \lesssim b_t$ as well as $b_t \lesssim a_t$. We define $a_t=O(b_t)$ if $\lim \sup_{t\rightarrow\infty} a_t/b_t\leq c$.


\begin{bibunit}



\section{Density of transformed feature and supporting lemmas}\label{sec:appen_transformation_supporting_lemma}

\subsection{Bounded density $h_Z$ of the transformed feature $Z$}

Based on Assumption \ref{assum:rfrw_h_density}, we can show that the density $h_Z$ of $Z$ is lower and upper bounded, which is shown in the lemma below.

\begin{lemma}
\label{lemma:rfrw_zeta}
Under Assumption \ref{assum:rfrw_h_density}, there exists a constant $\zeta>1$, such that the density $h_Z$ satisfies
\begin{equation}
\label{eqn:rfrw_lemma_zeta}
\zeta^{-1}\leq h_Z(z) \leq \zeta,
\end{equation}
for almost all $z=(z^{(1)},\ldots,z^{(p)})^\top\in[0,1]^p$.
\end{lemma}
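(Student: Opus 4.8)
The plan is to compute the density $h_Z$ of the transformed vector $Z = \iota_h(X)$ explicitly via the change-of-variables formula, and then use the two-sided bound on $h_X$ from Assumption \ref{assum:rfrw_h_density} together with the factorization built into the map $\iota_h$. First I would write the componentwise transformation $Z^{(i)} = F_h(X^{(i)})$, where $F_h$ is the CDF of the density $h$. Since $h$ is strictly positive almost everywhere on $\mathbb{R}$, the map $F_h:\mathbb{R}\to(0,1)$ is a strictly increasing bijection, hence a valid (differentiable a.e.) change of variables with inverse $F_h^{-1}$, and the Jacobian of $\iota_h$ is the diagonal matrix with entries $h(x^{(i)})$. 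Therefore, for almost every $z=(z^{(1)},\ldots,z^{(p)})^\top\in[0,1]^p$,
\begin{equation*}
h_Z(z) = h_X\big(\iota_h^{-1}(z)\big)\cdot \prod_{i=1}^p \frac{1}{h\big(F_h^{-1}(z^{(i)})\big)},
\end{equation*}
where $\iota_h^{-1}(z) = \big(F_h^{-1}(z^{(1)}),\ldots,F_h^{-1}(z^{(p)})\big)$.

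Next I would substitute the bound from Assumption \ref{assum:rfrw_h_density}. Setting $x = \iota_h^{-1}(z)$, i.e. $x^{(i)} = F_h^{-1}(z^{(i)})$, that assumption gives
\begin{equation*}
\zeta^{-1}\prod_{i=1}^p h\big(F_h^{-1}(z^{(i)})\big) \;\le\; h_X(x) \;\le\; \zeta\prod_{i=1}^p h\big(F_h^{-1}(z^{(i)})\big).
\end{equation*}
Dividing through by $\prod_{i=1}^p h\big(F_h^{-1}(z^{(i)})\big)$, which is strictly positive for a.e. $z$ because $h>0$ a.e., and comparing with the expression for $h_Z(z)$ above, the product terms cancel exactly and we are left with $\zeta^{-1}\le h_Z(z)\le \zeta$ for almost all $z\in[0,1]^p$, which is the claim.

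The only genuinely delicate point — and the step I would be most careful about — is the measure-theoretic justification of the change of variables when $h$ is merely a strictly-positive-a.e. density rather than continuous and bounded away from zero: one must check that $F_h$ is absolutely continuous with a.e. derivative $h$, that $F_h^{-1}$ is well-defined and measurable, and that the null set where $h=0$ does not cause trouble (it has Lebesgue measure zero, and its image under $F_h$ also has measure zero since $F_h$ maps it to a countable union of points or a set of measure zero by monotonicity). Once this is handled, the identity for $h_Z$ holds for a.e. $z$, and the rest is the immediate algebraic cancellation above. I would also remark that the constant $\zeta$ in the conclusion is literally the same $\zeta$ from Assumption \ref{assum:rfrw_h_density}, so no new constant is introduced.
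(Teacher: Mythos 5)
Your proposal is correct and follows essentially the same route as the paper's proof: both compute $h_Z$ via the change-of-variables formula with the diagonal Jacobian $\prod_{i=1}^p h\big(F_h^{-1}(z^{(i)})\big)^{-1}$ and then cancel this product against the bound in Assumption \ref{assum:rfrw_h_density}. Your additional measure-theoretic care about the a.e.\ differentiability of $F_h$ is a welcome refinement but does not change the argument.
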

\begin{proof}
By applying the mapping $\iota_h$, the density $h_Z$ becomes
\begin{equation}
\label{eqn:rfrw_lemma_zeta_1}
h_Z(z)=h_X\big(\iota_h^{-1}(z)\big)\left|\det\left(J_{\iota_h^{-1}}(z)\right)\right|,
\end{equation}
where $\det\left(J_{\iota_h^{-1}}(z)\right)$ is the determinant of the $p\times p$ Jacobian matrix $J_{\iota_h^{-1}}$ that is evaluated at $z$. Since the mapping $\iota_h$ is one-to-one monotone, and $x_i=F_h^{-1}(z_i)$, the Jacobian matrix $J_{\iota_h^{-1}}$ is diagonal:
\begin{equation*}
  \left[J_{\iota_h^{-1}}(z)\right]_{i,j} =
    \begin{cases}
      {\partial F_h^{-1}(z_i)}/{\partial z_j}, & \text{if}\ i=j.\\
      0, & \text{if}\ i\neq j. \nonumber
    \end{cases}       
\end{equation*}
Applying the inverse function theorem, we have 
$$\frac{\partial F_h^{-1}(z_i)}{\partial z_i}=\frac{1}{h\big(F_h^{-1}(z_i)\big)}.$$
It follows that 
$$\det\left(J_{\iota_h^{-1}}(z)\right)=\frac{1}{\prod^p_{i=1}{h\big(F_h^{-1}(z_i)\big)}},$$ 
and (\ref{eqn:rfrw_lemma_zeta_1}) becomes
$$h_Z(z)=\frac{h_X\big(\iota_h^{-1}(z)\big)}{\prod^p_{i=1}{h\big(F_h^{-1}(z_i)\big)}}.$$
By Assumption \ref{assum:rfrw_h_density}, we have
$$\zeta^{-1} \leq \frac{h_X(x)}{\prod^p_{i=1}{h(x_i)}}\leq \zeta,$$
for some constant $\zeta>1$, which finishes the proof.

\end{proof}

By applying the mapping $\iota_h$ with $h_Z$ satisfying (\ref{eqn:rfrw_lemma_zeta}), a node $R=\bigtimes_{i=1}^p[r_i^-,r_i^+]\subseteq [0,1]^p$ can be viewed as the image of a node $A=\bigtimes_{i=1}^p[v_i^-,v_i^+]\subseteq \mathbb{R}^p$ under $\iota_h$, that is, $R=\iota_h(A)=\bigtimes_{i=1}^p[F_h(v_i^-),F_h(v_i^+)]$. For the node $R$ and the transformed feature $Z_t$, we introduce the following notations. Let $\#R:=|t=\{1,\ldots,T\}:z_t\in R|$ be the number of observations $\{z_t\}_{t=1}^T$ in $R$, $\mu(R):=\npr(Z\in R)$ be the probability that $Z$ belongs to $R$, and $\eta(R):=\mathbb{E}[Y|Z\in R]$ be the population-average response inside $R$.

\subsection{Supporting lemmas}

\begin{lemma}[\citealt{Merlevède2009}, Theorem 2]
\label{lemma:rfrw_bernstein}
(Bernstein inequality) Let $(g_j)_{j\geq 1}$ be a sequence of centred real-valued random variables, and $K_T=\sum_{j=1}^T g_j$. Suppose the sequence is $\alpha$-mixing with coefficient $\alpha(t) \leq -2ct$ for all $t\geq1$ and some positive constant $c$, and $\sup_{t\geq 1}||g_t||_{\infty}\leq C$ for some  positive constant $C$. Then there exists constant $C_1$ depending only on $c$,  such that for all $x>0$ 
$$\npr(|K_T|\geq x)\leq \exp\left(-\frac{C_1x^2}{v^2T+C^2+xC(\log T)^2}\right),$$
where
$$v^2=\sup_{t>0}\left(\var(g_t)+2\sum_{j>t}|\ncov(g_t,g_j)|\right).$$
\end{lemma}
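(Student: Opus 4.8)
This is the Bernstein-type inequality of \citet{Merlevède2009} for exponentially $\alpha$-mixing sequences; I would reprove it through the classical Bernstein blocking scheme reinforced by a coupling step. After rescaling we may take $C = 1$, so $|g_t| \le 1$ for every $t$. Fix $x > 0$ and let $q = \lceil A \log T \rceil$, where the constant $A = A(c)$ is chosen large enough that $\alpha(q) \le e^{-2cq} \le T^{-3}$. Partition $\{1, \dots, T\}$ into consecutive blocks alternating between ``big'' blocks $I_1, \dots, I_N$ and ``small'' blocks $J_1, \dots, J_N$, all of common length $q$, with $N = \lfloor T/(2q) \rfloor$ and a leftover segment of length at most $2q$ at the end. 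Writing $U_i = \sum_{t \in I_i} g_t$ and $V_i = \sum_{t \in J_i} g_t$, we have $K_T = \sum_{i=1}^N U_i + \sum_{i=1}^N V_i + R$ with $|R| \le 2q$, so it suffices to bound $\npr(|\sum_i U_i| \ge x/2)$ and $\npr(|\sum_i V_i| \ge x/2)$ separately.

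Consider the big-block sum; the small-block sum is treated identically by symmetry. The blocks $I_1, \dots, I_N$ are mutually separated by gaps of length $q$, so a standard coupling/approximation argument produces \emph{independent} random variables $\widetilde U_1, \dots, \widetilde U_N$ with $\widetilde U_i \overset{d}{=} U_i$ and total replacement cost $\sum_i \npr(\widetilde U_i \ne U_i) \le N\,\alpha(q) \le T^{-2}$. Each $\widetilde U_i$ is centred with $|\widetilde U_i| \le q \lesssim \log T$, and by the definition of $v^2$,
\begin{equation*}
\var(\widetilde U_i) = \var\Bigl(\sum_{t \in I_i} g_t\Bigr) \le \sum_{t \in I_i}\Bigl(\var(g_t) + 2\sum_{j > 0}|\ncov(g_t, g_{t+j})|\Bigr) \le q\, v^2,
\end{equation*}
so $\sum_i \var(\widetilde U_i) \le N q\, v^2 \le v^2 T$. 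The classical Bernstein inequality for bounded independent summands then gives $\npr\bigl(|\sum_i \widetilde U_i| \ge x/2\bigr) \lesssim \exp\bigl(-c_1 x^2/(v^2 T + x \log T)\bigr)$, and adding back the $O(T^{-2})$ coupling error bounds $\npr(|\sum_i U_i| \ge x/2)$.

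Combining the big- and small-block estimates produces a bound of the asserted exponential shape, once the additive $O(T^{-2})$ coupling errors are folded in: in the regime where the claimed right-hand side is bounded below by a fixed positive constant — which, thanks to the $C^2$ term in its denominator, covers all $x$ below a slowly growing threshold — the inequality is trivial, while in the complementary large-$x$ regime the exponential term genuinely dominates $T^{-2}$. I expect the main difficulty to lie in the coupling step and the concluding bookkeeping: one must keep the gap length $q$ large enough for $\alpha(q)$ to be negligible yet small enough ($q \asymp \log T$) that the per-block range entering the independent Bernstein bound stays logarithmic, confirm that the blockwise variances telescope into $v^2 T$ uniformly over the partition, and carry all constants through to the precise denominator $v^2 T + C^2 + x C (\log T)^2$. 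The extra $\log T$ on the linear term, compared with the cruder $x\log T$ above, reflects a more conservative handling of the blocks — for instance a second round of blocking applied to the small blocks, or the recursive bound on $\log \mathbb{E}[\exp(\lambda K_T)]$ used in \citet{Merlevède2009} — and is immaterial for how the inequality is applied in this paper.
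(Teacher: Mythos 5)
First, note that the paper does not prove this lemma at all: it is quoted verbatim, with citation, as Theorem 2 of Merlev\`ede, Peligrad and Rio (2009), and is used as an off-the-shelf tool. So there is no internal proof to compare against; what can be assessed is whether your sketch would stand as an independent proof, and as written it would not. The first genuine gap is the coupling step. A coupling giving independent copies $\widetilde U_i$ with $\sum_i \mathbb{P}(\widetilde U_i \neq U_i)\le N\alpha(q)$ is Berbee's lemma, which requires $\beta$-mixing (absolute regularity); under strong ($\alpha$-) mixing no such exact-agreement coupling exists in general — one only has Bradley-type couplings controlling $\mathbb{P}(|\widetilde U_i-U_i|>\epsilon)$ with a loss in the mixing coefficient, or Ibragimov-type covariance inequalities used to factorize the Laplace transform with errors of the form $N\alpha(q)e^{\lambda\cdot(\text{block range})}$. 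Your argument as stated silently upgrades $\alpha$-mixing to $\beta$-mixing.

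The second gap is the concluding bookkeeping, and it cannot be repaired within a single-scale blocking scheme. With $q\asymp\log T$ you are left with an additive error of order $T\alpha(q)\asymp T^{-2}$, and you claim this is dominated by the exponential term in the large-$x$ regime; the domination in fact goes the other way. For $x$ of order $T$ (up to $x\le CT$, where the probability is still positive in general), the target bound $\exp\bigl(-C_1x^2/(v^2T+C^2+xC(\log T)^2)\bigr)$ is of order $\exp(-cT/(\log T)^2)$, which is far smaller than $T^{-2}$, so the coupling error swamps the bound exactly where you need it; choosing $q$ larger to shrink $\alpha(q)$ inflates the per-block range and destroys the Bernstein denominator. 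The small-$x$ regime is not "trivial" either, since the stated right-hand side has no multiplicative prefactor and is strictly below $1$. This is precisely why Merlev\`ede, Peligrad and Rio do not use one round of blocking plus coupling: they bound $\log\mathbb{E}\exp(\lambda K_T)$ through a recursive, Cantor-like construction of blocks at geometrically decreasing scales, and the $(\log T)^2$ factor in the denominator is the footprint of that recursion, not a "more conservative" version of your estimate. A correct self-contained proof would have to reproduce an argument of that type (or restrict to $\beta$-mixing and accept a prefactor and a weaker denominator), so the honest course here is what the paper does: invoke the cited theorem.
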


\medskip

\begin{lemma}[\citealt{de1999}, Theorem 8.2.2]
\label{lemma:rfrw_martingale}
(Freedman type inequality) Let $\{N_j,\mathcal{F}_j\} $ be a martingale difference sequence with $P_T
=\sum_{j=1}^T N_j$ and the filtration $\mathcal{F}_j=\sigma(Y_i:i\leq j)$. Suppose that $\mathbb{E}(N_t|\mathcal{F}_{t-1})=0$, $\mathbb{E}(N_t^2|\mathcal{F}_{t-1})=\sigma_t^2<\infty$, and $V_T^2=\sum_{t=1}^T \sigma_t^2$. Further assume that there exists a positive constant $c$ such that, almost surely, $\mathbb{E}(|N_t|^m|\mathcal{F}_{t-1})\leq (m!/2) \sigma_t^2c^{m-2}$ for all $m>2$, or $\npr(|N_t|\leq c |\mathcal{F}_{t-1})=1$. Then, for all $x,y>0$,
\begin{eqnarray}
    \npr\left(P_T>x, V_T^2\leq y\ \text{for some}\ T\right) &\leq& \exp\left(-\frac{x^2}{y\{1+(1+2cx/y)^{1/2}\}+cx}\right) \nonumber \\ 
    &\leq& \exp\left(-\frac{x^2}{2(y+cx)}\right).\nonumber
\end{eqnarray}

\end{lemma}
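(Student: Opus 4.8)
The plan is to run the classical exponential-supermartingale (Bernstein--Freedman) argument. First I would note that the two alternative hypotheses are really the same one: if $\npr(|N_t|\le c\mid\mathcal F_{t-1})=1$, then for every integer $m>2$, almost surely $\mathbb E(|N_t|^m\mid\mathcal F_{t-1})=\mathbb E(|N_t|^{m-2}N_t^2\mid\mathcal F_{t-1})\le c^{m-2}\sigma_t^2\le (m!/2)\,\sigma_t^2 c^{m-2}$, using $m!/2\ge1$. Hence the bounded-increment case is subsumed by the factorial-moment case, and it suffices to argue under the latter.

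Next I would control the conditional moment generating function of each increment. For $0<\lambda<1/c$, using $\mathbb E(N_t\mid\mathcal F_{t-1})=0$, the bound $N_t^m\le|N_t|^m$, the moment hypothesis, and the geometric sum $\sum_{m\ge2}(\lambda c)^{m-2}=(1-\lambda c)^{-1}$, I would obtain
\begin{equation*}
\mathbb E\big(e^{\lambda N_t}\mid\mathcal F_{t-1}\big)\le 1+\frac{\lambda^2\sigma_t^2}{2(1-\lambda c)}\le\exp\Big(\frac{\lambda^2\sigma_t^2}{2(1-\lambda c)}\Big),
\end{equation*}
the term-by-term expansion being legitimate since $\lambda c<1$ makes the series absolutely summable. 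Setting $\psi(\lambda):=\lambda^2/\{2(1-\lambda c)\}$, the process $Z_T:=\exp\{\lambda P_T-\psi(\lambda)V_T^2\}$ with $Z_0:=1$ is then a nonnegative $(\mathcal F_T)$-supermartingale, because $\mathbb E(Z_T\mid\mathcal F_{T-1})=Z_{T-1}\,e^{-\psi(\lambda)\sigma_T^2}\,\mathbb E(e^{\lambda N_T}\mid\mathcal F_{T-1})\le Z_{T-1}$.

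The third step is Ville's maximal inequality for nonnegative supermartingales: $\npr(\sup_{T\ge0}Z_T\ge a)\le\mathbb E Z_0/a=1/a$ for every $a>0$. On the event that $P_T>x$ and $V_T^2\le y$ hold simultaneously for some index $T$, at that index $Z_T>\exp\{\lambda x-\psi(\lambda)y\}$, so the running supremum exceeds this level; taking $a=\exp\{\lambda x-\psi(\lambda)y\}$ yields
\begin{equation*}
\npr\big(P_T>x,\ V_T^2\le y\ \text{for some}\ T\big)\le\exp\Big(-\lambda x+\frac{\lambda^2 y}{2(1-\lambda c)}\Big).
\end{equation*}
I would then optimize over $\lambda\in(0,1/c)$: the explicit choice $\lambda=x/(y+cx)$ gives $1-\lambda c=y/(y+cx)$, collapsing the right-hand side to $\exp\{-x^2/(2(y+cx))\}$, which is the weaker stated bound; the sharper bound follows by solving $g'(\lambda)=0$ for $g(\lambda):=-\lambda x+\lambda^2 y/\{2(1-\lambda c)\}$, which reduces to a quadratic in $\lambda$ whose admissible root produces the exponent $x^2/\big[y\{1+(1+2cx/y)^{1/2}\}+cx\big]$.

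The only mildly delicate points are the rigorous justification of the moment-generating-function expansion (handled by $\lambda c<1$ together with the factorial-moment bound) and the algebra in the final optimization; the structural core --- the supermartingale property of $Z_T$ and the maximal inequality, which is exactly what makes the ``for some $T$'' running-supremum formulation valid --- is routine. Since the statement is a verbatim restatement of \citet[Theorem~8.2.2]{de1999}, in the paper I would simply invoke that reference, and optionally record the sketch above for the reader's convenience.
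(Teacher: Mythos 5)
The paper offers no proof of this lemma at all: it is quoted verbatim as an external result and justified solely by the citation to \citet[Theorem~8.2.2]{de1999}. Your exponential-supermartingale sketch is the standard proof of that theorem and is correct in every step — the reduction of the bounded-increment case to the factorial-moment case, the conditional MGF bound $\mathbb{E}(e^{\lambda N_t}\mid\mathcal F_{t-1})\le\exp\{\lambda^2\sigma_t^2/(2(1-\lambda c))\}$ for $\lambda c<1$, the supermartingale property of $Z_T$ (valid because $V_T^2$ is predictable), and crucially the use of Ville's maximal inequality, which is exactly what licenses the ``for some $T$'' formulation; the choice $\lambda=x/(y+cx)$ does yield the weaker exponent and the exact optimizer the sharper one. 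Since the paper itself only cites, your concluding remark that one would simply invoke the reference matches the paper's treatment.
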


\medskip

\begin{lemma}
\label{lemma:rfrw_high_xi_range}
Under Assumption \ref{assump:rfrw_all}, and every split is placed such that it puts at least a fraction $\xi$ of the data points in the parent node into each child node (which satisfies Requirement (1) in Definition \ref{def:rfrw_alpha_k_m_partition}). For any parent node $R=\bigtimes_{i=1}^p[r_i^-,r_i^+]\subseteq [0,1]^p$ and the splitting feature $j$, the split position lies in $\left[(1-\tilde{\xi})r_j^- + \tilde{\xi} r_j^+, \tilde{\xi} r_j^- + (1-\tilde{\xi})r_j^+\right]$ with probability going to 1, where $\xi>1.1\zeta^2\tilde{\xi}$, $\zeta>1$ is given in Lemma \ref{lemma:rfrw_zeta}, $\xi \in (0,1/2)$, and $\tilde{\xi} \in (0,1/2)$.
\end{lemma}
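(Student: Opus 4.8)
The plan is to carry out everything in the transformed space $[0,1]^p$, where by Lemma \ref{lemma:rfrw_zeta} the density $h_Z$ of $Z=\iota_h(X)$ lies in $[\zeta^{-1},\zeta]$, and to play the \emph{coordinate-wise} location of the split off against the \emph{population mass} it carves out. Fix a parent node $R=\bigtimes_{i=1}^p[r_i^-,r_i^+]$ that can appear in a $(\xi,k,m)$-valid tree, a splitting coordinate $j$, and a split position $v\in[r_j^-,r_j^+]$; write $\tau'=(v-r_j^-)/(r_j^+-r_j^-)$, so that the assertion to be proved is precisely $\tau'\in[\tilde\xi,1-\tilde\xi]$ with probability tending to $1$. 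Since $R_{\mathrm{left}}$ is $R$ with its $j$-th edge shrunk to $[r_j^-,v]$ and all other edges unchanged, integrating $\zeta^{-1}\le h_Z\le\zeta$ over $R_{\mathrm{left}}$ and over $R$ yields $\zeta^{-2}\tau'\le \mu(R_{\mathrm{left}})/\mu(R)\le\zeta^{2}\tau'$, and symmetrically $\zeta^{-2}(1-\tau')\le\mu(R_{\mathrm{right}})/\mu(R)\le\zeta^{2}(1-\tau')$; only the two upper bounds are used below.

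The second ingredient is a uniform empirical-to-population approximation for cell masses: with probability at least $1-cT^{-1}$, every box $A$ that can arise as a node of a $(\xi,k,m)$-valid tree has $|\#A/T-\mu(A)|$ small relative to $\mu(A)$ once $T\mu(A)$ exceeds a polylogarithmic threshold. I would obtain this from the Bernstein inequality for $\alpha$-mixing sequences (Lemma \ref{lemma:rfrw_bernstein}) applied to the bounded centred summands $I(Z_t\in A)-\mu(A)$ — whose variance proxy is of order $\mu(A)$ up to a $\log T$ factor, by the exponential mixing of Assumption \ref{assum:rfrw_alpha_mixing} and Davydov-type covariance bounds — followed by a union bound over a deterministic superclass of the relevant boxes. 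The key structural point is that Requirements (1) and (4) of Definition \ref{def:rfrw_alpha_k_m_partition} cap the tree depth at $O(\log T)$, so each such box is cut out by $O(\log T)$ coordinate thresholds; after discretizing the threshold locations this gives a superclass of log-cardinality only $O\bigl((\log T)(\log T+\log p)\bigr)$, which — crucially for the high-dimensional regime $\liminf p/T>0$ — is absorbed into the polylogarithmic slack of the Bernstein exponent instead of scaling with the ambient dimension $p$. A splittable node has $\#R\ge m\ge 2k$, and the standing condition $(\log T)^{3/2}(\log p)^{1/2}/k^{1/2}\to 0$ forces $m$ to dominate the polylogarithmic threshold; hence on this event $\mu(R)=(\#R/T)(1+o(1))$, and $\mu(R_{\mathrm{left}})\ge(\#R_{\mathrm{left}}/T)(1-o(1))$ as well, the latter being legitimate because Requirement (1) already guarantees $\#R_{\mathrm{left}}\ge\xi\#R\ge\xi m$.

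Finally, assemble by contradiction. On the above high-probability event, suppose a split obeying Requirement (1) is placed in $R$ with $\tau'<\tilde\xi$. Requirement (1) gives $\#R_{\mathrm{left}}/\#R\ge\xi$, so $\mu(R_{\mathrm{left}})/\mu(R)\ge\xi(1-o(1))$; combined with $\mu(R_{\mathrm{left}})/\mu(R)\le\zeta^{2}\tau'$ this forces $\tau'\ge\xi(1-o(1))/\zeta^{2}$. Since $\xi>1.1\,\zeta^{2}\tilde\xi$, for all $T$ large enough $\xi(1-o(1))/\zeta^{2}>\tilde\xi$, contradicting $\tau'<\tilde\xi$; the identical argument with $R_{\mathrm{right}}$ and $\mu(R_{\mathrm{right}})/\mu(R)\le\zeta^{2}(1-\tau')$ rules out $\tau'>1-\tilde\xi$. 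Thus $\tau'\in[\tilde\xi,1-\tilde\xi]$ on that event, which has probability tending to $1$. The main obstacle is the uniform concentration step: one must simultaneously cope with the $\alpha$-mixing dependence (which is why a Bernstein-type bound with its extra $(\log T)^{2}$ term is needed) and with the data-adaptivity of the partition cells, the latter requiring that the data-dependent cells be replaced by a deterministic superclass whose cardinality is controlled through the depth restriction rather than through $p$ — without this refinement the estimate would degenerate precisely when $p\asymp T$, the regime of interest.
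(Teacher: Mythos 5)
Your argument is correct and is essentially the one the paper relies on: the paper omits this proof entirely, deferring to Proposition 1(A) of \cite{Chen2024}, and your chain (empirical split fraction $\geq\xi$ from Requirement (1) $\Rightarrow$ population fraction $\geq\xi(1-o(1))$ via uniform concentration of cell counts $\Rightarrow$ Lebesgue fraction $\geq\xi(1-o(1))/\zeta^{2}>\tilde{\xi}$ via the two-sided density bound of Lemma \ref{lemma:rfrw_zeta}) is exactly that standard argument. The uniform concentration step you sketch is moreover already available in the paper's own machinery (event $\mathcal{E}_1$ in Lemma \ref{lemma:rfrw_R_tmu}, built on the approximating-rectangle class of Lemma \ref{lemma:rfrw_rectangle}), so no new estimate is actually required.
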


\begin{proof}
    The proof is immediate using the same argument as in Proposition 1(A) in \cite{Chen2024-sup}, thus it is omitted here.
\end{proof}

\section{Proofs of results in Section \ref{sec:theory_concentration}}\label{sec:appen_proof_concentration}

\subsection{Concentration of weighted regression trees around the population trees}

The partition-optimal (population) tree $T_{\Lambda}^*$ arises if we can train a weighted regression tree $T_{\Lambda}$ on the entire series. Under Assumption \ref{assump:rfrw_all}, we can obtain the following concentration result for a tree $T_{\Lambda}$.

\begin{theorem}
\label{theo:rfrw_rw_regression_tree}
\edef\assprefix{\thetheorem}
    Under Assumption \ref{assump:rfrw_all}, there exists a positive constant $\beta$ such that the following two statements both hold with probability at least $1-6T^{-1}$ for all sufficiently large $T$:

    \begin{enumerate}[label=(\roman*), ref=\assprefix(\roman*)]
    
    \item\label{theo:rfrw_rw_regression_tree_1} For fixed-dimensional feature space with  $p<\infty$,
    \begin{equation*}
    \label{eqn:rfrw_theo_low}
    \sup_{(x,\Lambda)\in \mathbb{R}^p \times \mathcal{V}_k}|T_\Lambda(x)-T_\Lambda^*(x)|\leq \beta\frac{(\log T)^{3/2}}{k^{1/2}}.
    \end{equation*}

    \item\label{theo:rfrw_rw_regression_tree_2} For high-dimensional feature space with $\lim \inf p/T>0$,
    \begin{equation}
    \label{eqn:rfrw_theo_high}
    \sup_{(x,\Lambda)\in \mathbb{R}^p \times \mathcal{V}_k}|T_\Lambda(x)-T_\Lambda^*(x)|\leq \beta\frac{(\log T)^{3/2}(\log p)^{1/2}}{k^{1/2}}.
    \end{equation}
\end{enumerate}
\end{theorem}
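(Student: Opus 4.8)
The plan is to push the problem onto the transformed feature space $[0,1]^p$, reduce the supremum over $(x,\Lambda)$ to a supremum over a finite (but data-dependent) family of leaf hyper-rectangles, and, for each such rectangle, split the leaf error into a random-weight perturbation, an $\alpha$-mixing averaging term for the bounded signal, and a martingale noise term, each of which I would control by an exponential inequality already available in the supplement. First I would use the monotone-invariance of CART splits together with Lemma~\ref{lemma:rfrw_zeta}: writing $Z_t=\iota_h(X_t)$, every leaf $L_\Lambda(x)$ is a hyper-rectangle $R\subseteq[0,1]^p$, and since $T_\Lambda$ and $T_\Lambda^*$ are constant on leaves,
$$\sup_{(x,\Lambda)\in\mathbb R^p\times\mathcal V_k}|T_\Lambda(x)-T_\Lambda^*(x)|=\sup_{R\in\mathcal R_k}\Bigl|\frac{\sum_t\omega_tY_tI(Z_t\in R)}{\sum_t\omega_tI(Z_t\in R)}-\eta(R)\Bigr|,\qquad \eta(R):=\mathbb E[Y\mid Z\in R],$$
where $\mathcal R_k$ is the family of hyper-rectangles occurring as a leaf of some $k$-valid partition of $\{Z_1,\dots,Z_T\}$. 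The essential combinatorial input is a bound on $\mathcal L:=\log|\mathcal R_k|$: one has $\mathcal L\lesssim\log T$ for fixed $p$ (each coordinate admits $O(T)$ realizable split positions), and $\mathcal L\lesssim\log^2T+\log T\log p$ when $\liminf p/T>0$, since a $k$-valid leaf is produced by $O(\log T)$ successive coordinate splits and is therefore non-trivial in $O(\log T)$ coordinates (the high-dimensional analogue of the adaptive-concentration counting of \citealt{Wager2015}); this $\mathcal L$ is exactly what surfaces as a $\log$ factor in every tail bound and ultimately produces the extra $(\log p)^{1/2}$ in part~\ref{theo:rfrw_rw_regression_tree_2}.

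Next, for a fixed $R\in\mathcal R_k$ with $L=\{t:Z_t\in R\}$, $n=\#R\ge k$, $\bar Y_R=n^{-1}\sum_{t\in L}Y_t$, using $Y_t=f(X_t)+\epsilon_t$, $\eta(R)=\mathbb E[f(X)\mid Z\in R]$ (using $\mathbb E[\epsilon_t\mid Z_t]=0$) and $\sum_{t\in L}(Y_t-\bar Y_R)=0$, I would decompose the leaf error as
$$\frac{\sum_{t\in L}\omega_tY_t}{\sum_{t\in L}\omega_t}-\eta(R)=\underbrace{\frac{\sum_{t\in L}(\omega_t-1)(Y_t-\bar Y_R)}{\sum_{t\in L}\omega_t}}_{(\mathrm I)}+\underbrace{\Bigl(\tfrac1n\sum_{t\in L}f(X_t)-\mathbb E[f(X)\mid Z\in R]\Bigr)}_{(\mathrm{II})}+\underbrace{\tfrac1n\sum_{t\in L}\epsilon_t}_{(\mathrm{III})}.$$
A preliminary step controls the two denominators uniformly over $\mathcal R_k$: Lemma~\ref{lemma:rfrw_bernstein} applied to $T^{-1}\sum_tI(Z_t\in R)$ gives $\#R\asymp T\mu(R)$, hence $\mu(R)\gtrsim k/T$ (the hypothesis on $k$ forces $k\gg(\log T)^3$), while a Bernstein bound for independent sub-exponentials (Assumption~\ref{assum:rfrw_random_weights}), conditionally on $D_T$, gives $\sum_{t\in L}\omega_t\asymp n$ since $n\ge k\gg\log T$.

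For $(\mathrm{III})$: the errors being exogenous, $\{\epsilon_tI(Z_t\in R)\}_t$ is a martingale difference sequence whose conditional variances sum to $\sigma_\epsilon^2n$ and whose conditional moments obey the Bernstein condition with an absolute constant (sub-Gaussianity, Assumption~\ref{assum:rfrw_eps}), so Lemma~\ref{lemma:rfrw_martingale} with a dyadic stratification of the random $n$ and a union over $\mathcal R_k$ gives $\sup_R|(\mathrm{III})|\lesssim\sqrt{(\mathcal L+\log T)/k}+(\mathcal L+\log T)/k$. For $(\mathrm{II})$: writing $n^{-1}\sum_{t\in L}f(X_t)$ and $\mathbb E[f(X)\mid Z\in R]$ as ratios of a bounded time-average to the indicator average, applying Lemma~\ref{lemma:rfrw_bernstein} to each (the relevant leaf long-run variance being $O(\mu(R)\log(1/\mu(R)))$ under the exponential mixing of Assumption~\ref{assum:rfrw_alpha_mixing}, with $f$ bounded by Assumption~\ref{assum:rfrw_fbound}), and combining through the ratio with $\mu(R)\gtrsim k/T$, gives $\sup_R|(\mathrm{II})|\lesssim\sqrt{\log T\,(\mathcal L+\log T)/k}+(\log T)^2(\mathcal L+\log T)/k$. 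For $(\mathrm I)$: on the sub-Gaussian event $\max_t|\epsilon_t|\lesssim\sqrt{\log T}$ one has $|Y_t-\bar Y_R|\lesssim\sqrt{\log T}$, so conditionally on $D_T$ the summands $(\omega_t-1)(Y_t-\bar Y_R)$ are centred, independent and sub-exponential with norm $\lesssim\sqrt{\log T}$; Bernstein for independent sub-exponentials, a union over $\mathcal R_k$, and $\sum_{t\in L}\omega_t\asymp n$ give $\sup_R|(\mathrm I)|\lesssim\sqrt{\log T\,(\mathcal L+\log T)/k}+\sqrt{\log T}\,(\mathcal L+\log T)/k$. Intersecting the at most six high-probability events above (hence probability $\ge1-6T^{-1}$) and adding the bounds: for fixed $p$, $\mathcal L\lesssim\log T$ gives $\lesssim\log T/k^{1/2}+(\log T)^{3/2}/k+(\log T)^{3}/k\le\beta(\log T)^{3/2}/k^{1/2}$ once $k\gg(\log T)^3$, which is part~\ref{theo:rfrw_rw_regression_tree_1}; for $\liminf p/T>0$, $\mathcal L\lesssim\log T(\log T+\log p)$ turns the same sum into $\le\beta(\log T)^{3/2}(\log p)^{1/2}/k^{1/2}$ up to strictly lower-order terms absorbed for $T$ large under the stated hypothesis on $k$, which is part~\ref{theo:rfrw_rw_regression_tree_2}. (The forest bound, Theorem~\ref{theo:rfrw}, then follows by averaging over the $B$ trees.)

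I expect the main obstacle to be the uniformity step: in high dimensions a naive union over all axis-aligned rectangles is useless ($\exp(O(p))$), so one must genuinely exploit the tree structure to cut the effective complexity down to $\exp(O(\log^2T+\log T\log p))$ and then re-run every concentration bound uniformly over this data-dependent class while carefully tracking which randomness (data, errors, weights) is being conditioned on at each stage — in particular, it is precisely because the realized leaf $L_\Lambda(x)$ itself depends on the weights that one must pass to the supremum over all $\Lambda\in\mathcal V_k$ before conditioning on $D_T$ in $(\mathrm I)$. Secondary difficulties are that $\omega_t\epsilon_t$ is heavier-tailed than sub-exponential, which forces $(\mathrm I)$ to be handled on the truncation event for $\max_t|\epsilon_t|$ rather than directly, and that matching the polylog exponents to the advertised rate requires taking the long-run variance of a leaf indicator as $O(\mu(R)\log(1/\mu(R)))$ rather than $O(1)$, so that the $xC(\log T)^2$ term in Lemma~\ref{lemma:rfrw_bernstein} contributes only lower-order slack.
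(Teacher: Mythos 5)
Your overall architecture mirrors the paper's: transform to $[0,1]^p$, reduce the supremum over $(x,\Lambda)$ to a supremum over leaf rectangles, decompose the per-leaf error into a weight-perturbation term, a mixing term for the bounded signal, and a noise term, control each by a Freedman/Bernstein inequality, and intersect six events each of probability at least $1-T^{-1}$. Your decomposition into $(\mathrm I)$--$(\mathrm{III})$ via $\sum_{t\in L}(Y_t-\bar Y_R)=0$ is algebraically valid and differs from the paper's, which instead approximates each leaf $L$ from inside by a rectangle $L_{-}^{\varsigma}$ from a fixed net and splits the error into the $L$-versus-$L_{-}^{\varsigma}$ discrepancy, $G_T(L_{-}^{\varsigma})$, and $|\eta(L_{-}^{\varsigma})-\eta(L)|$; the resulting rates agree.

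The genuine gap is in the uniformity step, which you flag as the main obstacle but resolve incorrectly. Your class $\mathcal R_k$ of ``hyper-rectangles occurring as a leaf of some $k$-valid partition'' is \emph{not} finite: the split positions $v_{j,\tau}$ range over a continuum, so uncountably many distinct geometric rectangles contain exactly the same data points. Counting ``realizable split positions'' only quotients this class by its empirical equivalence relation, which suffices for the data-dependent sums $\sum_t\omega_tY_tI(Z_t\in R)$ and $\#R$ but not for the population quantities $\eta(R)=\mathbb E[Y\mid Z\in R]$ and $\mu(R)$, which vary continuously within each equivalence class; a union bound over equivalence classes therefore does not control $\sup_R|(\mathrm{II})|$ or the centring in your tail bounds. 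The paper's fix is the deterministic approximating net $\mathcal R_{\varsigma,w}$ of Lemma \ref{lemma:rfrw_rectangle}, with $\log|\mathcal R_{\varsigma,w}|\lesssim\log T$ for fixed $p$ and $\lesssim\log T\log p$ when $\liminf p/T>0$, such that every leaf of volume at least $w\asymp k/T$ is sandwiched between net elements of comparable Lebesgue measure: all concentration inequalities are proved only over the net, and the leaf-to-net discrepancy is bounded separately and deterministically ($|\eta(L_{-}^{\varsigma})-\eta(L)|\le 2M\zeta^2\varsigma$ with $\varsigma\asymp k^{-1/2}$ in Lemma \ref{lemma:rfrw_eta}, plus the count discrepancy of Lemma \ref{lemma:rfrw_weighted_average}). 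You would also need the companion fact that leaves of volume below $w$ do not occur for large $T$, which your sketch omits. With the net substituted for your $\mathcal R_k$ and the two extra discrepancy terms added to your decomposition, the rest of your argument --- including the correct observation that the supremum must be taken over all of $\mathcal V_k$ before conditioning on $D_T$ because the realized partition depends on the weights --- goes through and recovers the stated rates.
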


To prove Theorem \ref{theo:rfrw_rw_regression_tree}, consider the partition $\Lambda$ of $\mathbb{R}^p$, define $\bar{\Lambda}$ to be the partition of $\bar{\mathbb{R}}=\mathbb{R}\cup\{\pm\infty\}$, which is obtained by extending the feature space of each rectangles in $\Lambda$ to $\bar{\mathbb{R}}^p$. In this way, if the partition $\Lambda$ is a $k$-valid partition of $\mathbb{R}^p$, then the partition $\bar{\Lambda}$ of $\bar{\mathbb{R}}^p$ is also a $k$-valid partition. For all $x\in\mathbb{R}^p$, it follows that 
\begin{equation}
\label{eqn:rfrw_T_lambda_bar}
T_\Lambda(x)-T_\Lambda^*(x)=T_{\bar{\Lambda}}(x) - T_{\bar{\Lambda}}^*(x).
\end{equation}
In addition, for any $x\in\mathbb{R}^p$, and leaf $L=\iota_h\big(L_{\bar{\Lambda}}(x)\big)$ that contains $z=\iota_h(x)$, we define $G_T(L)$ as follows,
\begin{equation}
\label{eqn:rfrw_GTL_define}
G_T(L):=T_{\bar{\Lambda}}(x) - T_{\bar{\Lambda}}^*(x)=\frac{1}{\sum\limits_{t:Z_t\in L}\omega_t}\sum_{t:Z_t\in L}\omega_tY_t-\eta(L).
\end{equation}
Let $L_{\bar{\Lambda}}$ be the leaf that is generated by the partition $\bar{\Lambda}$ of $\bar{\mathbb{R}}^p$. Then $L_{\bar{\Lambda}}$ (that contains $X_t$) is a leaf of a $k$-valid partition of $\bar{\mathbb{R}}^p$ if and only if $\iota_h(L_{\bar{\Lambda}})$ (that contains $Z_t$) is a leaf of a $k$-valid partition of $[0,1]^p$. By (\ref{eqn:rfrw_T_lambda_bar}) and (\ref{eqn:rfrw_GTL_define}), we have
\begin{equation}
\label{eqn:rfrw_GTL_T_lambda}
\sup_{(x,\Lambda)\in \mathbb{R}^p \times \mathcal{V}_k}|T_\Lambda(x)-T_\Lambda^*(x)|=\sup_{L\in \mathcal{L}_k}|G_T(L)|,
\end{equation}
where $\mathcal{L}_k$ contains all sets of $k$-valid partitions of $[0,1]^p$. In the following, we introduce several lemmas to obtain the concentration result in (\ref{eqn:rfrw_GTL_T_lambda}) to prove Theorem \ref{theo:rfrw_rw_regression_tree}.

Note that each leaf $L\in\mathcal{L}_k$ is a rectangle subspace of $[0,1]^p$, similar to \cite{Wager2015-sup} and \cite{Davis2020-sup}, we can construct a set of rectangles that constitutes good approximations to it, such that the concentration bound of (\ref{eqn:rfrw_GTL_T_lambda}) only depends on the cardinality of the approximating rectangles. To study both the cases of fixed and high dimension $p$, we combine the results of Theorem 7 in \cite{Wager2015-sup} and Theorem 3 in \cite{Davis2020-sup} to introduce the approximating rectangles in Lemma \ref{lemma:rfrw_rectangle}.


\begin{lemma}
\label{lemma:rfrw_rectangle}
\edef\assprefix{\thelemma}
    Let $\varsigma \asymp k^{-1/2}$, and $w \asymp k/T$. There exists a collection of rectangles $\mathcal{R}_{\varsigma,w}$ such that the following two statements hold:
    \begin{enumerate}[label=(\roman*), ref=\assprefix(\roman*)]
        \item\label{lemma:rfrw_rectangle_1} For any rectangle $R\subseteq[0,1]^p$ with its volume $\normalfont \text{Leb}(R)\geq w$, there exists rectangles $R_{-},R_{+}\in \mathcal{R}_{\varsigma,w}$ such that
        \begin{equation*}
        \label{eqn:rfrw_theo_3_rectangle}
        R_{-}\subseteq R \subseteq R_{+}, \quad  \exp({-\varsigma})\normalfont \text{Leb}(R_{+})\leq \normalfont \text{Leb}(R) \leq \exp(\varsigma) \normalfont \text{Leb}(R_{-}).
        \end{equation*}
        \item\label{lemma:rfrw_rectangle_2} The cardinality $|\mathcal{R}_{\varsigma,w}|$ of the set $\mathcal{R}_{\varsigma,w}$ satisfies that 
        \begin{equation*}
        \log |\mathcal{R}_{\varsigma,w}| \lesssim
        \begin{cases}
        \log T \log p, & \text{if}\ \lim \inf p/T>0.\\
        \log T, & \text{if}\ p<\infty. \nonumber
        \end{cases}       
        \end{equation*}
    \end{enumerate}
\end{lemma}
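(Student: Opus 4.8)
The plan is to build, for each regime and then unified, an explicit discretized family $\mathcal R_{\varsigma,w}$ of ``grid rectangles'' that sandwiches every admissible $R$ from inside and outside while matching its volume up to a multiplicative factor $e^{\varsigma}$; this is the device used by \citet[Theorem~7]{Wager2015-sup} in the high-dimensional case and \citet[Theorem~3]{Davis2020-sup} in the fixed-$p$ case, whose constructions I would adapt. Concretely: in each coordinate fix a grid $\mathcal G\subseteq[0,1]$ that contains $0$ and $1$, refines geometrically towards both endpoints down to resolution of order $w$, and has consecutive points within a multiplicative factor $\exp\!\big(c\varsigma/\log(1/w)\big)$; since $\varsigma\asymp k^{-1/2}$ and $w\asymp k/T$, this needs only $|\mathcal G|=\mathrm{poly}(T)$ points. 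Then let $\mathcal R_{\varsigma,w}$ consist of all products $\bigtimes_{i=1}^p I_i$ with each $I_i$ either $[0,1]$ or an interval with endpoints in $\mathcal G$, subject to the constraint that at most $d_0:=\lceil c'\log(1/w)\rceil$ of the factors are non-trivial. This last restriction is the crux: it is what keeps $|\mathcal R_{\varsigma,w}|$ manageable in high dimension, and it is legitimate because the rectangles we actually need to approximate --- leaves of the recursive partitions in $\mathcal V_k$, which have volume $\gtrsim w$ --- are, following the argument of \cite{Wager2015-sup}, essentially full along all but $O(\log(1/w))=O(\log T)$ coordinates.

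For part~\ref{lemma:rfrw_rectangle_1}, given such an $R=\bigtimes_{i=1}^p[r_i^-,r_i^+]$ I would take $R_+$ to be the smallest, and $R_-$ the largest, member of $\mathcal R_{\varsigma,w}$ with $R_-\subseteq R\subseteq R_+$, formed by rounding the endpoints of each side outward (for $R_+$) or inward (for $R_-$) to $\mathcal G$ and sending nearly-full coordinates to $[0,1]$; inclusion is immediate from monotonicity of rounding. For the volume estimate, $\mathrm{Leb}(R)\ge w$ forces every side to exceed $w$ (so the non-trivial sides of $R_-$ are nonempty) and all but $O(\log(1/w))$ sides to be close to $1$; writing $\log\mathrm{Leb}(R_+)-\log\mathrm{Leb}(R)=\sum_i\log\!\big(|I_i(R_+)|/|I_i(R)|\big)$, each term is nonnegative, vanishes outside the $\le d_0$ active coordinates, and is at most $c\varsigma/\log(1/w)$ there by the geometric spacing, so the sum is $\le\varsigma$; symmetrically for $R_-$. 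This mirrors the computations in \cite{Wager2015-sup,Davis2020-sup}.

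For part~\ref{lemma:rfrw_rectangle_2}, a member of $\mathcal R_{\varsigma,w}$ is determined by its set of active coordinates ($\le\binom{p}{d_0}\le p^{d_0}$ choices) together with an ordered pair of grid points on each ($\le|\mathcal G|^{2d_0}$ choices), whence $\log|\mathcal R_{\varsigma,w}|\lesssim d_0\log p + d_0\log|\mathcal G|\lesssim\log(1/w)\log p+\log(1/w)\log T$. When $p<\infty$ one takes $d_0\le p$ fixed and both terms are $O(\log T)$. When $\liminf p/T>0$ we have $\log T\lesssim\log p$, so $\log(1/w)\log T\lesssim(\log T)^2\lesssim\log T\log p$, and the bound becomes $O(\log T\log p)$.

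I expect the main obstacle to be exactly this high-dimensional counting: the naive estimate --- a product over all $p$ coordinates of $O(|\mathcal G|^2)$ interval choices --- is of order $\exp(\Theta(p))$ and useless when $p\asymp T$, so one must show the approximating rectangles can be taken non-trivial along only $O(\log T)$ coordinates while still meeting the multiplicative volume requirement of part~\ref{lemma:rfrw_rectangle_1}, i.e.\ reconcile a grid fine enough near the endpoints to control volumes with a family sparse enough to be counted. This is precisely the content we would import from \citet[Theorem~7]{Wager2015-sup}, with the (easier) polynomial-size uniform-grid reduction for fixed $p$ supplied by \citet[Theorem~3]{Davis2020-sup}.
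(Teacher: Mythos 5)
Your construction and counting are exactly the ones the paper relies on: its proof of this lemma simply cites Theorem~7/Corollary~8 of \cite{Wager2015-sup} for the high-dimensional case and Theorem~3 of \cite{Davis2020-sup} for fixed $p$, which are precisely the geometric-grid-plus-sparse-active-coordinates constructions you reproduce. Your proposal is correct and, if anything, more explicit than the paper's own (citation-only) argument.
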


\begin{proof}
    The proof of Statement (i) can be referred to Theorem 7 in \cite{Wager2015-sup} and Theorem 3(i) in \cite{Davis2020-sup}. For Statement (ii), for the dimension $p$ that satisfies $\lim \inf p/T>0$, we have $\log |\mathcal{R}_{\varsigma,w}|\lesssim\log T \log p$, whose proof can be found in Corollary 8 in \cite{Wager2015-sup}. When the dimension $p$ is fixed, we have $\log|\mathcal{R}_{\varsigma,w}|\lesssim\log T$, which coincides with the results of Theorem 3(ii) in \cite{Davis2020-sup}.
\end{proof}

For any leaf $L\in\mathcal{L}^w_k:=\{L\in\mathcal{L}_k:\text{Leb}(L)\geq w\}$, there exists a approximating rectangle $L_{-}^{\varsigma}$ from $\mathcal{R}_{\varsigma,w}$ with $L_{-}^{\varsigma}\subseteq L$ and $ \text{Leb}(L) \leq \exp(\varsigma) \text{Leb}(L_{-}^{\varsigma})$, such that
\begin{eqnarray}
\label{eqn:rfrw_rw_tree_general}
\sup_{L\in \mathcal{L}_k^w}|G_T(L)|&\leq& \sup_{L\in \mathcal{L}_k^w}\left |\frac{1}{\sum\limits_{t:Z_t\in L}\omega_t}\sum_{t:Z_t\in L}\omega_tY_t - \frac{1}{\sum\limits_{t:Z_t\in L_{-}^\varsigma}\omega_t}\sum_{t:Z_t\in L_{-}^\varsigma} \omega_tY_t\right| \nonumber \\
&+& \sup_{L\in \mathcal{L}_k^w}|G_T(L_{-}^\varsigma)| +\sup_{L\in \mathcal{L}_k^w}|\eta(L_{-}^\varsigma)-\eta(L)|.
\end{eqnarray}
Note that (\ref{eqn:rfrw_rw_tree_general}) applies across all leaf $L\in\mathcal{L}^w_k$. We can follow a similar argument as in Lemma 4 in \cite{Davis2020-sup} to show that for all sufficiently large $T$, the set $\mathcal{L}_k=\mathcal{L}_k^w$, which coincides with (\ref{eqn:rfrw_GTL_T_lambda}) that holds across the leaf $L\in\mathcal{L}_k$. To develop the concentration result of weighted regression trees, it remains to prove that for large $T$ and with high probability, the right-hand side of (\ref{eqn:rfrw_rw_tree_general}) is upper bounded, and this bound goes to zero. As the concentration bound in (\ref{eqn:rfrw_rw_tree_general}) only depends on the cardinality $|\mathcal{R}_{\varsigma,w}|$, we can derive the concentration result for weighted regression trees under the high-dimensional case, while the same arguments can be applied to the fixed-dimensional case, which are omitted here.

Next, we introduce Lemma \ref{lemma:rfrw_R_tmu}-\ref{lemma:rfrw_w_l}, which are used in the derivation of the bound of (\ref{eqn:rfrw_rw_tree_general}). The establishments of the bonds for the three terms on the right-hand side of (\ref{eqn:rfrw_rw_tree_general}) are presented in Lemma \ref{lemma:rfrw_weighted_average}-\ref{lemma:rfrw_eta}, respectively. 

Lemma \ref{lemma:rfrw_R_tmu} establishes the concentration of $\#R$ across all the rectangles in $\mathcal{R}_{\varsigma,w}$. In this way, we can develop the concentration inequalities for $\#L$ and $\#L_-^\varsigma$ across all $L\in \mathcal{L}_k^w$.

\begin{lemma}
\label{lemma:rfrw_R_tmu}
Suppose that Assumption \ref{assump:rfrw_all} is satisfied,  $\varsigma = k^{-1/2}$, and $w = k/(4\zeta T)$, where $\zeta>1$ is given in Lemma \ref{lemma:rfrw_zeta}. There exists a positive constant $c_1$, such that event $\mathcal{E}_1$:
$$\mathcal{E}_1:=\left \{ \sup\limits_{R\in \mathcal{R}_{\varsigma,w}:\mu(R)\geq \zeta w}\frac{\left |\#R-T\mu(R) \right |}{\{T\mu(R)\}^{1/2}} \leq c_1\log T (\log p)^{1/2}\right\},$$
holds with probability at least $1-T^{-1}$ for all sufficiently large $T$.

\end{lemma}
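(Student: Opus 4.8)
The plan is to fix a rectangle $R\in\mathcal{R}_{\varsigma,w}$ with $\mu(R)\geq \zeta w$, write $\#R-T\mu(R)=\sum_{t=1}^T g_t$ with $g_t:=I(Z_t\in R)-\mu(R)$, bound $\npr(|\#R-T\mu(R)|\geq x)$ via the Bernstein inequality for $\alpha$-mixing sequences (Lemma \ref{lemma:rfrw_bernstein}) at the level $x=c_1\{T\mu(R)\}^{1/2}\log T(\log p)^{1/2}$, and then take a union bound over the finite family $\mathcal{R}_{\varsigma,w}$ using the cardinality estimate $\log|\mathcal{R}_{\varsigma,w}|\lesssim \log T\log p$ from Lemma \ref{lemma:rfrw_rectangle_2}. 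Since $w=k/(4\zeta T)$, Lemma \ref{lemma:rfrw_zeta} gives $\mu(R)=\int_R h_Z\,\geq \zeta^{-1}\text{Leb}(R)$, and the restriction $\mu(R)\geq \zeta w$ yields the key lower bound $T\mu(R)\geq k/4$, which is what ultimately produces the self-normalized form in the statement.

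First I would check the hypotheses of Lemma \ref{lemma:rfrw_bernstein}: the $g_t$ are centred and satisfy $\|g_t\|_\infty\leq 1$, and because $Z_t=\iota_h(X_t)$ is a measurable (indeed coordinatewise monotone) transformation of $X_t$, the sequence $(Z_t)$, hence $(g_t)$, inherits the exponential $\alpha$-mixing of Assumption \ref{assum:rfrw_alpha_mixing} with the same rate. Next I would bound the variance proxy $v^2=\sup_t\big(\var(g_t)+2\sum_{j>t}|\ncov(g_t,g_j)|\big)$: here $\var(g_t)=\mu(R)(1-\mu(R))\leq\mu(R)$, and for the cross terms I would use the two elementary bounds $|\ncov(g_t,g_j)|\leq 2\mu(R)$ and $|\ncov(g_t,g_j)|\leq 4\alpha(|j-t|)$, the latter being the classical covariance inequality for bounded variables. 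Splitting the sum over lags at $d_0:=\lceil C_\alpha^{-1}\log(1/\mu(R))\rceil$ and using $\alpha(t)\leq e^{-C_\alpha t}$ together with $C_\alpha\geq 1$ gives $\sum_{j>t}|\ncov(g_t,g_j)|\lesssim \mu(R)\log(1/\mu(R))$, and since $\mu(R)\geq\zeta w\asymp k/T$ with $k\leq T$ we get $\log(1/\mu(R))\lesssim \log T$, whence $v^2\lesssim\mu(R)\log T$.

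Plugging $C=1$ and $v^2T\lesssim T\mu(R)\log T$ into Lemma \ref{lemma:rfrw_bernstein} at the level $x$ above, the denominator $v^2T+C^2+xC(\log T)^2$ is dominated by $v^2T$: indeed $C^2=1$ and $xC(\log T)^2\asymp\{T\mu(R)\}^{1/2}(\log T)^3(\log p)^{1/2}$, both of which are $o(T\mu(R)\log T)$ because $T\mu(R)\geq k/4$ and $k$ diverges fast enough that $(\log T)^{3/2}/k^{1/2}\to 0$ (so $\{T\mu(R)\}^{1/2}\gtrsim k^{1/2}\gg(\log T)^{3/2}$, and $\log p\lesssim\log T$ in the high-dimensional regime). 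Hence the exponent is of order $-x^2/(v^2T)\asymp -c_1^2\log T\log p$, and a union bound over the at most $|\mathcal{R}_{\varsigma,w}|$ rectangles with $\log|\mathcal{R}_{\varsigma,w}|\lesssim\log T\log p$ makes the total failure probability at most $|\mathcal{R}_{\varsigma,w}|\exp(-c\,c_1^2\log T\log p)\leq T^{-1}$ once $c_1$ is taken large enough and $T$ is large. The step I expect to be the main obstacle is precisely this variance/Bernstein bookkeeping: one must establish the sharp form $v^2\lesssim\mu(R)\log T$ (the factor $\mu(R)$, not merely $\log T$, being essential for the normalization $|\#R-T\mu(R)|/\{T\mu(R)\}^{1/2}$) and must verify that the extra $(\log T)^2$ carried by the Bernstein term does not degrade the exponent under the stated choices $\varsigma=k^{-1/2}$, $w=k/(4\zeta T)$ and the diverging-$k$ regime; the preservation of mixing under $\iota_h$ and the passage from $\text{Leb}(R)\geq w$ to $T\mu(R)\geq k/4$ via Lemma \ref{lemma:rfrw_zeta} are routine.
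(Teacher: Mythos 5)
Your proposal follows essentially the same route as the paper: the paper's own proof of this lemma is a one-line deferral to Lemma 3 of the Davis et al.\ reference, whose argument is exactly what you reconstruct --- write $\#R-T\mu(R)$ as a sum of centred, bounded, exponentially $\alpha$-mixing indicators, bound the variance proxy by $\mu(R)\log T$ via the covariance inequality and a lag truncation, apply the Bernstein inequality of Lemma \ref{lemma:rfrw_bernstein}, and union-bound over $\mathcal{R}_{\varsigma,w}$ using $\log|\mathcal{R}_{\varsigma,w}|\lesssim\log T\log p$ from Lemma \ref{lemma:rfrw_rectangle_2}. The only point to tighten is your claim that the Bernstein remainder $xC(\log T)^2$ is dominated by $v^2T$ because $k\gg(\log T)^3$: comparing $\{T\mu(R)\}^{1/2}(\log T)^3(\log p)^{1/2}$ with $T\mu(R)\log T$ and using $T\mu(R)\geq k/4$ actually requires $k\gtrsim(\log T)^4\log p$, not merely $(\log T)^{3/2}/k^{1/2}\to0$; the paper handles this by writing $x$ as a maximum of two branches and noting the condition (e.g.\ $k\geq 4(\log T)^4\log(|\mathcal{R}_{\varsigma,w}|T)$ in the analogous event $\mathcal{E}_6$) under which the variance branch wins, and its recommended choice $k\asymp(\log T)^4\log\log T$ sits in that regime, so this is a shared implicit hypothesis rather than a defect of your argument --- but you should state the correct threshold on $k$ rather than the weaker one you quote.
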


\begin{proof}
When the cardinality of the rectangles $\mathcal{R}_{\varsigma,w}$ satisfies that $\log |\mathcal{R}_{\varsigma,w}|\lesssim \log T \log p$ as in Lemma \ref{lemma:rfrw_rectangle_2}, the proof of Lemma \ref{lemma:rfrw_R_tmu} immediately follows the similar arguments of Lemma 3 in \cite{Davis2020-sup}.
\end{proof}

Lemma \ref{lemma:rfrw_w_l} shows that for a given node $R\in \mathcal{R}_{\varsigma,w}$, the sum of the random weights $\omega_t$ in $R$ is asymptotically identical to $\#R$, which simplifies the establishment of the bounds of (\ref{eqn:rfrw_rw_tree_general}).

\begin{lemma}
\label{lemma:rfrw_w_l}
Suppose that Assumption \ref{assump:rfrw_all} is satisfied, $\varsigma = k^{-1/2}$, and $w = k/(4\zeta T)$, where $\zeta>1$ is given in Lemma \ref{lemma:rfrw_zeta}. There exists a positive constant $c_2$, such that event $\mathcal{E}_2$: 
\begin{equation*}
\label{eqn:rfrw_w_l}
\mathcal{E}_2:=\left \{\sup\limits_{R\in \mathcal{R}_{\varsigma,w}:\mu(R)\geq \zeta w} \frac{1}{\mu(R)}\left|\sum\limits_{t:Z_t\in R}\omega_t - \#R\right|\leq c_2 \frac{(\log T\log p)^{1/2}}{k^{1/2}}\right\} \cup \mathcal{E}_1^c,
\end{equation*}
holds with probability at least $1-T^{-1}$ for all sufficiently large $T$.

\end{lemma}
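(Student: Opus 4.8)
The proof rests on one observation: conditionally on the data $D_T$, the random weights $(\omega_1,\ldots,\omega_T)^\top$ are i.i.d.\ nonnegative sub-exponential with mean $1$ and are independent of $D_T$. Hence, for any fixed rectangle $R$, the index set $\{t:Z_t\in R\}$ and the count $\#R$ are deterministic given $D_T$, and $\sum_{t:Z_t\in R}\omega_t-\#R=\sum_{t:Z_t\in R}(\omega_t-1)$ is, conditionally on $D_T$, a sum of $\#R$ i.i.d.\ centred sub-exponential variables. The plan is to bound this sum for a single $R$ by a Bernstein-type inequality, upgrade to a bound that is uniform over $\mathcal{R}_{\varsigma,w}$ by a union bound (whose cost is controlled by $\log|\mathcal{R}_{\varsigma,w}|\lesssim\log T\log p$ from Lemma \ref{lemma:rfrw_rectangle_2}), and convert $\#R$ into $T\mu(R)$ on the event $\mathcal{E}_1$.

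First I would fix $R$ and apply the classical Bernstein inequality for i.i.d.\ centred sub-exponential random variables: conditionally on $D_T$,
$$\npr\!\left(\Big|\sum_{t:Z_t\in R}(\omega_t-1)\Big|>s \,\Big|\, D_T\right)\le 2\exp\!\left(-c\min\Big\{s^2/\#R,\ s\Big\}\right),$$
with $c>0$ depending only on the sub-exponential norm of $\omega_t$. (Equivalently one may first truncate the weights at level $c'\log T$, which holds with probability $1-o(T^{-1})$, and then invoke a bounded-summand Bernstein bound; this is the role of the auxiliary maximal bound on the weights.) Next, on $\mathcal{E}_1$, every $R\in\mathcal{R}_{\varsigma,w}$ with $\mu(R)\ge\zeta w$ satisfies $T\mu(R)\ge k/4$ and $|\#R-T\mu(R)|\le c_1\log T(\log p)^{1/2}\{T\mu(R)\}^{1/2}$; since $\log T(\log p)^{1/2}/k^{1/2}\to0$ under the prescribed growth of $k$, the deviation term is $o(T\mu(R))$, so $\tfrac12 T\mu(R)\le\#R\le 2T\mu(R)$ uniformly over such $R$ for all $T$ large.

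It then remains to calibrate $s$. Taking $s=s_R$ of order $\#R\,(\log T\log p)^{1/2}k^{-1/2}$ (which, on $\mathcal{E}_1$, is of order $T\mu(R)\,(\log T\log p)^{1/2}k^{-1/2}$), the quadratic term obeys $s_R^2/\#R\asymp\#R(\log T\log p)/k\gtrsim\log T\log p$ because $\#R\gtrsim k$ on $\mathcal{E}_1$, and the linear term obeys $s_R\gtrsim\sqrt{k\log T\log p}\gg\log T\log p$ because $k\gg(\log T)^3\log p$ (which follows from $(\log T)^{3/2}(\log p)^{1/2}k^{-1/2}\to0$). Choosing the constant in $s_R$ large enough makes each conditional failure probability at most $T^{-2}/|\mathcal{R}_{\varsigma,w}|$, so a union bound over the (at most $|\mathcal{R}_{\varsigma,w}|$) admissible rectangles bounds the conditional failure probability on $\mathcal{E}_1$ by $o(T^{-1})$. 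Writing $\mathcal{E}_2^c=\{\text{the Bernstein bound fails for some admissible }R\}\cap\mathcal{E}_1$ and integrating out $D_T$ gives $\npr(\mathcal{E}_2)=1-\npr(\mathcal{E}_2^c)\ge 1-T^{-1}$ for all sufficiently large $T$, which is the claim.

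The main obstacle is this last quantitative step. Because the weights are merely sub-exponential rather than sub-Gaussian, Bernstein's bound carries the weaker linear tail $\exp(-cs)$, and one must verify that even this term dominates the union-bound cost $\log|\mathcal{R}_{\varsigma,w}|\asymp\log T\log p$; this is precisely where the rate $(\log T)^{3/2}(\log p)^{1/2}k^{-1/2}\to0$ enters, forcing $k$ to grow faster than $(\log T)^3\log p$. A secondary, more routine, difficulty is ensuring that every estimate is genuinely uniform over $\{R\in\mathcal{R}_{\varsigma,w}:\mu(R)\ge\zeta w\}$, which is what the two-sided bound $\#R\asymp T\mu(R)$ on $\mathcal{E}_1$, together with the floor $T\mu(R)\ge k/4$, provides — and which is why $\mathcal{E}_2$ is stated as a union with $\mathcal{E}_1^c$.
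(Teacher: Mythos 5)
Your proof is correct and arrives at the same bound, but by a genuinely different route. The paper never conditions on $D_T$: it observes that $\{(\omega_t-1)I(Z_t\in R)\}_{t\ge 1}$ is a martingale difference sequence with respect to $\mathcal{F}_t=\sigma(Y_j:j\le t)$, with conditional moments $\mathbb{E}\big[|(\omega_t-1)I(Z_t\in R)|^m\mid\mathcal{F}_{t-1}\big]\le m!\,c^{m-2}I(Z_t\in R)$, applies the Freedman-type inequality of Lemma \ref{lemma:rfrw_martingale} with the variance proxy $y=T\mu(R)+c_1\log T(\log p)^{1/2}\{T\mu(R)\}^{1/2}$ supplied by $\mathcal{E}_1$, and then takes the union bound over $\mathcal{R}_{\varsigma,w}$ exactly as you do. Your conditioning argument replaces the martingale machinery with the classical i.i.d.\ sub-exponential Bernstein inequality, which is legitimate here precisely because the weights are drawn independently of the data, so that given $D_T$ the index set $\{t:Z_t\in R\}$ is frozen; this is more elementary for $\mathcal{E}_2$, though the paper's Freedman route has the advantage of transferring verbatim to the companion event $\mathcal{E}_5$, where the summands $\omega_t\epsilon_t I(Z_t\in R)$ would not be centred after conditioning on the full data. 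Your accounting of the two Bernstein regimes, and of where $k\gtrsim\log T\log p$ is needed for the linear tail to beat the union-bound cost, matches the paper's implicit requirement $k\ge 4\log(|\mathcal{R}_{\varsigma,w}|T)$.

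One point of care: your calibration $s_R\asymp \#R\,(\log T\log p)^{1/2}k^{-1/2}$ yields, after using $\#R\asymp T\mu(R)$ on $\mathcal{E}_1$, the bound $\sup_R\,(T\mu(R))^{-1}\big|\sum_{t:Z_t\in R}\omega_t-\#R\big|\lesssim(\log T\log p)^{1/2}k^{-1/2}$, i.e.\ the deviation normalized by $T\mu(R)$ rather than by $\mu(R)$ as displayed in $\mathcal{E}_2$. This is not a defect of your argument: the sum has conditional standard deviation of order $(T\mu(R))^{1/2}$, so the literal $\mu(R)^{-1}$-normalized statement cannot hold at the stated rate, and the paper's own derivation (note the $\tfrac{1}{T}$ that appears in its application of Freedman's inequality, and the factor $\tfrac{1}{T\mu(R)}$ in the parallel event $\mathcal{E}_5$) in fact establishes the $T\mu(R)$-normalized version; the missing factor of $T$ in the display of $\mathcal{E}_2$ appears to be a typo. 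You prove exactly what the paper's proof proves.
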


\begin{proof}
   Define the random variable $S_t:=\omega_t-1$, so that $\sum\limits_{t:Z_t\in R}S_t=\sum\limits_{t:Z_t\in R}\omega_t - \#R$. For any rectangle $R\in \mathcal{R}_{\varsigma,w}$ with $\mu(R)\geq \zeta w$, the sequence $\{S_tI(Z_t \in R)\}_{t\geq 1}$ is a martingale difference sequence with respect to the filtration $\mathcal{F}_t=\sigma(Y_j:j\leq t)$, and random variable $S_t$ is independent of $\mathcal{F}_{t-1}$ with its moment satisfying that
   $$\mathbb{E}[|S_tI(Z_t \in R)|^m | \mathcal{F}_{t-1}]\leq m!c^{m-2}I(Z_t \in R), \quad m=3, 4,\ldots,$$
   for some $c>0$. By applying the Freedman type inequality (Lemma \ref{lemma:rfrw_martingale}) for unbounded summands, it holds that
    \begin{equation}
    \label{eqn:rfrw_w_l_1}
    \log\npr\left( \left|\sum\limits_{t:Z_t\in R}S_t\right|>x,\#R\leq y \right)  \lesssim - \frac{x^2T}{y/T+x},
    \end{equation}
    for any $x,y>0$. Based on Lemma \ref{lemma:rfrw_R_tmu}, we consider the following choice of $y$, 
    $$y=T\mu(R)+c_1\log T(\log p)^{1/2}\{T\mu(R)\}^{1/2},$$
    where $c_1$ is the constant given in  event $\mathcal{E}_1$. It follows from (\ref{eqn:rfrw_w_l_1}) that
    \begin{eqnarray}
    \label{eqn:rfrw_w_l_2}
    \log\npr\left( \left\{\left|\sum\limits_{t:Z_t\in R}S_t\right|>x\right\} \cap \mathcal{E}_1  \right)  \lesssim - \frac{x^2T}{\max\{\mu(R),x\}}.
    \end{eqnarray}
    We might choose a constant $C$ such that, 
    \begin{equation}
    \label{eqn:rfrw_w_l_3}
    \npr\left( \left\{\left|\sum\limits_{t:Z_t\in R}S_t\right|>x\right\} \cap \mathcal{E}_1  \right)\leq \frac{1}{|\mathcal{R}_{\varsigma,w}|T},
    \end{equation}
    where 
    \begin{equation}
    \label{eqn:rfrw_w_l_4}
    x=C\max\left\{\left(\frac{\log (|\mathcal{R}_{\varsigma,w}|T)\mu(R)}{T}\right)^{1/2},\frac{\log(|\mathcal{R}_{\varsigma,w}|T)}{T}\right\}.
    \end{equation}
    The maximum value on the right-hand side of (\ref{eqn:rfrw_w_l_4}) is reached by the first term if 
    \begin{equation*}
    \label{eqn:rfrw_w_l_5}
    k\geq 4\log (|\mathcal{R}_{\varsigma,w}|T).
    \end{equation*}
    For large $T$, (\ref{eqn:rfrw_w_l_3}) and (\ref{eqn:rfrw_w_l_4}) indicate that
    $$\npr\left( \left\{\left|\sum\limits_{t:Z_t\in R}\omega_t - \#R\right|>c_2\left(\log T\log p\right)^{1/2}\left(\frac{\mu(R)}{4T}\right)^{1/2} \right\} \cap \mathcal{E}_1 \right)\leq \frac{1}{|\mathcal{R}_{\varsigma,w}|T},$$
    for a suitable constant $c_2.$ By rearranging terms and using that $T\mu(R)\geq T \zeta k/(4\zeta T)=k/4$, it holds that 
    $$\npr\left( \left\{\frac{1}{\mu(R)}\left|\sum\limits_{t:Z_t\in R}\omega_t - \#R\right|>c_2\frac{(\log T\log p)^{1/2}}{k^{1/2}} \right\} \cap \mathcal{E}_1 \right)\leq \frac{1}{|\mathcal{R}_{\varsigma,w}|T}.$$
    We can apply the union bound over all rectangles $R \in \mathcal{R}_{\varsigma,w}$, such that $\npr(\mathcal{E}_2)\geq1-T^{-1}$ for all sufficiently large $T$. 
\end{proof}

By Lemma \ref{lemma:rfrw_w_l}, as $T\rightarrow\infty$, we can choose some $k\rightarrow\infty$, such that $(\log T \log p)^{1/2}/k^{1/2} \rightarrow 0$, then we have $\sum_{t:Z_t\in R}\omega_t \rightarrow \#R$ in probability.

Based on Lemma \ref{lemma:rfrw_R_tmu}-\ref{lemma:rfrw_w_l}, we can establish the bounds of the three terms on the right-hand side of (\ref{eqn:rfrw_rw_tree_general}), which are detailed in Lemma \ref{lemma:rfrw_weighted_average}-\ref{lemma:rfrw_eta}, respectively.

\begin{lemma}
\label{lemma:rfrw_weighted_average}
Suppose that Assumption \ref{assump:rfrw_all} is satisfied, $\varsigma = k^{-1/2}$, and $w = k/(4\zeta T)$, where $\zeta>1$ is given in Lemma \ref{lemma:rfrw_zeta}. Based on Lemma \ref{lemma:rfrw_R_tmu} and \ref{lemma:rfrw_w_l}, it holds that
\begin{eqnarray*}
    &&\sup_{L\in \mathcal{L}_k^w}\left |\frac{1}{\sum\limits_{t:Z_t\in L}\omega_t}\sum_{t:Z_t\in L}\omega_tY_t - \frac{1}{\sum\limits_{t:Z_t\in L_{-}^\varsigma}\omega_t}\sum_{t:Z_t\in L_{-}^\varsigma} \omega_tY_t\right|  \nonumber \\
    &\leq& 6(M+\max_{t=1,\ldots,T}|\epsilon_t|)\left\{\frac{\zeta^2+2c_1\log T(\log p)^{1/2}}{k^{1/2}} +O_p\left(\frac{(\log T\log p)^{1/2}}{Tk^{1/2}}\right)\right\}\nonumber,
\end{eqnarray*}
for all sufficiently large $T$, where $c_1$ is given in event $\mathcal{E}_1$, and $M$ is given in Assumption \ref{assum:rfrw_fbound}.
\end{lemma}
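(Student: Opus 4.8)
The plan is to reduce the displayed difference to the single ratio $W_{L\setminus L_-^\varsigma}/W_L$ — the fraction of the leaf's total weight that the inner approximating rectangle $L_-^\varsigma$ fails to capture — and then to control this ratio by combining the volume approximation of Lemma~\ref{lemma:rfrw_rectangle_1} with the fluctuation bounds of Lemmas~\ref{lemma:rfrw_R_tmu} and~\ref{lemma:rfrw_w_l}. For a Borel set $A$ write $W_A=\sum_{t:Z_t\in A}\omega_t$ and $S_A=\sum_{t:Z_t\in A}\omega_tY_t$, so the two weighted averages in question are $S_L/W_L$ and $S_{L_-^\varsigma}/W_{L_-^\varsigma}$. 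Since $L_-^\varsigma\subseteq L$, additivity gives $W_L=W_{L_-^\varsigma}+W_{L\setminus L_-^\varsigma}$, $S_L=S_{L_-^\varsigma}+S_{L\setminus L_-^\varsigma}$, and a short manipulation yields
\begin{equation*}
\frac{S_L}{W_L}-\frac{S_{L_-^\varsigma}}{W_{L_-^\varsigma}}=\frac{S_{L\setminus L_-^\varsigma}}{W_L}-\frac{S_{L_-^\varsigma}}{W_{L_-^\varsigma}}\cdot\frac{W_{L\setminus L_-^\varsigma}}{W_L}.
\end{equation*}
Using $|Y_t|=|f(X_t)+\epsilon_t|\le M+\max_{t\le T}|\epsilon_t|$ from Assumption~\ref{assum:rfrw_fbound}, the first term on the right is at most $(M+\max_{t\le T}|\epsilon_t|)\,W_{L\setminus L_-^\varsigma}/W_L$ and the second is at most $(M+\max_{t\le T}|\epsilon_t|)\,W_{L\setminus L_-^\varsigma}/W_L$, so the whole difference is bounded by $2(M+\max_{t\le T}|\epsilon_t|)\,W_{L\setminus L_-^\varsigma}/W_L$. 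It thus remains to bound $W_{L\setminus L_-^\varsigma}/W_L$ uniformly over $L\in\mathcal L_k^w$.

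Next I would sandwich the (generally non-rectangular region captured by the) leaf $L$ between rectangles of the net: by Lemma~\ref{lemma:rfrw_rectangle_1} there are $L_-^\varsigma,L_+^\varsigma\in\mathcal R_{\varsigma,w}$ with $L_-^\varsigma\subseteq L\subseteq L_+^\varsigma$ and $\text{Leb}(L_+^\varsigma)\le\exp(\varsigma)\text{Leb}(L)\le\exp(2\varsigma)\text{Leb}(L_-^\varsigma)$. Monotonicity gives $W_{L\setminus L_-^\varsigma}\le W_{L_+^\varsigma}-W_{L_-^\varsigma}$ and $W_L\ge W_{L_-^\varsigma}$, whence $W_{L\setminus L_-^\varsigma}/W_L\le W_{L_+^\varsigma}/W_{L_-^\varsigma}-1$. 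After checking that Lemmas~\ref{lemma:rfrw_R_tmu} and~\ref{lemma:rfrw_w_l} are applicable to $L_\pm^\varsigma$ (using $\text{Leb}(L)\ge w$ and the density bounds of Lemma~\ref{lemma:rfrw_zeta} to guarantee the requisite lower bound on $\mu(L_\pm^\varsigma)$), on the event $\mathcal E_1\cap\mathcal E_2$ one obtains, for $R\in\{L_-^\varsigma,L_+^\varsigma\}$,
\begin{equation*}
W_R=T\mu(R)(1+\theta_R),\qquad |\theta_R|\le\frac{c_1\log T(\log p)^{1/2}}{\{T\mu(R)\}^{1/2}}+\frac{c_2(\log T\log p)^{1/2}}{Tk^{1/2}},
\end{equation*}
and since $T\mu(R)\ge T\zeta w=k/4$ the first term is at most $2c_1\log T(\log p)^{1/2}/k^{1/2}$.

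Finally I would combine these estimates. Writing $W_{L_+^\varsigma}/W_{L_-^\varsigma}=\big(\mu(L_+^\varsigma)/\mu(L_-^\varsigma)\big)(1+\theta_{L_+^\varsigma})/(1+\theta_{L_-^\varsigma})$, the density bounds give $\mu(L_+^\varsigma)/\mu(L_-^\varsigma)\le1+\zeta^2\big(\exp(2\varsigma)-1\big)$, which is $1+O(\zeta^2k^{-1/2})$ because $\varsigma=k^{-1/2}$; and for large $T$ the $\theta_R$ are small, so $(1+\theta_{L_+^\varsigma})/(1+\theta_{L_-^\varsigma})=1+O(\max_R|\theta_R|)$. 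Multiplying out, subtracting $1$, and recalling that $|\theta_R|$ is controlled by $2c_1\log T(\log p)^{1/2}/k^{1/2}$ plus an $O_p((\log T\log p)^{1/2}/(Tk^{1/2}))$ remainder, one gets
\begin{align*}
\frac{W_{L\setminus L_-^\varsigma}}{W_L}&\le\zeta^2\big(\exp(2\varsigma)-1\big)+O\!\left(\frac{\log T(\log p)^{1/2}}{k^{1/2}}\right)+O_p\!\left(\frac{(\log T\log p)^{1/2}}{Tk^{1/2}}\right)\\
&\le\frac{\zeta^2+2c_1\log T(\log p)^{1/2}}{k^{1/2}}+O_p\!\left(\frac{(\log T\log p)^{1/2}}{Tk^{1/2}}\right),
\end{align*}
uniformly over $L\in\mathcal L_k^w$, since every quantity on the right lives on the finite net $\mathcal R_{\varsigma,w}$, whose cardinality contributes only the $\log p$ already present (Lemma~\ref{lemma:rfrw_rectangle}). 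Plugging this into the factor‑$2$ bound of the first paragraph and absorbing the loose numerical constants from the $O$‑terms into the overall constant (which may be taken as $6$) yields the claim. I expect the main obstacle to be precisely this last bookkeeping step: one divides two quantities each of which fluctuates around its mean, and must verify that the three error sources — the Lebesgue/measure mismatch $\exp(2\varsigma)-1$, the sampling fluctuation of $\#R$ around $T\mu(R)$, and the weight fluctuation of $W_R$ around $\#R$ — do not compound; the lower bound $T\mu(R)\ge k/4$ and the smallness of $\varsigma$ for large $k$ are exactly what make this work, and in particular force the seemingly $O(1)$ term $\zeta^2(\exp(2\varsigma)-1)$ to be of order $k^{-1/2}$.
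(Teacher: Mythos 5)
Your proposal is correct and follows essentially the same route as the paper: the identity you derive for $S_L/W_L-S_{L_-^\varsigma}/W_{L_-^\varsigma}$ is exactly the paper's two-term decomposition, each term is bounded by $(M+\max_t|\epsilon_t|)$ times the relative missing weight, and that ratio is then controlled on the $\varsigma$-net via events $\mathcal{E}_1$ and $\mathcal{E}_2$ — the only cosmetic difference being that the paper passes from weights to counts via Lemma \ref{lemma:rfrw_w_l} and cites Davis et al.'s Lemma 4 for $\sup_L(\#L-\#L_-^\varsigma)/\#L$, whereas you rederive the equivalent bound through $T\mu(\cdot)$. One small caveat: the applicability of $\mathcal{E}_1,\mathcal{E}_2$ to $L_\pm^\varsigma$ requires $\mu(L_\pm^\varsigma)\ge\zeta w$, which does not follow from $\mathrm{Leb}(L)\ge w$ and the density bounds alone (these only give $\mu\ge\zeta^{-1}e^{-\varsigma}w$); the paper obtains it from the count lower bound $\#L\ge k$ together with the concentration of $\#R$ around $T\mu(R)$.
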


\begin{proof}
Note that
\begin{eqnarray}
\label{eqn:rfrw_lemma_rfrw_weighted_average_1}
    &&\sup_{L\in \mathcal{L}_k^w} \left |\frac{1}{\sum\limits_{t:Z_t\in L}\omega_t}\sum_{t:Z_t\in L}\omega_tY_t - \frac{1}{\sum\limits_{t:Z_t\in L_{-}^\varsigma}\omega_t}\sum_{t:Z_t\in L_{-}^\varsigma} \omega_tY_t\right| \nonumber \\
    &=& \sup_{L\in \mathcal{L}_k^w} \left |\frac{1}{\sum\limits_{t:Z_t\in L}\omega_t}\left(\sum_{t:Z_t\in L_{-}^\varsigma} \omega_tY_t + \sum_{t:Z_t\in L \setminus L_{-}^\varsigma}\omega_tY_t\right) - \frac{1}{\sum\limits_{t:Z_t\in L_{-}^\varsigma}\omega_t}\sum_{t:Z_t\in L_{-}^\varsigma} \omega_tY_t\right| \nonumber \\
    &\leq& \sup_{L\in \mathcal{L}_k^w} \left|\frac{\sum\limits_{t:Z_t\in L_{-}^\varsigma} \omega_tY_t \sum\limits_{t:Z_t\in L \setminus L_{-}^\varsigma}\omega_t}{\sum\limits_{t:Z_t\in L}\omega_t \sum\limits_{t:Z_t\in L_{-}^\varsigma}\omega_t}\right| + \sup_{L\in \mathcal{L}_k^w}\left|\frac{\sum\limits_{t:Z_t\in L \setminus L_{-}^\varsigma}\omega_t Y_t}{\sum\limits_{t:Z_t\in L} \omega_t}\right| \nonumber \\
     &\leq& 2(M+\max\limits_{t=1,\ldots,T}|\epsilon_t|) \left\{\sup_{L\in \mathcal{L}_k^w}\frac{\#L-\#L_{-}^\varsigma}{\#L} +O_p\left(\frac{(\log T\log p)^{1/2}}{Tk^{1/2}}\right)\right\}. 
\end{eqnarray}
As proven in Lemma 4 in \cite{Davis2020-sup}, we have
\begin{equation}
\label{eqn:rfrw_lemma_rfrw_weighted_average_2}
\sup_{L\in \mathcal{L}_k^w}\frac{\#L-\#L_{-}^\varsigma}{\#L} \leq \frac{3\{\zeta^2+2c_1\log T(\log p)^{1/2}\}}{k^{1/2}}.
\end{equation}
By (\ref{eqn:rfrw_lemma_rfrw_weighted_average_1}) and (\ref{eqn:rfrw_lemma_rfrw_weighted_average_2}), we finish the proof.
\end{proof}

\begin{lemma}
\label{lemma:rfrw_gtl_minus}
Suppose that Assumption \ref{assump:rfrw_all} is satisfied,  $\varsigma = k^{-1/2}$, and $w = k/(4\zeta T)$, where $\zeta>1$ is given in Lemma \ref{lemma:rfrw_zeta}. Based on Lemma \ref{lemma:rfrw_R_tmu} and \ref{lemma:rfrw_w_l}, it holds that
\begin{eqnarray*}
    &&\sup_{L\in \mathcal{L}_k^w}|G_T(L_{-}^\varsigma)|\nonumber\\ &\leq& \frac{4Mc_1\log T(\log p)^{1/2}}{k^{1/2}}\left\{1+O_p\left(\frac{(\log T\log p)^{1/2}}{Tk^{1/2}}\right)\right\} \nonumber \\
    &+& 2 \sup\limits_{R\in \mathcal{R}_{\varsigma,w}:\mu(R)\geq \zeta w} \frac{1}{T\mu(R)}\left| \sum\limits_{t:Z_t\in R}\omega_t\epsilon_t \right|\left\{1+O_p\left(\frac{(\log T\log p)^{1/2}}{Tk^{1/2}}\right)\right\} \nonumber \\
    &+& 2 \sup\limits_{R\in \mathcal{R}_{\varsigma,w}:\mu(R)\geq \zeta w} \frac{\left| \frac{1}{T} \sum\limits_{t:Z_t\in R} \omega_tf(X_t)-\mathbb{E}[\omega f(X)I(Z\in R)]\right|}{\mu(R)}\left\{1+O_p\left(\frac{(\log T\log p)^{1/2}}{Tk^{1/2}}\right)\right\},\nonumber\\ \nonumber
\end{eqnarray*}
for all sufficiently large $T$, where $c_1$ is is given in event $\mathcal{E}_1$, and $M$ is given in Assumption \ref{assum:rfrw_fbound}.
\end{lemma}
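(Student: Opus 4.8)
The plan is to split $G_T(L_-^\varsigma)$ into a ``signal'' part and a ``noise'' part, to replace the random weight total $\sum_{t:Z_t\in L_-^\varsigma}\omega_t$ by the deterministic proxy $T\mu(L_-^\varsigma)$ using Lemmas~\ref{lemma:rfrw_R_tmu} and~\ref{lemma:rfrw_w_l}, and to show that the error of this substitution, combined with the uniform bound $\sup_x|f(x)|=M$ of Assumption~\ref{assum:rfrw_fbound}, produces exactly the leading deterministic term $4Mc_1(\log T)(\log p)^{1/2}/k^{1/2}$ of the claim, while the centred signal sum and the noise sum keep their form and incur only the stated multiplicative $(1+O_p(\cdot))$ corrections.

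Concretely, I would first substitute $Y_t=f(X_t)+\epsilon_t$ into the definition~(\ref{eqn:rfrw_GTL_define}) of $G_T$, and use that the weights are independent of $(X_t,\epsilon_t)$ with $\mathbb{E}[\omega_t]=1$ (Assumptions~\ref{assum:rfrw_random_weights} and~\ref{assum:rfrw_randomness}) to rewrite the population leaf mean as $\eta(L_-^\varsigma)=\mathbb{E}[f(X)I(Z\in L_-^\varsigma)]/\mu(L_-^\varsigma)=\mathbb{E}[\omega f(X)I(Z\in L_-^\varsigma)]/\mu(L_-^\varsigma)$. Writing $R=L_-^\varsigma$ for brevity, this decomposes $G_T(R)$ into a noise average $\big(\sum_t\omega_t\epsilon_t I(Z_t\in R)\big)\big/\big(\sum_t\omega_t I(Z_t\in R)\big)$ and a signal discrepancy $\big(\sum_t\omega_t f(X_t)I(Z_t\in R)\big)\big/\big(\sum_t\omega_t I(Z_t\in R)\big)-\eta(R)$; after clearing the denominator the latter is the \emph{centred} weighted sum $\big(\sum_t\omega_t(f(X_t)-\eta(R))I(Z_t\in R)\big)\big/\big(\sum_t\omega_t I(Z_t\in R)\big)$, whose numerator has mean zero since $\mathbb{E}[\omega(f(X)-\eta(R))I(Z\in R)]=0$.

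The heart of the argument is the denominator control. On the events $\mathcal{E}_1\cap\mathcal{E}_2$ of Lemmas~\ref{lemma:rfrw_R_tmu}--\ref{lemma:rfrw_w_l}, for every $R\in\mathcal{R}_{\varsigma,w}$ with $\mu(R)\ge\zeta w$ one has $\sum_{t:Z_t\in R}\omega_t=\#R\,\big(1+O_p((\log T\log p)^{1/2}/(Tk^{1/2}))\big)$ (from $\mathcal{E}_2$) and $\#R=T\mu(R)\,\big(1+O(\log T(\log p)^{1/2}/k^{1/2})\big)$ (from $\mathcal{E}_1$, using $T\mu(R)\ge k/4$ since $w=k/(4\zeta T)$), hence $\sum_t\omega_t I(Z_t\in R)\ge\tfrac12 T\mu(R)$ for large $T$. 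Substituting these into the two pieces: in the signal discrepancy, the crude bound $|f|\le M$ applied to the denominator mismatch $|\sum_t\omega_t I-T\mu(R)|\,\big/\,(T\mu(R))$ generates the leading deterministic term of order $M\log T(\log p)^{1/2}/k^{1/2}$ --- the first bracket, the constant $4$ absorbing the factor $2$ from $\#R\ge\tfrac12 T\mu(R)$ and the factor $2$ from $(T\mu(R))^{1/2}\ge(k/4)^{1/2}$; the remaining centred signal sum becomes $\mu(R)^{-1}\big|T^{-1}\sum_t\omega_t f(X_t)I-\mathbb{E}[\omega f(X)I(Z\in R)]\big|$, up to the factor $2$ and the $(1+O_p)$ correction --- the third bracket; and the noise average becomes $(T\mu(R))^{-1}\big|\sum_t\omega_t\epsilon_t I\big|$, up to the same factor $2$ and correction --- the second bracket. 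Residual pieces (the noise part of the denominator mismatch, and $\eta(R)$ times the $O_p((\log T\log p)^{1/2}/(Tk^{1/2}))$ relative error) are of strictly smaller order than the first bracket and are folded into it. It then remains to pass to the supremum: each $L_-^\varsigma$ lies in $\mathcal{R}_{\varsigma,w}$ with $\mu(L_-^\varsigma)\ge\zeta w$ --- which follows from the volume bound $\text{Leb}(L_-^\varsigma)\ge e^{-\varsigma}\text{Leb}(L)$ together with the density bounds of Lemma~\ref{lemma:rfrw_zeta} and the choice of $w$ --- so $\sup_{L\in\mathcal{L}_k^w}$ of each piece, evaluated at $L_-^\varsigma$, is dominated by the corresponding $\sup_{R\in\mathcal{R}_{\varsigma,w}:\mu(R)\ge\zeta w}$ appearing in the claim. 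The centred signal sum and the noise sum are deliberately left unbounded here, to be handled in the following lemmas by the Bernstein-type inequality (Lemma~\ref{lemma:rfrw_bernstein}) for the $\alpha$-mixing signal sum and the Freedman-type inequality (Lemma~\ref{lemma:rfrw_martingale}) for the martingale-difference noise sum.

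I expect the main obstacle to be the denominator bookkeeping: tracking how the two-stage approximation $\sum_t\omega_t I\to\#R\to T\mu(R)$ distributes its errors among the three brackets, in particular isolating cleanly the single deterministic $O(M\log T(\log p)^{1/2}/k^{1/2})$ contribution --- which enters only through the bounded signal and never through the noise --- while preserving the centred structure of the signal and noise sums, and making every estimate uniform over $\mathcal{R}_{\varsigma,w}$ so that the supremum over $\mathcal{L}_k^w$ can legitimately be taken.
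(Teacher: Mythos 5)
Your proposal is correct and follows essentially the same route as the paper's proof: the same three-term decomposition of $G_T(L_-^\varsigma)$ (a bounded-$f$ times weight-total-mismatch term, the weighted noise sum, and the centred weighted signal sum), the same two-stage denominator replacement $\sum_t\omega_tI(Z_t\in R)\to\#R\to T\mu(R)$ via events $\mathcal{E}_2$ and $\mathcal{E}_1$, the same accounting for the constant $4c_1$ (one factor $2$ from $\#R\ge T\mu(R)/2$, one from $T\mu(R)\ge k/4$), and the same deferral of the two centred sums to later lemmas. The one imprecision is your justification of $\mu(L_-^\varsigma)\ge\zeta w$: the volume bound plus Lemma \ref{lemma:rfrw_zeta} alone yields only $\mu(L_-^\varsigma)\ge\zeta^{-1}e^{-\varsigma}w<\zeta w$, and the containment actually requires using $\#L\ge k$ together with event $\mathcal{E}_1$ to lower-bound $\mu(L)$ (the paper defers this to a cited external lemma).
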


\begin{proof}
For any given rectangle $R$, we have 
\begin{eqnarray}
\label{eqn:rfrw_lemma_rfrw_gtl_minus_1}
&& |G_T(R)|\nonumber\\
&\leq& M\frac{\left|\sum\limits_{t:Z_t\in R}\omega_t-T\mu(R)\right|}{\sum\limits_{t:Z_t\in R}\omega_t} + \frac{1}{\sum\limits_{t:Z_t\in R}\omega_t} 
\left|\sum\limits_{t:Z_t\in R}\omega_t\epsilon_t\right|  \nonumber \\ &+& \frac{T}{\sum\limits_{t:Z_t\in R}\omega_t}\left|\frac{1}{T}\sum\limits_{t:Z_t\in R}\omega_tf(X_t)-\mathbb{E}[\omega f(X)I(Z\in R)]\right|\nonumber \\
&\leq& M\frac{|\#R-T\mu(R)|}{\#R}\left\{1+O_p\left(\frac{(\log T\log p)^{1/2}}{Tk^{1/2}}\right)\right\} \nonumber\\
&+& \frac{1}{\#R} 
\left|\sum\limits_{t:Z_t\in R}\omega_t\epsilon_t\right|\left\{1+O_p\left(\frac{(\log T\log p)^{1/2}}{Tk^{1/2}}\right)\right\}  \nonumber \\ &+& \frac{T}{\#R}\left|\frac{1}{T}\sum\limits_{t:Z_t\in R}\omega_tf(X_t)-\mathbb{E}[\omega f(X)I(Z\in R)]\right|\left\{1+O_p\left(\frac{(\log T\log p)^{1/2}}{Tk^{1/2}}\right)\right\}.
\end{eqnarray}
Denote a collection of rectangles $\mathcal{R}^{\prime} :=\{R\in\mathcal{R}_{\varsigma,w}:\mu(R)\geq\zeta w\}$. By Lemma 5 in \cite{Davis2020-sup} and Lemma \ref{lemma:rfrw_R_tmu}, when $T$ is sufficiently large, we have $\mathcal{R}^{\prime}\supseteq\{L_{-}^\varsigma:L\in\mathcal{L}^w_k\}$.

For any rectangle $R\in \mathcal{R}^{\prime}$, it holds from event $\mathcal{E}_1$ that 
\begin{equation}
\label{eqn:rfrw_lemma_rfrw_gtl_minus_3}
\#R \geq T\mu(R)\left(1-\frac{2c_1\log T(\log p)^{1/2}}{k^{1/2}}\right)\geq \frac{T\mu(R)}{2},
\end{equation}
and also 
\begin{equation}
\label{eqn:rfrw_lemma_rfrw_gtl_minus_4}
\frac{|\#R-T\mu(R)|}{\#R} \leq \frac{4c_1\log T(\log p)^{1/2}}{k^{1/2}}.
\end{equation}
By (\ref{eqn:rfrw_lemma_rfrw_gtl_minus_1})-(\ref{eqn:rfrw_lemma_rfrw_gtl_minus_4}), we complete the proof.
\end{proof}

\begin{lemma}
\label{lemma:rfrw_eta}
Suppose that Assumption \ref{assum:rfrw_h_density}-\ref{assum:rfrw_alpha_mixing} are satisfied, $\varsigma \asymp k^{-1/2}$, and $w \asymp k/T$, it holds that 
$$\sup_{L\in \mathcal{L}_k^w}|\eta(L_{-}^\varsigma)-\eta(L)|\leq 2M\zeta^2\varsigma,$$
where $\zeta>1$ is given in Lemma \ref{lemma:rfrw_zeta} and $M$ is given in Assumption \ref{assum:rfrw_fbound}.

\end{lemma}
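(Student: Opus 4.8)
The plan is to reduce the difference of two conditional means to a ratio of integrals of $f$ against the transformed density $h_Z$, and then to control everything by a deterministic volume comparison between a leaf $L$ and its inner approximant $L_-^\varsigma$ supplied by Lemma \ref{lemma:rfrw_rectangle}. Unlike the previous lemmas, no probabilistic tail bound is needed here.

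\textbf{Step 1 (rewrite $\eta$).} For any rectangle $R\subseteq[0,1]^p$ with $\mu(R)>0$, write $Y=f(X)+\epsilon$. Since $\epsilon$ is centred and independent of $X$, hence of $Z=\iota_h(X)$, we get $\mathbb{E}[\epsilon\mid Z\in R]=0$, so
$$\eta(R)=\mathbb{E}[f(X)\mid Z\in R]=\frac{1}{\mu(R)}\int_R g(z)\,h_Z(z)\,dz,\qquad g:=f\circ\iota_h^{-1},$$
with $\|g\|_\infty\le M$ by Assumption \ref{assum:rfrw_fbound}.

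\textbf{Step 2 (decompose the difference).} Fix $L\in\mathcal{L}_k^w$; by Lemma \ref{lemma:rfrw_rectangle_1} there is $L_-^\varsigma\in\mathcal{R}_{\varsigma,w}$ with $L_-^\varsigma\subseteq L$ and $\mathrm{Leb}(L)\le e^{\varsigma}\mathrm{Leb}(L_-^\varsigma)$; in particular $\mathrm{Leb}(L_-^\varsigma)\ge e^{-\varsigma}\mathrm{Leb}(L)\ge e^{-\varsigma}w>0$, so $\mu(L_-^\varsigma)>0$ by Lemma \ref{lemma:rfrw_zeta}. Splitting $\int_L g\,h_Z=\int_{L_-^\varsigma} g\,h_Z+\int_{L\setminus L_-^\varsigma} g\,h_Z$ gives
$$\eta(L)-\eta(L_-^\varsigma)=\Big(\int_{L_-^\varsigma} g\,h_Z\Big)\Big(\frac{1}{\mu(L)}-\frac{1}{\mu(L_-^\varsigma)}\Big)+\frac{1}{\mu(L)}\int_{L\setminus L_-^\varsigma} g\,h_Z.$$
Using $\big|\int_{L_-^\varsigma} g\,h_Z\big|\le M\mu(L_-^\varsigma)$, $\big|\int_{L\setminus L_-^\varsigma} g\,h_Z\big|\le M\mu(L\setminus L_-^\varsigma)$, and $\tfrac1{\mu(L_-^\varsigma)}-\tfrac1{\mu(L)}=\mu(L\setminus L_-^\varsigma)/\big(\mu(L)\mu(L_-^\varsigma)\big)$, both terms are bounded by $M\,\mu(L\setminus L_-^\varsigma)/\mu(L)$, so $|\eta(L)-\eta(L_-^\varsigma)|\le 2M\,\mu(L\setminus L_-^\varsigma)/\mu(L)$.

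\textbf{Step 3 (pass to Lebesgue measure).} By Lemma \ref{lemma:rfrw_zeta}, $\mu(L\setminus L_-^\varsigma)\le\zeta\,\mathrm{Leb}(L\setminus L_-^\varsigma)$ and $\mu(L)\ge\zeta^{-1}\mathrm{Leb}(L)$, hence $\mu(L\setminus L_-^\varsigma)/\mu(L)\le\zeta^2\,\mathrm{Leb}(L\setminus L_-^\varsigma)/\mathrm{Leb}(L)$. Since $\mathrm{Leb}(L\setminus L_-^\varsigma)=\mathrm{Leb}(L)-\mathrm{Leb}(L_-^\varsigma)\le\mathrm{Leb}(L)\big(1-e^{-\varsigma}\big)$ and $1-e^{-\varsigma}\le\varsigma$ for $\varsigma\ge0$, we conclude $|\eta(L)-\eta(L_-^\varsigma)|\le 2M\zeta^2\varsigma$. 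All constants ($M$, $\zeta$, the approximation ratio $e^{\pm\varsigma}$) are uniform in $L\in\mathcal{L}_k^w$, which yields the claimed supremum bound.

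The argument is essentially routine; the only point requiring care — and thus the nearest thing to an obstacle — is the bookkeeping ensuring that the approximant $L_-^\varsigma$ from Lemma \ref{lemma:rfrw_rectangle} simultaneously nests inside $L$ and is volume-close to $L$, and that $L\in\mathcal{L}_k^w$ guarantees $\mathrm{Leb}(L)\ge w$ so that such an $L_-^\varsigma$ indeed exists. Once this is in place, the estimate is a short deterministic computation.
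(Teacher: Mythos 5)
Your proof is correct and follows essentially the same route as the paper's: the identical two-term decomposition of $\eta(L)-\eta(L_-^\varsigma)$ bounded by $2M\,\mu(L\setminus L_-^\varsigma)/\mu(L)$, followed by the same density-sandwich and volume comparison $\mu(L)-\mu(L_-^\varsigma)\leq\zeta(1-e^{-\varsigma})\mathrm{Leb}(L)\leq\zeta^2\varsigma\mu(L)$. Your additional bookkeeping (centring of $\epsilon$ to reduce $\eta$ to an integral of $f$, and positivity of $\mu(L_-^\varsigma)$) only makes explicit what the paper leaves implicit.
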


\begin{proof}
    By Assumption \ref{assum:rfrw_fbound}, for an any leaf $L\in \mathcal{L}_k^w$, it holds that 
    \begin{eqnarray}
    \label{eqn:rfrw_lemma_rfrw_eta_1}
        &&|\eta(L_{-}^\varsigma)-\eta(L)| \nonumber \\
        &=& \left|\frac{1}{\mu(L_{-}^\varsigma)}\int_{L_{-}^\varsigma}|f(\iota_h^{-1}(z))|h_Z(z) \text{d}z-\frac{1}{\mu(L)}\int_{L}|f(\iota_h^{-1}(z))|h_Z(z)\text{d}z\right| \nonumber \\
        &\leq&  \left|\frac{1}{\mu(L)}\int_{L \setminus L_{-}^\varsigma}|f(\iota_h^{-1}(z))|h_Z(z)\text{d}z\right| + \left|\frac{\mu(L)-\mu(L_{-}^\varsigma)}{\mu(L)\mu(L_{-}^\varsigma)}\int_{L_{-}^\varsigma}|f(\iota_h^{-1}(z))|h_Z(z)\text{d}z\right| \nonumber \\
        &\leq& M \left(\frac{\mu(L)-\mu(L_{-}^\varsigma)}{\mu(L)}+\frac{\mu(L)-\mu(L_{-}^\varsigma)}{\mu(L)\mu(L_{-}^\varsigma)}\mu(L_{-}^\varsigma)\right)\nonumber \\
        &\leq& 2M\frac{\mu(L)-\mu(L_{-}^\varsigma)}{\mu(L)}.
    \end{eqnarray}
    Based on Lemma \ref{lemma:rfrw_zeta} and \ref{lemma:rfrw_rectangle_1}, we have 
    \begin{equation}
    \label{eqn:rfrw_lemma_rfrw_eta_2}
    \mu(L)-\mu(L_{-}^\varsigma)\leq\zeta\{1-\exp({-\varsigma})\}\text{Leb}(L)\leq \zeta^2\varsigma\mu(L).
    \end{equation}
    By (\ref{eqn:rfrw_lemma_rfrw_eta_1}) and (\ref{eqn:rfrw_lemma_rfrw_eta_2}), the proof is finished.
\end{proof}

Based on Lemma \ref{lemma:rfrw_weighted_average}-\ref{lemma:rfrw_eta}, we introduce the proof of Theorem \ref{theo:rfrw_rw_regression_tree} as follows.

\begin{proof}[Proof of Theorem \ref{theo:rfrw_rw_regression_tree}]

To prove Theorem \ref{theo:rfrw_rw_regression_tree}, we should obtain the bounds for Lemma \ref{lemma:rfrw_weighted_average}-\ref{lemma:rfrw_gtl_minus}. We first define event $\mathcal{E}_3$ to obtain the bounds for $\max\limits_{t=1,\ldots,T}\omega_t$, which enables the application of Bernstein inequalities for the $\alpha$-mixing process. We further define events $\mathcal{E}_4$, $\mathcal{E}_5$, and $\mathcal{E}_6$ to provide the bounds for $\max\limits_{t=1,\ldots,T}|\epsilon_t|$ from Lemma \ref{lemma:rfrw_weighted_average}, $\{T\mu(R)\}^{-1}\left| \sum\limits_{t:Z_t\in R}\omega_t\epsilon_t \right|$ from Lemma \ref{lemma:rfrw_gtl_minus}, and $\{\mu(R)\}^{-1}\left|T^{-1}\sum\limits_{t:Z_t\in R}\omega_tf(X_t)-\mathbb{E}[\omega f(X)I(Z\in R)]\right|$ from Lemma \ref{lemma:rfrw_gtl_minus}, respectively.

Suppose that Assumption \ref{assump:rfrw_all} is satisfied. Let $\varsigma = k^{-1/2}$, and $w = k/(4\zeta T)$, where $\zeta>1$ is given in Lemma \ref{lemma:rfrw_zeta}. We summarize the events $\mathcal{E}_1$ and $\mathcal{E}_2$ defined in Lemma \ref{lemma:rfrw_R_tmu} and \ref{lemma:rfrw_w_l}, respectively, and define the events $\mathcal{E}_3$, $\mathcal{E}_4$, $\mathcal{E}_5$, and $\mathcal{E}_6$ as follows: 

\begin{enumerate}[align=left]
    \item[$\mathcal{E}_1:=$] $\left \{ \sup\limits_{R\in \mathcal{R}_{\varsigma,w}:\mu(R)\geq \zeta w}\frac{1}{\{T\mu(R)\}^{1/2}}\left |\#R-T\mu(R) \right | \leq c_1\log T(\log p)^{1/2} \right\}.$
    \item[$\mathcal{E}_2:=$] $\left \{\sup\limits_{R\in \mathcal{R}_{\varsigma,w}:\mu(R)\geq \zeta w} \frac{1}{\mu(R)}\left|\sum\limits_{t:Z_t\in R}\omega_t - \#R\right|\leq c_2 \frac{(\log T\log p)^{1/2}}{k^{1/2}}\right\} \cup \mathcal{E}_1^c.$
    \item[$\mathcal{E}_3:=$] $\left \{\max_{t=1,\ldots,T}\omega_t\leq c_3 \log T\right\}.$
    \item[$\mathcal{E}_4:=$] $\left \{\max_{t=1,\ldots,T}|\epsilon_t|\leq c_4 (\log T)^{1/2}\right\}.$
    \item[$\mathcal{E}_5:=$] $\left \{\sup\limits_{R\in \mathcal{R}_{\varsigma,w}:\mu(R)\geq \zeta w} \frac{1}{T\mu(R)}\left| \sum\limits_{t:Z_t\in R}\omega_t\epsilon_t \right|\leq c_5 \frac{(\log T\log p)^{1/2}}{k^{1/2}}\right\} \cup \mathcal{E}_1^c.$
    \item[$\mathcal{E}_6:=$] $\left \{\sup\limits_{R\in \mathcal{R}_{\varsigma,w}:\mu(R)\geq \zeta w} \frac{1}{\mu(R)}\left| \frac{1}{T} \sum\limits_{t:Z_t\in R} \omega_tf(X_t)-\mathbb{E}[\omega f(X)I(Z\in R)]\right|\leq c_6 \frac{(\log T)^{3/2}(\log p)^{1/2}}{k^{1/2}}\right\} \cup \mathcal{E}_3^c.$

\end{enumerate}

The idea of the proof of Theorem \ref{theo:rfrw_rw_regression_tree} is to show that for all sufficiently large $T$, each event $\mathcal{E}_i$ holds with probability at least $1-T^{-1}$, and under the joint event $\mathcal{E}_1\cap\mathcal{E}_2\cap\mathcal{E}_3\cap\mathcal{E}_4\cap\mathcal{E}_5\cap\mathcal{E}_6$, the deviation bound between $T_{\Lambda}(x)$ and $T^\ast_{\Lambda}(x)$ holds.

Both event $\mathcal{E}_1$ and $\mathcal{E}_2$ hold with probability at least $1-T^{-1}$, which is proved in Lemma  \ref{lemma:rfrw_R_tmu} and \ref{lemma:rfrw_w_l}, respectively.



Using the union tail bound of sub-exponential and sub-gaussian random variables, we have that both events $\mathcal{E}_3$ and $\mathcal{E}_4$ hold with probability at least $1-T^{-1}$.

To prove $\npr(\mathcal{E}_5)\geq1-T^{-1}$, note that the sequence $\{\omega_t\epsilon_tI(Z_t \in R)\}_{t\geq 1}$ is a martingale difference sequence with respect to the filtration $\mathcal{F}_t=\sigma(Y_j:j\leq t)$, and random variable $\omega_t\epsilon_t$ is independent of $\mathcal{F}_{t-1}$ with its moment satisfying that,
$$\mathbb{E}[|\omega_t\epsilon_tI(Z_t \in R)|^m | \mathcal{F}_{t-1}]\leq m!c^{m-2}I(Z_t \in R), \quad m=3, 4,\ldots,$$
for some $c>0$. Similar to Lemma \ref{lemma:rfrw_w_l}, we can apply the Freedman type inequality (Lemma \ref{lemma:rfrw_martingale}) for unbounded summands by constructing that
$$\log\npr\left( \frac{1}{T}\left|\sum\limits_{t:Z_t\in R}\omega_t\epsilon_t\right|>x,\#R\leq y \right)  \lesssim - \frac{x^2T}{y/T+x},$$
for any $x, y >0$. Based on Lemma \ref{lemma:rfrw_R_tmu}, we consider the following choice of $y$, 
$$y=T\mu(R)+c_1\log T(\log p)^{1/2}\{T\mu(R)\}^{1/2},$$
where $c_1$ is the constant given in event $\mathcal{E}_1$. Similarly, we have
\begin{eqnarray}
\label{eqn:rfrw_wtet_e4}
\log\npr\left( \left\{ \frac{1}{T}\left|\sum\limits_{t:Z_t\in R}\omega_t\epsilon_t\right|>x\right\} \cap \mathcal{E}_1  \right)  \lesssim - \frac{x^2T}{\max\{\mu(R),x\}}.
\end{eqnarray}
Since the right-hand side of (\ref{eqn:rfrw_wtet_e4}) is the same as in (\ref{eqn:rfrw_w_l_2}), we can follow the same arguments in Lemma \ref{lemma:rfrw_w_l} and verify that a suitable constant $c_5$ can be chosen such that
$$\npr\left( \left\{\frac{1}{T}\left|\sum\limits_{t:Z_t\in R}\omega_t\epsilon_t\right|>c_5\left(\log T\log p\right)^{1/2}\left(\frac{\mu(R)}{4T}\right)^{1/2} \right\} \cap \mathcal{E}_1 \right)\leq \frac{1}{|\mathcal{R}_{\varsigma,w}|T},$$
for sufficiently large $T$. By rearranging terms and using that $T\mu(R)\geq T \zeta k/(4\zeta T)=k/4$, we have 
$$\npr\left( \left\{\frac{1}{T\mu(R)}\left|\sum\limits_{t:Z_t\in R}\omega_t\epsilon_t\right|>c_5\frac{(\log T\log p)^{1/2}}{k^{1/2}} \right\} \cap \mathcal{E}_1 \right)\leq \frac{1}{|\mathcal{R}_{\varsigma,w}|T}.$$
By applying the union bound over all the rectangles in $\mathcal{R}_{\varsigma,w}$, it holds that $\npr(\mathcal{E}_5)\geq1-T^{-1}$.

Next, we proceed with proving $\npr(\mathcal{E}_6)\geq1-T^{-1}$. By Assumption \ref{assum:rfrw_fbound}-\ref{assum:rfrw_alpha_mixing} and event $\mathcal{E}_1$, the sequence $\{\omega_tf(X_t)I(Z_t\in R)\}_{t\geq 1}$ is bounded and $\alpha$-mixing, with its mixing coefficients bounded by those coefficients $\{\alpha(t)\}_{t\geq 1}$ of $(X_t)_{t\geq 1}$.

We can apply the Bernstein type inequality (Lemma \ref{lemma:rfrw_bernstein}) for strongly mixing processes to obtain the bound that 
\begin{eqnarray}
\label{eqn:rfrw_bernstein_e5_1}
&&\log \npr \left(\left\{\left|\frac{1}{T}\sum\limits_{t:Z_t\in R}\omega_tf(X_t)-\mathbb{E}[\omega f(X)I(Z\in R)]\right|>x\right\}\cap \mathcal{E}_3 \right) \nonumber \\
&\lesssim& - \frac{x^2T}{\nu^2_R+(\log T)^2T^{-1}+x(\log T)^3},
\end{eqnarray}
where
\begin{eqnarray*}
    \nu^2_R&:=&\var\big(\omega_1f(X_1)I(Z_1\in R)\big) \nonumber \\&+& 2\sum_{t=1}^\infty \left|\ncov\big(\omega_{t+1}f(X_{t+1})I(Z_{t+1}\in R),\omega_1f(X_1)I(Z_1\in R)\big)\right|. \nonumber
\end{eqnarray*}
Note that 
$$\inf\{y>0:\npr(|\omega f(X)|I(Z\in R)>y)\leq u\}\leq c_3MI_{(u\leq \mu(R))}\log T,$$
where $c_3$ is given in event $\mathcal{E}_3$ and $M$ is given in Assumption \ref{assum:rfrw_fbound}. By applying Rio's covariance inequality \citep[Theorem 1.1]{rio1993}, it holds that 
$$\left|\ncov\big(\omega_{t+1}f(X_{t+1})I(Z_{t+1}\in R),\omega_1f(X_1)I(Z_1\in R)\big)\right|\leq \gamma(\log T)^2\min\{\alpha(t),\mu(R)\},$$
where $\gamma=4c_3^2M^2$. Similar to the proof of Lemma 3 in \cite{Davis2020-sup}, we have $\nu^2_R \lesssim \mu(R)(\log T)^2$. Then, (\ref{eqn:rfrw_bernstein_e5_1}) implies that
\begin{eqnarray*}
\label{eqn:rfrw_bernstein_e5_2}
&&\log \npr \left(\left\{\left|\frac{1}{T}\sum\limits_{t:Z_t\in R}\omega_tf(X_t)-\mathbb{E}[\omega f(X)I(Z\in R)]\right|>x\right\}\cap \mathcal{E}_1 \right) \nonumber \\
&&\lesssim - \frac{x^2T}{\max\{\mu(R)(\log T)^2,x(\log T)^3\}}.
\end{eqnarray*}
We can choose a constant $C$ such that,
\begin{equation}
\label{eqn:rfrw_bernstein_e5_3}
\npr \left(\left\{\left|\frac{1}{T}\sum\limits_{t:Z_t\in R}\omega_tf(X_t)-\mathbb{E}[\omega f(X)I(Z\in R)]\right|>x\right\}\cap \mathcal{E}_3 \right) \leq \frac{1}{|\mathcal{R}_{\varsigma,w}|T},
\end{equation}
where 
\begin{equation}
\label{eqn:rfrw_bernstein_e5_4}
x= C \max\left\{\left(\frac{\mu(R)(\log T)^2\log(|\mathcal{R}_{\varsigma,w}|T)}{T}\right)^{1/2},\frac{(\log T)^3\log(|\mathcal{R}_{\varsigma,w}|T)}{T}\right\}.
\end{equation}
The maximum value on the right-hand side of (\ref{eqn:rfrw_bernstein_e5_4}) is reached by the first term if 
\begin{equation*}
\label{eqn:rfrw_bernstein_e5_5}
k\geq 4(\log T)^4\log(|\mathcal{R}_{\varsigma,w}|T).
\end{equation*}
For large $T$, (\ref{eqn:rfrw_bernstein_e5_3}) and (\ref{eqn:rfrw_bernstein_e5_4}) indicate that
\begin{eqnarray}
    &&\npr \left(\left\{\left|\frac{1}{T}\sum\limits_{t:Z_t\in R}\omega_tf(X_t)-\mathbb{E}[\omega f(X)I(Z\in R)]\right|>c_6\left(\frac{\mu(R)(\log T)^{3}\log p}{4T}\right)^{1/2}\right\}\cap \mathcal{E}_3 \right) \nonumber \\
    &&\leq \frac{1}{|\mathcal{R}_{\varsigma,w}|T}, \nonumber
\end{eqnarray}
for a suitable constant $c_6$. By rearranging terms and using that $T\mu(R)\geq k/4$, we have 
\begin{eqnarray}
    &&\npr\left( \left\{\frac{1}{\mu(R)}\left| \frac{1}{T}\sum\limits_{t:Z_t\in R} \omega_tf(X_t) -\mathbb{E}[\omega f(X)I(Z\in R)]\right|> c_6 \frac{(\log T)^{3/2}(\log p)^{1/2}}{k^{1/2}}\right\} \cap \mathcal{E}_3 \right) \nonumber \\
    &&\leq \frac{1}{|\mathcal{R}_{\varsigma,w}|T}. \nonumber
\end{eqnarray}
We can apply the union bound over all the rectangles $R \in \mathcal{R}_{\varsigma,w}$, such that $\npr(\mathcal{E}_6)\geq1-T^{-1}$ for all sufficiently large $T$.

Finally, suppose that the joint event $\mathcal{E}_1\cap\mathcal{E}_2\cap\mathcal{E}_3\cap\mathcal{E}_4\cap\mathcal{E}_5\cap\mathcal{E}_6$ holds. Based on Lemma \ref{lemma:rfrw_weighted_average}-\ref{lemma:rfrw_eta} and (\ref{eqn:rfrw_rw_tree_general}), it follows that
\begin{eqnarray}
\label{eqn:rw_tree_final}
\sup_{L\in \mathcal{L}_k}|G_T(L)|&\leq & 6\{M+c_4(\log T)^{1/2}\}\left\{\frac{\zeta^2+2c_1\log T(\log p)^{1/2}}{k^{1/2}}+O_p\left(\frac{(\log T\log p)^{1/2}}{Tk^{1/2}}\right)\right\}\nonumber\\ 
 &+ &\frac{4Mc_1\log T(\log p)^{1/2}}{k^{1/2}}\left\{1+O_p\left(\frac{(\log T\log p)^{1/2}}{Tk^{1/2}}\right)\right\} \nonumber \\
 &+& \frac{2c_5(\log T\log p)^{1/2}}{k^{1/2}}\left\{1+O_p\left(\frac{(\log T\log p)^{1/2}}{Tk^{1/2}}\right)\right\} \nonumber \\
 &+& \frac{2c_6(\log T)^{3/2}(\log p)^{1/2}}{k^{1/2}}\left\{1+O_p\left(\frac{(\log T\log p)^{1/2}}{Tk^{1/2}}\right)\right\} \nonumber \\
& +& \frac{2M\zeta^2}{k^{1/2}} . 
\end{eqnarray}
A suitable constant $\beta$ can be chosen such that (\ref{eqn:rfrw_theo_high}) is satisfied on $\mathcal{E}_1\cap\mathcal{E}_2\cap\mathcal{E}_3\cap\mathcal{E}_4\cap\mathcal{E}_5\cap\mathcal{E}_6$, which completes the proof.

\end{proof}

\subsection{Proof of Theorem \ref{theo:rfrw}}

As the concentration bound of weighted regression trees in Theorem \ref{theo:rfrw_rw_regression_tree} holds uniformly across all the $k$-valid trees, in Theorem \ref{theo:rfrw}, we can extend the concentration result of trees to that of the forests. We similarly present the concentration result of RF-RW under the high-dimensional case. The same arguments can be applied to the fixed-dimensional case, which are omitted here.

\begin{proof}[Proof of Theorem \ref{theo:rfrw}]
    By the triangular inequality and Theorem \ref{theo:rfrw_rw_regression_tree}, it holds that 
\begin{eqnarray*}
    &&\sup_{(x,\boldsymbol\Lambda)\in \mathbb{R}^p \times \mathcal{W}_k}|H_{\boldsymbol\Lambda}(x)-H_{\boldsymbol\Lambda}^*(x)| \nonumber \\
    &=&\frac{1}{B}\sup_{(x,\boldsymbol\Lambda)\in \mathbb{R}^p \times \mathcal{W}_k}\left |\sum_{b=1}^B T_{\Lambda_b}(x)-\sum_{b=1}^B T_{\Lambda_b}^*(x)\right | \nonumber \\
    &=& \frac{1}{B} \sup_{(x,\boldsymbol\Lambda)\in \mathbb{R}^p \times \mathcal{W}_k}\left| T_{\Lambda_1}(x) - T_{\Lambda_1}^*(x)+\ldots+T_{\Lambda_B}(x) - T_{\Lambda_B}^*(x)\right|  \nonumber \\
    &\leq& \frac{1}{B} \left (\sup_{(x,\Lambda)\in \mathbb{R}^p \times \mathcal{V}_k}\left| T_{\Lambda_1}(x) - T_{\Lambda_1}^*(x)\right  | + \ldots +\sup_{(x,\Lambda)\in \mathbb{R}^p \times \mathcal{V}_k}\left|T_{\Lambda_B}(x) - T_{\Lambda_B}^*(x)\right| \right) \nonumber \\
    &\leq& \beta\frac{(\log T)^{3/2}(\log p)^{1/2}}{k^{1/2}}, \nonumber
\end{eqnarray*}
which finishes the proof.
\end{proof}


\section{Proofs of results in Section \ref{sec:theory_consistency_fixed}}\label{sec:appen_proof_fixed_consistency}

\subsection{Proof of Theorem \ref{theo:rfrw_consistent_1} and \ref{theo:rfrw_consistent_2}}

For each leaf $L:=\bigtimes_{i=1}^p[r_i^-,r_i^+]\subseteq[0,1]^p$ that contains $z\in[0,1]^p$, denote the diameter of $L$ as: 
$$\text{diam}(L):=\sup_{z^{\prime}, z^{\prime\prime}\in L}||z^{\prime}-z^{\prime\prime}||=\left(\sum_{i=1}^p\left|r_i^+ - r_i^-\right|^2\right)^{1/2}.$$
Similarly, the diameter of leaf $L_{\Lambda} \subseteq \mathbb{R}^p$ that contains $x\in\mathbb{R}^p$ is denoted as $\text{diam}(L_{\Lambda}):=\sup_{x^{\prime}, x^{\prime\prime} \in L_{\Lambda}}||x^{\prime}-x^{\prime\prime}||$. To obtain the consistency of RF-RW under fixed dimension $p$, we restrict the maximal diameter of each leaf node to zero as $T\rightarrow \infty$, which is shown in Lemma \ref{lemma:rfrw_diam}.

\begin{lemma}
\label{lemma:rfrw_diam}
    Suppose that Assumption \ref{assump:rfrw_all} and \ref{assum:rfrw_leave_alpha} are satisfied, and the partition $\Lambda\in \mathcal{V}_{\xi,k,m}$. For any $x\in\mathbb{R}^p$, as $T\rightarrow \infty$, \normalfont{$\text{diam}\big(L_{\Lambda}(x)\big)\rightarrow 0$} with probability one.
    
\end{lemma}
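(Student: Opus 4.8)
The plan is to run the whole argument in the transformed coordinates $z=\iota_h(x)\in[0,1]^p$. Since regression trees are invariant to coordinatewise monotone transformations, $L_\Lambda(x)$ is the $\iota_h$-preimage of a hyper-rectangle $L=\iota_h\big(L_\Lambda(x)\big)=\bigtimes_{i=1}^p[r_i^-,r_i^+]\subseteq[0,1]^p$ that is a leaf of a $(\xi,k,m)$-valid partition of $[0,1]^p$, and Lemma \ref{lemma:rfrw_zeta} gives $\zeta^{-1}\le h_Z\le\zeta$. Because $h>0$ almost everywhere, $F_h$ is strictly increasing, so $z^{(i)}=F_h(x^{(i)})\in(0,1)$ and $F_h^{-1}$ is continuous on $(0,1)$. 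It therefore suffices to show $r_i^+-r_i^-\to0$ for each $i$ with the asserted probability: since the leaf always contains $z^{(i)}$, for large $T$ the interval $[r_i^-,r_i^+]$ lies in a fixed compact subinterval of $(0,1)$ around $z^{(i)}$ on which $F_h^{-1}$ is uniformly continuous, whence $\text{diam}\big(L_\Lambda(x)\big)\le\sqrt p\,\max_i\big(F_h^{-1}(r_i^+)-F_h^{-1}(r_i^-)\big)\to0$.

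Two facts drive $r_i^+-r_i^-\to0$. First, the depth $d$ of the root-to-leaf path for $x$ diverges: by Requirement (1) the leaf retains at least a fraction $\xi$ of the data at every split, so it contains at least $T\xi^d$ sample points, while Requirement (4) (together with $m\ge 2k$) forces every leaf to contain fewer than $m$ points; hence $T\xi^d<m$, i.e.\ $d>d_T:=\log(T/m)/\log(\xi^{-1})\to\infty$ by Assumption \ref{assum:rfrw_leave_alpha}. Second, by Requirement (2) each split on the path is, conditionally on the history, along coordinate $i$ with probability at least $\rho$, so the number $N_i$ of coordinate-$i$ splits on the path stochastically dominates a $\mathrm{Binomial}(\lceil d_T\rceil,\rho)$ variable; since $d_T\to\infty$, $N_i\to\infty$ in probability, and almost surely once $d_T$ diverges fast enough by Borel--Cantelli; convergence in probability already suffices for Theorem \ref{theo:rfrw_consistent}.

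The crucial step is that each coordinate-$i$ split contracts the $i$-th side length by a fixed factor. Fix a node $R$ on the path, split along coordinate $i$ into the child $R'\ni x$ and its sibling $R''$; these differ only in the $i$-th interval, with lengths $a+b,\,a,\,b$ respectively. Every node on the path has at least $k$ sample points, hence Lebesgue volume $\gtrsim k/T$, so a uniform law of large numbers over axis-aligned boxes in this volume regime applies---this is exactly the regime and method of Lemma \ref{lemma:rfrw_R_tmu}---giving $\#(\text{box})=T\mu(\text{box})\{1+o(1)\}$ uniformly over such boxes. Combined with Requirement (1), $\mu(R'')\ge\xi\mu(R)\{1+o(1)\}$, and then $\zeta^{-1}\le h_Z\le\zeta$ yields $b/(a+b)=\text{Leb}(R'')/\text{Leb}(R)\ge\xi\zeta^{-2}\{1+o(1)\}$, i.e.\ $a\le\{1-\xi\zeta^{-2}+o(1)\}(a+b)$. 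Iterating from the root $[0,1]^p$ over the $N_i$ coordinate-$i$ splits gives $r_i^+-r_i^-\le(1-c)^{N_i}$ for some $c=c(\xi,\zeta)\in(0,\xi\zeta^{-2}]$ and all large $T$, which tends to $0$ since $N_i\to\infty$. Intersecting over the finitely many $i$ and invoking the first paragraph finishes the proof.

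I expect the contraction step to be the main obstacle: Requirements (1)--(2) constrain sample-point counts, whereas the conclusion needs uniform geometric shrinkage of each side length, and bridging the two requires an empirical-measure concentration bound that is uniform over all cells generated along the partition---obtained from the bounded transformed density together with a Lemma \ref{lemma:rfrw_R_tmu}-type uniform deviation inequality valid under $\alpha$-mixing. A secondary, bookkeeping-level issue is the mode of convergence: convergence in probability is clean, while the almost-sure statement needs either a quantitative strengthening of Assumption \ref{assum:rfrw_leave_alpha} (to make the Binomial tails summable) or a subsequence argument.
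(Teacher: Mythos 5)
Your proposal is correct and follows essentially the same route as the paper: the paper's own proof of this lemma simply defers to Lemma~6 of \cite{Davis2020-sup}, and the argument it spells out for the uniform version in Lemma~\ref{lemma:rfrw_diam_x_uniform} uses exactly your skeleton --- diverging depth $d>\log(T/m)/\log(\xi^{-1})$, binomial domination of the per-coordinate split counts via Requirement~(2), and the per-split geometric contraction $\zeta^{-2}\xi\le \text{diam}_j(L^{d_j})/\text{diam}_j(L^{d_j-})\le 1-\zeta^{-2}\xi$ obtained from Requirement~(1), the empirical concentration over the approximating rectangles, and the bounded transformed density. Your closing caveat about almost-sure versus in-probability convergence is fair (the paper's quantitative bounds are of the form $O((T/m)^{-\gamma})$, and only convergence in probability is used downstream), but it does not affect the substance of the argument.
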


\begin{proof}
    The proof of Lemma \ref{lemma:rfrw_diam} can be referred to Lemma 6 in \cite{Davis2020-sup}.
\end{proof}

Based on Theorem \ref{theo:rfrw} and Lemma \ref{lemma:rfrw_diam}, we introduce the proof of Theorem \ref{theo:rfrw_consistent_1} and \ref{theo:rfrw_consistent_2} as follows.

\begin{proof}[Proof of Theorem \ref{theo:rfrw_consistent_1} and \ref{theo:rfrw_consistent_2}]
By Theorem \ref{theo:rfrw}, for partitions $\Lambda_1,\ldots,\Lambda_B\in \mathcal{V}_{\xi,k,m}$ and $T\rightarrow\infty$, it holds that
\begin{eqnarray*}
    \left|\hat{H}_T(x)-f(x)\right|&=&\left|\hat{H}_T(x)-f(x)-H^*_T(x)+H^*_T(x)\right|\nonumber \\ &\leq&  \max_{b=1,\ldots,B}\left|T^*_{\Lambda_b}(x)-f(x)\right|+o_p(1), \nonumber
\end{eqnarray*}
and similarly,
$$\left|\hat{H}_T(X)-f(X)\right|\leq  \max_{b=1,\ldots,B}\left|T^*_{\Lambda_b}(X)-f(X)\right|+o_p(1).$$
By Assumption \ref{assum:rfrw_lip}, we have 
$$\left|T^*_{\Lambda}(x)-f(x)\right|\leq \frac{\mathbb{E}_\Lambda\left[|f(X)-f(x)|I\big(X\in L_{\Lambda}(x)\big)\right]}{\npr_\Lambda\big(X\in L_{\Lambda}(x)\big)}\leq C_{f}\text{diam}\big(L_{\Lambda}(x)\big). $$
By Lemma \ref{lemma:rfrw_diam}, as $T\rightarrow \infty$, we have $\text{diam}\big(L_{\Lambda}(x)\big) \rightarrow 0$ with probability one. It follows that $T^*_{\Lambda}(x) \rightarrow f(x)$ almost surely and correspondingly $\hat{H}_T(x)\rightarrow f(x)$ in probability for all $x \in \mathbb{R}^p$.
\end{proof}

\subsection{Proof of Theorem \ref{theo:rfrw_consistent_3}}

Based on the proof of Theorem \ref{theo:rfrw_consistent_1} and \ref{theo:rfrw_consistent_2}, we can extend the current pointwise consistency result of $\hat{H}_T$ to uniform consistency, by showing that $\sup_{x\in \mathbb{R}^p}\left\{\text{diam}\big(L_{\Lambda}(x)\big)\right\}\overset{P}{\longrightarrow}0$ as $T\rightarrow\infty$. Specifically, we can consider the following moment condition in Lemma \ref{lemma:rfrw_diam_x_uniform}.

\begin{lemma}
\label{lemma:rfrw_diam_x_uniform}
    Suppose that Assumption \ref{assump:rfrw_all} and \ref{assum:rfrw_leave_alpha} are satisfied, and the partition $\Lambda\in \mathcal{V}_{\xi,k,m}$. If $\xi\in(0,0.2]$, it holds that
    $$\mathbb{E}\left[\sup_{x\in \mathbb{R}^p}\left\{\normalfont\text{diam}\big(L_{\Lambda}(x)\big)\right\}\right]=O\left(\left(T/m\right)^{-\gamma}\right),$$
    where 
    $$\gamma=\frac{\rho}{2}\frac{\log\left((1-\xi)^{-1}\right)}{\log(\xi^{-1})},$$
    with $m$, $\rho$, and $\xi$ are given in Definition \ref{def:rfrw_alpha_k_m_partition} of ($\xi,k,m$)-valid partition.
\end{lemma}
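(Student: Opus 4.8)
The strategy is to push the whole argument through the transformation $\iota_h$ into the cube $[0,1]^p$, where the leaf $L_{\Lambda}(x)$ is identified with its image $R=\iota_h\big(L_{\Lambda}(x)\big)$, a box with sides $[r_i^-,r_i^+]$, $i=1,\dots,p$, and $z=\iota_h(x)$. Since $\mathrm{diam}(R)\le p^{1/2}\max_i(r_i^+-r_i^-)$, it suffices to show that, outside an event of probability $O\big((T/m)^{-\gamma}\big)$, \emph{every} leaf of $\Lambda$ has all $p$ transformed side lengths bounded by $C\,(T/m)^{-\gamma}$; on the complementary event we use the crude bound $\mathrm{diam}\le p^{1/2}$, and taking expectations gives the claim. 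This is the quantitative, uniform strengthening of the pointwise Lemma \ref{lemma:rfrw_diam}, and it is exactly what Theorem \ref{theo:rfrw_consistent_3} needs, via $\sup_x|T_{\Lambda}^{*}(x)-f(x)|\le C_{f}\sup_x\mathrm{diam}\big(L_{\Lambda}(x)\big)$ from the Lipschitz Assumption \ref{assum:rfrw_lip}.

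First I would fix the tree geometry. By Requirements (1) and (4) of Definition \ref{def:rfrw_alpha_k_m_partition} (already observed after Assumption \ref{assum:rfrw_leave_alpha}), along any root-to-leaf path the node mass is retained by a factor in $[\xi,1-\xi]$ at each of the $d$ splits, and a leaf carries fewer than $m$ points, so $\xi^{d}T<m$, i.e.\ $d\ge 1+\log(T/m)/\log(\xi^{-1})$. Next, fix a coordinate $i$: conditionally on the node sequence along a path, Requirement (2) makes each of the $d$ splits a coordinate-$i$ split with probability at least $\rho$, and by Assumption \ref{assum:rfrw_randomness} this feature-selection randomness is independent of the weights that determine split positions; hence the number $n_i$ of coordinate-$i$ splits on that path stochastically dominates $\mathrm{Bin}(d,\rho)$, so a Chernoff bound gives $\mathbb{P}(n_i<\rho d/2)\le e^{-\rho d/8}$ for a single path.

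Then I would convert split frequency into geometric shrinkage. On $\{n_i\ge\rho d/2\}$, each coordinate-$i$ split shrinks the $i$-th transformed side by a factor bounded away from $1$: by Lemma \ref{lemma:rfrw_zeta} the law of $Z$ lies between $\zeta^{\mp1}$ times Lebesgue measure, so a split keeping at least a fraction $\xi$ of the parent's data points in each child keeps — with probability $1-o(1)$, by an empirical-measure concentration estimate of exactly the type proved in Lemma \ref{lemma:rfrw_R_tmu}, and in the regime covered by Lemma \ref{lemma:rfrw_high_xi_range} — at least a comparable fraction of the $i$-th Lebesgue side, so that side contracts by a factor at most $1-\xi$. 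Iterating over the $\ge\rho d/2$ coordinate-$i$ cuts gives $r_i^+-r_i^-\le(1-\xi)^{\rho d/2}$, and substituting the depth bound yields $(1-\xi)^{\rho d/2}\le C\,(T/m)^{-\frac{\rho}{2}\log((1-\xi)^{-1})/\log(\xi^{-1})}=C\,(T/m)^{-\gamma}$.

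The last step, the union bound over all leaves and all $p$ coordinates, is the crux. A $k$-valid regular partition has at most $T/k$ leaves, but a naive union bound over the $\le 2^{d}$ leaves in $\{n_i<\rho d/2\}$ loses a factor $\big(2(1-\rho)\big)^{d}$, decaying only for $\rho>1/2$; instead I would count "under-split" leaves combinatorially — choosing the $\le\rho d/2$ positions of coordinate-$i$ cuts among $d$ costs $\binom{d}{\lfloor\rho d/2\rfloor}$, while each of the remaining non-$i$ cuts removes at least a fraction $\xi$ of the node's data, which caps how many such leaves exist — and then combine with the single-path bound $e^{-\rho d/8}$. Here is precisely where $\xi\in(0,0.2]$ enters: it forces $\log((1-\xi)^{-1})<1/4$, hence $\gamma<\rho/\big(8\log(\xi^{-1})\big)$, so the failure probability $e^{-\rho d/8}=(T/m)^{-\rho/(8\log\xi^{-1})}$ (and, after the combinatorial inflation, still a power of $T/m$ with a larger exponent) is of smaller order than the main term $(T/m)^{-\gamma}$, while $d\asymp\log(T/m)$. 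Summing over the $p$ coordinates and combining with the shrinkage estimate gives $\sup_x\mathrm{diam}\big(L_{\Lambda}(x)\big)\le p^{1/2}C\,(T/m)^{-\gamma}$ off the rare event; taking expectations finishes the proof. I expect this uniform-over-leaves bookkeeping — replacing the lossy $2^{d}$ union bound by the tree-combinatorial count and tracking the exact exponent against which $\xi\le0.2$ is calibrated — to be the main obstacle, with a secondary technical point being the empirical-to-population transfer (Lemma \ref{lemma:rfrw_R_tmu}-style) needed to turn data-fraction regularity into genuine geometric side-length contraction uniformly over leaves.
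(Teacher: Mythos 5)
Your proposal follows the paper's proof of Lemma \ref{lemma:rfrw_diam_x_uniform} almost step for step: the same stochastic domination of the number $s_j$ of coordinate-$j$ splits along a root-to-leaf path by $\mathrm{Binom}(\bar s,\rho)$ with $\bar s=\log(T/m)/\log(\xi^{-1})$, the same Chernoff bound of order $\exp(-\bar s\rho/8)$, the same conversion of split counts into geometric contraction of the transformed side lengths (each coordinate-$j$ split shrinking that side by a factor in $[\zeta^{-2}\xi,\,1-\zeta^{-2}\xi]$ via Lemma \ref{lemma:rfrw_zeta}), and the same calibration of $\xi\le 0.2$, i.e.\ $\log((1-\xi)^{-1})<1/4$, so that the Chernoff failure probability $(T/m)^{-\rho/(8\log \xi^{-1})}$ is dominated by the target rate $(T/m)^{-\gamma}$. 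The one place you diverge is the passage from a single leaf to the supremum over all leaves, which you declare to be the crux. The paper performs no union bound at all: it establishes $\mathbb{P}\bigl(\mathrm{diam}_j(L)\ge (T/m)^{-\gamma}\bigr)=O\bigl((T/m)^{-\gamma}\bigr)$ for a generic path with constants independent of $z$, splits $\mathbb{E}[\mathrm{diam}(L)]$ on the event $\{\mathrm{diam}_{j*}(L)\ge (T/m)^{-\gamma}\}$ and its complement, and then asserts that because nothing in the bound depends on $z=\iota_h(x)$ the same estimate holds for $\mathbb{E}[\sup_z\mathrm{diam}(L(z))]$ (the almost-sure contraction $\mathrm{diam}_j(L)\le(1-\xi)^{s_j/(1+\delta)}$ is obtained by a Glivenko--Cantelli argument rather than the Lemma \ref{lemma:rfrw_R_tmu}-style concentration you invoke, but these play the same role). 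Your instinct that a naive union bound over the $O(2^d)$ leaves would be lossy is sound, and a combinatorial count of under-split leaves is a defensible, arguably more careful, way to justify the uniformity; however, as written that step is only a sketch --- the binomial-coefficient inflation is never balanced against the exponent --- so the crux you single out remains open in your write-up, whereas the paper closes (or sidesteps) it with the uniformity-in-$z$ observation.
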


\begin{proof}
    Let $s$ be the number of splits leading to the leaf $L\subseteq[0,1]^p$, and $s_j$ be the number of these splits along the $j$-th feature for $j\in\{1,\ldots,p\}$. Definition \ref{def:rfrw_alpha_k_m_partition} implies that $s_j\rightarrow\infty$, and there holds that 
    $$\text{Binom}(s,\pi_j)\sim s_j \overset{d}{\geq}\bar{s}_j\sim \text{Binom}(\bar{s},\rho),$$
    where $\bar{s}={\log(T/m)}/{\log(\xi^{-1})}$, and $\pi_j$ is the probability that a node is split along the $j$-th feature.
    
    The main idea of this step is similar to Lemma 3 in \cite{Shiraishi2024-sup} and Lemma 2 in \cite{Wager2018-sup}. By Chernoff's inequality and for any $\delta\in(0,1)$ with $\delta\downarrow0$, we have
    \begin{eqnarray}
    \label{eqn:rfrw_diam_chernoff}
        \npr\left(s_j\leq (1+\delta)\frac{\bar{s}\rho}{2}\right)&\leq&\npr\left(\bar{s}_j\leq (1+\delta)\frac{\bar{s}\rho}{2}\right)=\npr\left(\bar{s}_j\leq \left(1-\frac{1-\delta}{2}\right)\bar{s}\rho\right) \nonumber \\
        &\leq& \exp\left(-\frac{1}{2}\left(\frac{1-\delta}{2}\right)^2 \bar{s}\rho\right) \nonumber \\
        &=&O\left(\exp\left(-\frac{\bar{s}\rho}{8}\right)\right),
    \end{eqnarray}
    where 
    $$\frac{\bar{s}\rho}{2}=\gamma\frac{\log(T/m)}{\log((1-\xi)^{-1})}.$$ 
    If $\xi\in(0,0.2]$, we have
    \begin{equation}
    \label{eqn:rfrw_xi_0.2}
        -\frac{\bar{s}\rho}{8}<-\gamma\log(T/m). 
    \end{equation}
    By (\ref{eqn:rfrw_diam_chernoff}) and (\ref{eqn:rfrw_xi_0.2}), it holds that
    \begin{eqnarray}
    \label{eqn:rfrw_diam_chernoff_final}
    \npr\left(s_j\leq \gamma(1+\delta)\frac{\log(T/m)}{\log\left((1-\xi)^{-1}\right)}\right)&=&O\left(\exp\left(-\gamma\log\frac{T}{m}\right)\right) \nonumber \\
    &=& O\left(\left(T/m\right)^{-\gamma}\right).
    \end{eqnarray}
    Denote that $\text{diam}_i(L)=\text{Leb}(L_i)=|r_i^+-r_i^-|$ for any leaf $L=\bigtimes_{i=1}^p[r_i^-,r_i^+]$. Let $d_j\in\{1,\ldots,s_j\}$ be the $d_j$-th split among the $s_j$ splits. Define $L^{d_j-}=\bigtimes_{i=1}^pL_i^{d_j-}$ and $L^{d_j}=\bigtimes_{i=1}^pL_i^{d_j}$ as the leaf before and after the $d_j$-th splitting, respectively. Following the similar proof of Lemma 3 in \cite{Shiraishi2024-sup} and Lemma 6 in \cite{Davis2020-sup}, the $\text{diam}_j(L)$ can be written as
    $$\text{diam}_j(L)=\prod_{d_j=1}^{s_j} \frac{\text{diam}_j(L^{d_j})}{\text{diam}_j(L^{d_j-})}.$$
    In addition, for each leaf $L^d=\bigtimes_{i=1}^pL_i^d$ with $L_i^d=[r_i^{d-},r_i^{d+}]\subset (0,1)$ and $d\in\{1,\ldots,s\}$ is the $d$-th split among the $s$ splits, there exists $L_{\Lambda}^d=\bigtimes_{i=1}^pL_{\Lambda_i}^d$ with $L_{\Lambda_i}^d=[v_i^{d-},v_i^{d+}]\subseteq \mathbb{R}$, such that $L_i^d=[F_h(v_i^{d-}),F_h(v_i^{d+})]=:F_h(L_{\Lambda_i}^d)$. It implies that 
    $$\frac{\text{diam}_j(L^{d_j})}{\text{diam}_j(L^{d_j-})}=\frac{\text{Leb}\left(F_h(L_{\Lambda_j}^{d_j})\right)}{\text{Leb}\left(F_h(L_{\Lambda_j}^{d_j-})\right)}=\frac{\text{Leb}\left(\iota_h(L_{\Lambda}^{d_j})\right)}{\text{Leb}\left(\iota_h(L_{\Lambda}^{d_j-})\right)}=1-\frac{\text{Leb}\left(\iota_h(L_{\Lambda}^{d_j-}\setminus L_{\Lambda}^{d_j})\right)}{\text{Leb}\left(\iota_h(L_{\Lambda}^{d_j-})\right)},$$
    where $\iota_h(L_{\Lambda}^{d})=\bigtimes_{i=1}^pF_h(L_{\Lambda_i}^d)$. By Lemma \ref{lemma:rfrw_zeta}, for all sufficiently large $T$, we have $$0<\zeta^{-2}\xi\leq\frac{\text{diam}_j(L^{d_j})}{\text{diam}_j(L^{d_j-})}\leq 1-\zeta^{-2}\xi<1.$$
    Given that $s_j\rightarrow\infty$, by Glivenko-Cantelli theorem for ergodic process, it holds that
    \begin{eqnarray*}
    \frac{1}{s_j}\log(\text{diam}_j(L)) &=& \frac{1}{s_j}\sum_{d_j=1}^{s_j}\log\left(\frac{\text{diam}_j(L^{d_j})}{\text{diam}_j(L^{d_j-})}\right) \nonumber \\
    &\overset{a.s.}\rightarrow& \mathbb{E}\left[\log\left(\frac{\text{diam}_j(L^1)}{\text{diam}_j(L^{1-})}\right)\right] \in [\log(\zeta^{-2}\xi),\log(1-\zeta^{-2}\xi)], \\
    \end{eqnarray*}
    which implies that there exists some $\delta\in(0,1)$ with $\delta\downarrow0$ such that
    \begin{equation}
    \label{eqn:rfrw_diam_L_j}
    \text{diam}_j(L)\leq (1-\xi)^{s_j/(1+\delta)},
    \end{equation}
    with probability one. By (\ref{eqn:rfrw_diam_chernoff_final}) and (\ref{eqn:rfrw_diam_L_j}),
    \begin{eqnarray}
    \npr\left(\text{diam}_j(L)\geq (T/m)^{-\gamma}\right) &\leq& \npr\left((1-\xi)^{s_j/(1+\delta)}\geq (T/m)^{-\gamma}\right)\nonumber \\
    &=& \npr\left(s_j\leq \gamma(1+\delta)\frac{\log(T/m)}{\log((1-\xi)^{-1})}\right) \nonumber \\
    &=& O\left(\left(T/m\right)^{-\gamma}\right).
    \end{eqnarray}
    Denote that $\text{diam}_{j*}(L):=\max_{j=1,\ldots,p}\text{diam}_j(L)$, it holds that
    \begin{eqnarray*}
    \mathbb{E}[\text{diam}(L)] &\leq& \sqrt{p}\big (\mathbb{E}[\text{diam}_{j*}(L)I_{(\text{diam}_{j*}(L)\geq (T/m)^{-\gamma})}] 
    + \mathbb{E}[\text{diam}_{j*}(L)I_{(\text{diam}_{j*}(L)< (T/m)^{-\gamma})}] \big) \nonumber \\
    && =O\left(\left(T/m\right)^{-\gamma}\right).
    \end{eqnarray*}
    Note that the above result does not depend on the observed data $z=\iota_h(x)$, we have
    $$\mathbb{E}\left[\sup_{z\in [0,1]^p}\left\{\normalfont\text{diam}\big(L(z)\big)\right\}\right]=O\left(\left(T/m\right)^{-\gamma}\right).$$
    Moreover, since the mapping $\iota_h$ is one-to-one, following the similar proof of Corollary 1 in \cite{Shiraishi2024-sup}, it holds that
    $$\mathbb{E}\left[\sup_{x\in \mathbb{R}^p}\left\{\normalfont\text{diam}\big(L_{\Lambda}(x)\big)\right\}\right]=O\left(\left(T/m\right)^{-\gamma}\right).$$
\end{proof}

The proof of Theorem \ref{theo:rfrw_consistent_3} immediately follows from Lemma \ref{lemma:rfrw_diam_x_uniform} and is shown below.

\begin{proof}[Proof of Theorem \ref{theo:rfrw_consistent_3}]
As $T\rightarrow \infty$, Lemma \ref{lemma:rfrw_diam_x_uniform} implies that $\sup_{x\in \mathbb{R}^p}\left\{\text{diam}\big(L_{\Lambda}(x)\big)\right\}\overset{P}{\longrightarrow}0$ as $T\rightarrow\infty$, which completes the proof.
\end{proof}

\section{Proofs of results in Section \ref{sec:theory_consistency_high}}\label{sec:appen_proof_high_consistency}

\subsection{Proof of Theorem \ref{theo:rfrw_high_h_optimal_fx}}

We first introduce some implications of Assumption \ref{assump:rfrw_high_lip}, which are used in the following proofs. 

For any $z$, $z^\prime$, $z^{\prime\prime}$, and $z^{\prime\prime\prime}$ , the joint density $h_Q(y,[z]_Q)$ of $(Y,[Z]_Q)$ is $L(y)$-Lipschitz with respective to $[z]_Q$: 
$$\left|h_Q(y,[z]_Q)-h_Q(y,[z^\prime]_Q)\right|\leq L(y) \left \Vert [z]_Q-[z^\prime]_Q\right \Vert.$$ 
The marginal density $h_{[Z]_Q}([z]_Q)$ of $[Z]_Q$ is $L_1$-Lipschitz: $$\left|h_{[Z]_Q}([z]_Q)-h_{[Z]_Q}([z^\prime]_Q)\right|\leq L_1 \left \Vert [z]_Q-[z^\prime]_Q\right \Vert.$$
The marginal density $h_Z([z^\prime]_Q,[z^{\prime\prime}]_{Q^C})$ of $Z$ is also $L_1$-Lipschitz: 
$$\left|h_Z([z^{\prime}]_Q,[z^{\prime\prime\prime}]_{Q^C})-h_Z([z^{\prime\prime}]_Q,[z^{\prime\prime\prime}]_{Q^C})\right|\leq L_1 \left \Vert [z^\prime]_Q-[z^{\prime\prime}]_Q\right \Vert.$$ 


\begin{proof}[Proof of Theorem \ref{theo:rfrw_high_h_optimal_fx}]
    For predicting $f(\iota_h^{-1}(z))$ using $\mathbb{E}[Y|Z\in L(z)]$, and $z$, $z^\prime \in [0,1]^p$ we have
\begin{eqnarray}
\label{eqn:rfrw_proof_t3}
\mathbb{E}[Y|Z\in L(z)] &=& \frac{\int_{z\in L(z)}\int_{-\infty}^{\infty}yh_{Y|Z}(y|z)h_Z(z)dydz}{\mu\big(L(z)\big)}\nonumber \\
&=& \frac{\int_{z\in L(z)}h_Z(z)\int_{-\infty}^{\infty}yh_{Y|[Z]_Q}(y|[z]_Q)dydz}{\mu\big(L(z)\big)}\nonumber \\
&=& \frac{\int_{z\in L(z)}h_Z(z)\int_{-\infty}^{\infty}yh_{Y|[Z]_Q}(y|[z^\prime]_Q)dydz}{\mu\big(L(z)\big)} + \nonumber \\
&& \frac{\int_{z\in L(z)}h_Z(z)\int_{-\infty}^{\infty}y\left\{h_{Y|[Z]_Q}(y|[z]_Q)-h_{Y|[Z]_Q}(y|[z^\prime]_Q)\right\}dydz}{\mu\big(L(z)\big)}  \nonumber \\
&=& \mathbb{E}[Y|Z=[z]_Q]  \nonumber \\
&+& \frac{\int_{z\in L(z)}h_Z(z)\int_{-\infty}^{\infty}y\left\{h_{Y|[Z]_Q}(y|[z]_Q)-h_{Y|[Z]_Q}(y|[z^\prime]_Q)\right\}dydz}{\mu\big(L(z)\big)}. \qquad \quad
\end{eqnarray}
Note that
\begin{eqnarray*}
    &&\left|h_{Y|[Z]_Q}(y|[z]_Q)-h_{Y|[Z]_Q}(y|[z^\prime]_Q)\right| \nonumber \\ &\leq&\left|\frac{h_Q(y,[z]_Q)}{h_{[Z]_Q}([z]_Q)}-\frac{h_Q(y,[z^\prime]_Q)}{h_{[Z]_Q}([z]_Q)}\right| + \left|\frac{h_Q(y,[z^\prime]_Q)}{h_{[Z]_Q}([z]_Q)}-\frac{h_Q(y,[z^\prime]_Q)}{h_{[Z]_Q}([z^\prime]_Q)}\right| \nonumber\\
    &\leq& \frac{L(y)\left \Vert [z]_Q-[z^\prime]_Q\right \Vert}{h_{[Z]_Q}([z]_Q)} + \frac{h_Q(y,[z^\prime]_Q)\left| h_{[Z]_Q}([z]_Q) - h_{[Z]_Q}([z^\prime]_Q) \right|}{h_{[Z]_Q}([z]_Q)h_{[Z]_Q}([z^\prime]_Q)}\nonumber \\
    &\leq& \zeta L(y) \left \Vert [z]_Q-[z^\prime]_Q \right \Vert + \frac{L_1\zeta h_Q(y,[z^\prime]_Q)\left \Vert [z]_Q-[z^\prime]_Q \right \Vert}{h_{[Z]_Q}([z^\prime]_Q)},\nonumber
\end{eqnarray*}
where $\zeta$ is given in Lemma \ref{lemma:rfrw_zeta}. Since $\left \Vert [z]_Q-[z^\prime]_Q \right \Vert\leq \text{diam}\big([L(z)]_Q\big)$, we can obtain from the second term on the right-hand side of (\ref{eqn:rfrw_proof_t3}) that
\begin{eqnarray}
\label{eqn:rfrw_proof_t3_2}
    &&\left|\frac{\int_{z\in L(z)}h_Z(z)\int_{-\infty}^{\infty}y\left\{h_{Y|[Z]_Q}(y|[z]_Q)-h_{Y|[Z]_Q}(y|[z^\prime]_Q)\right\}dydz}{\mu\big(L(z)\big)}\right| \nonumber \\
    &\leq& \frac{\int_{z\in L(z)}h_Z(z)\int_{-\infty}^{\infty}|y|\zeta L(y) \text{diam}\big([L(z)]_Q\big)dydz}{\mu\big(L(z)\big)} \nonumber \\
    &+&  \frac{\int_{z\in L(z)}h_Z(z)\left|\int_{-\infty}^{\infty}y\left\{h_{Y|[Z]_Q}(y|[z^\prime]_Q)\right\}L_1\zeta\text{diam}\big([L(z)]_Q\big)dydz\right|}{\mu\big(L(z)\big)} \nonumber \\
    &\leq& \zeta L_2 \text{diam}\big([L(z)]_Q\big) +\zeta ML_1 \text{diam}\big([L(z)]_Q\big) \nonumber \\
    &\leq& C_{SIG}\text{diam}\big([L(z)]_Q\big), 
\end{eqnarray}
for some constant $C_{SIG}>0$. By (\ref{eqn:rfrw_proof_t3}) and (\ref{eqn:rfrw_proof_t3_2}), it holds that 
$$\big| \mathbb{E}[Y|Z\in L(z)]-E[Y|Z=z]\big|\leq C_{SIG}\text{diam}\big([L(z)]_Q\big).$$
Since trees are invariant to monotone transformations of the feature spaces, $L(z)=\iota_h(L_{\bar\Lambda}(x))=\iota_h(L_{\Lambda}(x))$ for $x\in\mathbb{R}^p$, and $z=\iota_h(x)$, we have
$$\left| T_{\Lambda}^*(x)-f(x) \right|\leq C_{\text{SIG}}\normalfont{\text{diam}}\left([\iota_h\big(L_{\Lambda}(x)\big)]_Q\right).$$
It follows that
$$\left| H_{\boldsymbol \Lambda}^*(x)-f(x) \right|\leq C_{\text{SIG}}\mathbb{E}_\Theta\normalfont{\text{diam}}\left([\iota_h\big(L_{\Lambda}(x)\big)]_Q\right),$$
where $\mathbb{E}_{\Theta}$ is the expectation with respect to $\Theta$. It further implies that
$$\sup_{(x,\boldsymbol\Lambda)\in \mathbb{R}^p \times \mathcal{W}_k}|H_{\boldsymbol\Lambda}^*(x)-f(x)|\leq C_{\text{SIG}}\sup_{x\in\mathbb{R}^p}\mathbb{E}_\Theta\normalfont{\text{diam}}\left([\iota_h\big(L_{\Lambda}(x)\big)]_Q\right),$$
which finishes the proof.
\end{proof}

\subsection{Split criterion}

To obtain the consistency of high-dimensional RF-RW, we investigate the split criterion to control the splitting of informative and uninformative features.

For a node $A=\bigtimes_{i=1}^p[v_i^-,v_i^+]\subseteq \mathbb{R}^p$ in the tree that contains $x\in\mathbb{R}^p$, the splitting feature $j$, and the relative position $\tau$, we denote the empirical split criterion (\ref{eqn:rfrw_weight_splitting}) as $\hat{l}_\Lambda(A,j,\tau)$, which can be written as
\begin{equation*}
\label{eqn:rfrw_tree_split_min_empirical}
\hat{l}_\Lambda(A,j,\tau) = \frac{\sum_{X_t\in A_\text{left}} \omega_t}{\sum_{X_t\in A} \omega_t}\hat{\sigma}^2(Y|X \in A_\text{left}) + \frac{\sum_{X_t\in A_\text{right}} \omega_t}{\sum_{X_t\in A} \omega_t}\hat{\sigma}^2(Y|X \in A_\text{right}),
\end{equation*}
where $\hat{\sigma}^2(Y|X \in A) = \hat{Y}^2(A) - \{\hat{Y}(A)\}^2$, with
$$\hat{Y}^2(A)=\frac{1}{\sum_{X_t\in A} \omega_t}\sum_{X_t\in A} \omega_tY_t^2, \quad \hat{Y}(A)=\frac{1}{\sum_{X_t\in A} \omega_t}\sum_{X_t\in A} \omega_tY_t.$$
The population split criterion $l_\Lambda(A,j,\tau)$ of $\hat{l}_\Lambda(A,j,\tau)$ is denoted as
\begin{eqnarray}
\label{eqn:rfrw_tree_split_min_population}
    l_\Lambda(A,j,\tau) &=& \npr(X \in A_\text{left}|X\in A) \sigma^2(Y|X \in A_\text{left}) \nonumber\\
      &+& \npr(X \in A_\text{right}|X\in A) \sigma^2(Y|X \in A_\text{right}), \quad
\end{eqnarray}
where $\sigma^2(Y|X \in A)=\mathbb{E}[Y^2|X\in A]-(\mathbb{E}[Y|X\in A]
)^2$.

Let a node $R=\bigtimes_{i=1}^p[r_i^-,r_i^+]\subseteq [0,1]^p$ in the tree that contains $z=(z^{(1)},\ldots,z^{(p)})^\top \in[0,1]^p$ is transformed from a node $A=\bigtimes_{i=1}^p[v_i^-,v_i^+]\subseteq \mathbb{R}^p$. Based on the splitting feature $j\in \mathcal{M}_{\text{try}}$, and the relative position $\tau \in (0,1)$, we define the left child and right child node of $R$ as $R_\text{left}:=\{z\in R: z^{(j)} \leq r_{j,\tau}\}$, and $R_\text{right}:=\{z\in R:z^{(j)} > r_{j,\tau}\}$, respectively, where $r_{j,\tau}:=\tau r_j^- + (1-\tau)r_j^+$ is the split position.

We can similarly define the empirical split criterion $\hat{l}(R,j,\tau)$ and the population counterpart $l(R,j,\tau)$ with respect to $Z_t\in[0,1]^p$ as
\begin{equation*}
\label{eqn:rfrw_tree_split_min_empirical_z}
\hat{l}(R,j,\tau) = \frac{\sum_{Z_t\in R_\text{left}} \omega_t}{\sum_{Z_t\in R} \omega_t}\hat{\sigma}^2(Y|Z \in R_\text{left}) + \frac{\sum_{Z_t\in R_\text{right}} \omega_t}{\sum_{Z_t\in R} \omega_t}\hat{\sigma}^2(Y|Z \in R_\text{right}).
\end{equation*}
\begin{eqnarray}
\label{eqn:rfrw_tree_split_min_population_z}
l(R,j,\tau) &=& \npr(Z \in R_\text{left}|Z\in R) \sigma^2(Y|Z \in R_\text{left})\nonumber\\
&+& \npr(Z \in R_\text{right}|Z\in R) \sigma^2(Y|Z \in R_\text{right}).
\end{eqnarray}
Equivalently, the population split criterion for $Z_t \in [0,1]^p$ can also be expressed as the variance reduction version:
\begin{eqnarray}
\label{eqn:rfrw_tree_split_max_population}
    \mathcal{I}(R,j,\tau) = \sigma^2(Y|Z \in R)- l(R,j,\tau).
\end{eqnarray}
Similarly, the empirical split criterion is denoted as $\hat{\mathcal{I}}(R,j,\tau)=\hat\sigma^2(Y|Z \in R)- \hat l(R,j,\tau)$. If we have access to the whole population, and since the mapping between $X_t$ and $Z_t$ is one-to-one, then minimizing (\ref{eqn:rfrw_tree_split_min_population}) is equivalent to minimizing (\ref{eqn:rfrw_tree_split_min_population_z}), or maximizing (\ref{eqn:rfrw_tree_split_max_population}).

Under Assumption \ref{assump:rfrw_all}, we can show that the empirical split criterion constitutes a good approximation to the population counterpart in Lemma \ref{lemma:rfrw_high_emp_l_pop_l}, which helps control the splitting process.

\begin{lemma}
\label{lemma:rfrw_high_emp_l_pop_l}
Suppose Assumption \ref{assump:rfrw_all} is satisfied, it holds that
\begin{equation}
\label{eqn:rfrw_high_emp_l_pop_l}
\sup_{R\in \mathcal{R}_{\varsigma,w}:\mu(R)\geq \zeta w,j,\tilde{\xi}\leq \tau \leq 1-\tilde{\xi}}\left|\hat{l}(R,j,\tau)-l(R,j,\tau)\right|=O_p\left(\frac{(\log T)^{3/2}(\log p)^{1/2}}{k^{1/2}}\right).
\end{equation}
\end{lemma}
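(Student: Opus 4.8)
\emph{Overall strategy.} The plan is to reduce $\hat l(R,j,\tau)-l(R,j,\tau)$ to a finite family of ``building-block'' deviations of the form $\frac1T\sum_{Z_t\in S}\omega_t\,g(Y_t)-\mathbb{E}\!\left[g(Y)\,I(Z\in S)\right]$ with $g(y)\in\{1,\,y,\,y^2\}$ and $S$ ranging over an enlarged collection of rectangles, and then to bound each of these uniformly using the same concentration machinery developed in the proof of Theorem \ref{theo:rfrw_rw_regression_tree}. Throughout set $\varsigma=k^{-1/2}$ and $w=k/(4\zeta T)$ as in Lemmas \ref{lemma:rfrw_R_tmu} and \ref{lemma:rfrw_w_l}, and work on the intersection of (the analogues of) the good events $\mathcal{E}_1,\dots,\mathcal{E}_6$ from that proof, which has probability at least $1-6T^{-1}$.

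\emph{Step 1: discretizing the split.} Each child $R_\text{left}=\{z\in R:z^{(j)}\le r_{j,\tau}\}$ and $R_\text{right}$ is again a rectangle, and as $\tau$ varies the empirical criterion $\hat l(R,j,\tau)$ changes only when $r_{j,\tau}$ crosses one of the observed coordinates $Z_t^{(j)}$; the population criterion $l(R,j,\tau)=\sigma^2(Y\mid Z\in R)-\mathcal{I}(R,j,\tau)$ (see (\ref{eqn:rfrw_tree_split_max_population})) is Lipschitz in $\tau$ on $[\tilde\xi,1-\tilde\xi]$, since $h_Z$ is bounded away from $0$ and $\infty$ (Lemma \ref{lemma:rfrw_zeta}) and $\tau\ge\tilde\xi$ keeps all child masses bounded below. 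Hence it suffices to control the deviation at the finitely many breakpoints, augmented by a deterministic $T^{-1}$-grid on $[\tilde\xi,1-\tilde\xi]$, whose contribution through the Lipschitz term is $O(T^{-1})$. Let $\mathcal{R}'$ be $\mathcal{R}_{\varsigma,w}$ together with all children of its elements arising from these split positions along any coordinate $j\in\{1,\dots,p\}$; then $|\mathcal{R}'|\le cTp\,|\mathcal{R}_{\varsigma,w}|$, so $\log|\mathcal{R}'|\lesssim\log T\log p$ by Lemma \ref{lemma:rfrw_rectangle_2}, and because $\tau\ge\tilde\xi$ every element of $\mathcal{R}'$ still has Lebesgue volume $\gtrsim w$ and, for the relevant $R$, mass $\gtrsim\tilde\xi\zeta w$ (Lemmas \ref{lemma:rfrw_zeta} and \ref{lemma:rfrw_high_xi_range}), so Lemmas \ref{lemma:rfrw_R_tmu} and \ref{lemma:rfrw_w_l} apply to $\mathcal{R}'$ with the same rates after adjusting constants.

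\emph{Step 2: building-block concentration on $\mathcal{R}'$.} Writing $\hat\sigma^2(Y\mid Z\in S)=\hat Y^2(S)-\{\hat Y(S)\}^2$ and $\npr(Z\in S'\mid Z\in S)=\mu(S')/\mu(S)$, and using that $\sum_{Z_t\in S}\omega_t\asymp T\mu(S)\gtrsim k$ on $\mathcal{E}_1\cap\mathcal{E}_2$, it remains to control the weighted first and second moments of $Y$ uniformly over $S\in\mathcal{R}'$. For $\frac1T\sum_{Z_t\in S}\omega_tY_t$ we decompose $Y_t=f(X_t)+\epsilon_t$ as in (\ref{eqn:rfrw_time_series}): the part $\frac1T\sum\omega_tf(X_t)$ is a bounded, exponentially $\alpha$-mixing sum handled exactly as the event $\mathcal{E}_6$ (Bernstein-type inequality Lemma \ref{lemma:rfrw_bernstein} with Rio's covariance bound giving variance proxy $\lesssim\mu(S)(\log T)^2$), yielding deviation $O_p\!\left(\mu(S)(\log T)^{3/2}(\log p)^{1/2}/k^{1/2}\right)$, and the part $\frac1T\sum\omega_t\epsilon_t$ is a martingale-difference sum with sub-exponential increments handled exactly as the event $\mathcal{E}_5$ (Freedman-type inequality Lemma \ref{lemma:rfrw_martingale}) with the smaller rate $O_p\!\left(\mu(S)(\log T\log p)^{1/2}/k^{1/2}\right)$; note $\mathbb{E}[\omega\epsilon I(Z\in S)]=0$ so the target is $\mathbb{E}[YI(Z\in S)]$. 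For $\frac1T\sum_{Z_t\in S}\omega_tY_t^2$ we expand $Y_t^2=f(X_t)^2+2f(X_t)\epsilon_t+\epsilon_t^2$: the term $f(X_t)^2$ is bounded ($\le M^2$) and $\alpha$-mixing, so Lemma \ref{lemma:rfrw_bernstein} applies as above; the cross term $2f(X_t)\epsilon_t$ satisfies $\mathbb{E}[f(X_t)\epsilon_t\mid\mathcal{F}_{t-1}]=0$ and is conditionally sub-gaussian, so Lemma \ref{lemma:rfrw_martingale} applies; and for $\epsilon_t^2$ we note that $\omega_t\epsilon_t^2I(Z_t\in S)-\mathbb{E}[\epsilon_t^2]\,I(Z_t\in S)$ is a martingale difference in $\mathcal{F}_t=\sigma(Y_j:j\le t)$ (using $\omega_t\perp\epsilon_t$, $\mathbb{E}\omega_t=1$, $\epsilon_t\perp\mathcal{F}_{t-1}$, and $I(Z_t\in S)\in\mathcal{F}_{t-1}$), with increments bounded by $c_3c_4^2(\log T)^2$ on $\mathcal{E}_3\cap\mathcal{E}_4$ and conditional variance $\lesssim I(Z_t\in S)$; Lemma \ref{lemma:rfrw_martingale} then gives deviation $O_p\!\left(\mu(S)(\log T\log p)^{1/2}/k^{1/2}\right)$ for all large $T$, while the remainder $\mathbb{E}[\epsilon_t^2]\,\#S/T$ tends to $\mathbb{E}[\epsilon^2]\mu(S)=\mathbb{E}[\epsilon^2 I(Z\in S)]$ by $\mathcal{E}_1$. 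Each bound is made uniform over $\mathcal{R}'$ by a union bound, at cost $\log|\mathcal{R}'|\lesssim\log T\log p$.

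\emph{Step 3: assembly, and the main obstacle.} Dividing by $\sum_{Z_t\in S}\omega_t\asymp T\mu(S)\gtrsim k$ turns each building block into a deviation of the normalized statistics $\hat Y(S)$, $\hat Y^2(S)$, and the count ratio of order $(\log T)^{3/2}(\log p)^{1/2}/k^{1/2}$ uniformly over $S\in\mathcal{R}'$; substituting these into the elementary identity expressing $\hat l(R,j,\tau)$ and $l(R,j,\tau)$ through the building blocks --- a finite sum of products of uniformly bounded terms with denominators bounded below, valid because $\tilde\xi\le\tau\le1-\tilde\xi$ --- yields (\ref{eqn:rfrw_high_emp_l_pop_l}). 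The main obstacle is the $\epsilon_t^2$ contribution to the second moment: the raw product $\omega_t\epsilon_t^2$ is heavy-tailed (sub-Weibull of index $1/2$), so treating $\sum_{Z_t\in S}\omega_t\epsilon_t^2I(Z_t\in S)$ as a generic $\alpha$-mixing sum would only give the slower rate $(\log T)^3(\log p)^{1/2}/k^{1/2}$; the resolution is to center by $\mathbb{E}[\epsilon^2]$ and exploit the resulting martingale-difference structure, so that the controlling variance is the conditional quadratic variation $\sum\mathbb{E}[g_t^2\mid\mathcal{F}_{t-1}]\lesssim\#S$ with no extra logarithmic factors, and to truncate $\omega_t$ and $\epsilon_t$ on $\mathcal{E}_3\cap\mathcal{E}_4$ so that the bounded-increment Freedman inequality applies --- the quadratic (Gaussian) regime then governs precisely under the assumed scaling $k\gg(\log T)^3\log p$. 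A secondary technical point, handled in Step 1, is making the continuum of split positions $\tau$ and the choice of $j$ discrete without inflating the cardinality beyond logarithmic factors.
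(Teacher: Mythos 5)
Your proposal is correct and follows essentially the same route as the paper's own proof: the same reduction of $\hat l - l$ to weighted empirical first and second moments of $Y$ over rectangles, the same decomposition $Y_t^2=f(X_t)^2+2f(X_t)\epsilon_t+\epsilon_t^2$, with Bernstein-type (Lemma \ref{lemma:rfrw_bernstein}) bounds for the bounded $\alpha$-mixing parts and Freedman-type (Lemma \ref{lemma:rfrw_martingale}) bounds for the martingale parts, including the centering of $\omega_t\epsilon_t^2$ by $\mathbb{E}[\epsilon^2]$. Your Step 1 (discretizing $\tau$ and enlarging the rectangle family to include the children $R_{\text{left}},R_{\text{right}}$) is in fact more careful than the paper's write-up, which passes the uniform bound from $\mathcal{R}_{\varsigma,w}$ to the child nodes without comment; this is a welcome tightening rather than a deviation in approach.
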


\begin{proof}

From event $\mathcal{E}_1$, we can obtain the following result by rearranging terms:
\begin{equation*}
\label{eqn:rfrw_R_tmuR_prob}
\sup\limits_{R\in \mathcal{R}_{\varsigma,w}:\mu(R)\geq \zeta w} \frac{1}{\mu(R)}\left|\frac{\#R}{T}-\mu(R)\right| \leq c_2\frac{\log T (\log p)^{1/2}}{k^{1/2}},
\end{equation*}
holds with with probability at least $1-T^{-1}$ for all sufficiently large $T$. As $T\rightarrow\infty$, we can choose some $k\rightarrow\infty$, such that $\log T (\log p)^{1/2}/k^{1/2} \rightarrow 0$, which implies that $\#R/T \rightarrow \mu(R)$ in probability.

Since we also have $\sum_{t:Z_t\in R}\omega_t \rightarrow \#R$ in probability by Lemma \ref{lemma:rfrw_w_l} as $T\rightarrow\infty$, it holds that
\begin{eqnarray*}
\label{eqn:rfrw_high_emp_l_pop_l_5}
&&\left|\hat{l}(R,j,\tau)-l(R,j,\tau)\right| \nonumber \\
&\leq& \left|\frac{\sum_{Z_t\in R_\text{left}} \omega_t}{\sum_{Z_t\in R} \omega_t}\hat{\sigma}^2(Y|Z \in R_\text{left}) - \npr(Z \in R_\text{left}| Z \in R) \sigma^2(Y|Z \in R_\text{left})\right| \nonumber \\
&+&  \left|\frac{\sum_{Z_t\in R_\text{right}} \omega_t}{\sum_{Z_t\in R} \omega_t}\hat{\sigma}^2(Y|Z \in R_\text{right}) - \npr(Z \in R_\text{right}| Z \in R) \sigma^2(Y|Z \in R_\text{right})\right| \nonumber \\
&\leq& \left|\frac{\#R_\text{left}}{\#R}\hat{\sigma}^2(Y|Z \in R_\text{left}) - \npr(Z \in R_\text{left}| Z \in R) \sigma^2(Y|Z \in R_\text{left})\right| \nonumber \\
&+&  \left|\frac{\#R_\text{right}}{\#R}\hat{\sigma}^2(Y|Z \in R_\text{right}) - \npr(Z \in R_\text{right}| Z \in R) \sigma^2(Y|Z \in R_\text{right})\right| \nonumber \\
&\leq& \left|\npr(Z \in R_\text{left}| Z \in R)\left \{\hat{\sigma}^2(Y|Z \in R_\text{left}) -\sigma^2(Y|Z \in R_\text{left})\right\}\right| \nonumber \\
&+&  \left|\npr(Z \in R_\text{right}| Z \in R)\left\{\hat{\sigma}^2(Y|Z \in R_\text{right}) - \sigma^2(Y|Z \in R_\text{right})\right\}\right| \nonumber \\
&+& O_p\left(\frac{\log T(\log p)^{1/2}}{k^{1/2}}\right)\nonumber\\
&\times&\left\{\npr(Z \in R_\text{left}| Z \in R)\hat{\sigma}^2(Y|Z \in R_\text{left}) + \npr(Z \in R_\text{right}| Z \in R)\hat{\sigma}^2(Y|Z \in R_\text{right})\right\}. \nonumber
\end{eqnarray*}
Since Assumption \ref{assum:rfrw_eps}-\ref{assum:rfrw_fbound} ensure that ${\sigma}^2(Y|Z \in R_\text{right})$ and ${\sigma}^2(Y|Z \in R_\text{right})$ are upper bounded, to prove (\ref{eqn:rfrw_high_emp_l_pop_l}), it suffices to show that
\begin{equation}
\label{eqn:rfrw_high_emp_l_pop_l_1}
\sup_{R\in \mathcal{R}_{\varsigma,w}:\mu(R)\geq \zeta w}\left| \hat{\sigma}^2(Y|Z \in R) -\sigma^2(Y|Z \in R)\right| = O_p\left( \frac{(\log T)^{3/2}(\log p)^{1/2}}{k^{1/2}}\right).
\end{equation}
By Lemma \ref{lemma:rfrw_gtl_minus} and the proof of Theorem \ref{theo:rfrw_rw_regression_tree}, we already showed that
\begin{equation*}
\label{eqn:rfrw_high_emp_l_pop_l_2}
\sup_{R\in \mathcal{R}_{\varsigma,w}:\mu(R)\geq \zeta w} \left| \frac{1}{\sum\limits_{t:Z_t\in R}\omega_t}\sum\limits_{t:Z_t\in R}\omega_tY_t-\mathbb{E}[Y|Z\in R]\right| = O_p\left( \frac{(\log T)^{3/2}(\log p)^{1/2}}{k^{1/2}}\right).
\end{equation*}
It remains to show that
\begin{equation}
\label{eqn:rfrw_high_emp_l_pop_l_3}
\sup_{R\in \mathcal{R}_{\varsigma,w}:\mu(R)\geq \zeta w} \left| \frac{1}{\sum\limits_{t:Z_t\in R}\omega_t}\sum\limits_{t:Z_t\in R}\omega_tY_t^2-\mathbb{E}[Y^2|Z\in R]\right| =O_p\left( \frac{(\log T)^{3/2}(\log p)^{1/2}}{k^{1/2}}\right).
\end{equation}
For any rectangle $R$, it holds that
\begin{eqnarray}
\label{eqn:rfrw_high_emp_l_pop_l_4}
&&\left| \frac{1}{\sum\limits_{t:Z_t\in R}\omega_t}\sum\limits_{t:Z_t\in R}\omega_tY_t^2-\mathbb{E}[Y^2|Z\in R]\right| \nonumber \\
&=& \left | \frac{\sum\limits_{t:Z_t\in R}\omega_tf(X_t)^2 + 2\sum\limits_{t:Z_t\in R}\omega_tf(X_t)\epsilon_t+\sum\limits_{t:Z_t\in R}\omega_t\epsilon_t^2}{\sum\limits_{t:Z_t\in R}\omega_t} -\mathbb{E}[Y^2|Z\in R]\right | \nonumber \\
&\leq& \left| \frac{\sum\limits_{t:Z_t\in R}\omega_tf(X_t)^2}{\sum\limits_{t:Z_t\in R}\omega_t} - \mathbb{E}[\omega f(X)^2|Z\in R]\right| + \left| \frac{2M\sum\limits_{t:Z_t\in R}\omega_t\epsilon_t}{\sum\limits_{t:Z_t\in R}\omega_t}\right| \nonumber \\
& +&\left| \frac{\sum\limits_{t:Z_t\in R}\omega_t\epsilon_t^2}{\sum\limits_{t:Z_t\in R}\omega_t}-\mathbb{E}[\omega\epsilon^2|Z\in R]\right|.
\end{eqnarray}
For the first term on the right-hand side of (\ref{eqn:rfrw_high_emp_l_pop_l_4}), similar to the proof of event $\mathcal{E}_6$, we can apply a Bernstein type inequality (Lemma \ref{lemma:rfrw_bernstein}) for the exponential $\alpha$-mixing process, such that
$$\sup_{R\in \mathcal{R}_{\varsigma,w}:\mu(R)\geq \zeta w}\left| \frac{\sum\limits_{t:Z_t\in R}\omega_tf(X_t)^2}{\sum\limits_{t:Z_t\in R}\omega_t} - \mathbb{E}[\omega f(X)^2|Z\in R]\right| =O_p\left( \frac{(\log T)^{3/2}(\log p)^{1/2}}{k^{1/2}}\right).$$
For the second term, we have proved it in Lemma \ref{lemma:rfrw_gtl_minus}, such that
$$\sup_{R\in \mathcal{R}_{\varsigma,w}:\mu(R)\geq \zeta w} \left| \frac{2M\sum\limits_{t:Z_t\in R}\omega_t\epsilon_t}{\sum\limits_{t:Z_t\in R}\omega_t}\right| =O_p\left( \frac{(\log T\log p)^{1/2}}{k^{1/2}}\right).$$
For the third term, we can similarly construct a martingale difference and apply the Freedman type inequality (Lemma \ref{lemma:rfrw_martingale}) for unbounded summands, such that
$$\sup_{R\in \mathcal{R}_{\varsigma,w}:\mu(R)\geq \zeta w} \left| \frac{\sum\limits_{t:Z_t\in R}\omega_t\epsilon_t^2}{\sum\limits_{t:Z_t\in R}\omega_t}-\mathbb{E}[\omega\epsilon^2|Z\in R]\right| =O_p\left( \frac{(\log T\log p)^{1/2}}{k^{1/2}}\right).$$
Accordingly, (\ref{eqn:rfrw_high_emp_l_pop_l_3}) can be proved, and we derive (\ref{eqn:rfrw_high_emp_l_pop_l_1}), which finishes the proof.
\end{proof}

\subsection{Proof of Theorem \ref{theo:rfrw_high_consistent}}

To develop the consistency of high-dimensional RF-RW, as discussed in Section \ref{sec:theory_consistency_high}, we need to restrict the maximum leaf diameter of the informative features to 0. It requires that each informative feature be selected at the feature sampling process and split with growing number of times.

Suppose for some fixed integer $d>0$, $p\rightarrow\infty$ with $\lim \inf p/T>0$, and the number of candidate features sampled $m_{\text{try}}\asymp p$. Let $s$ be the number of splits conducted. Definition \ref{def:rfrw_xi_k_m_partition} and Lemma \ref{lemma:rfrw_high_xi_range} implies that $s=O(\log T)$, and every split position lies in $[\tilde{\xi},1-\tilde{\xi}]$ with $\tau\in[\tilde{\xi},1-\tilde{\xi}]$ for the splits placed along the node $R\subseteq[0,1]^p$. As $T\rightarrow\infty$, we have $s\rightarrow\infty$, it follows that the probability of a given feature being selected as a candidate with less than or equal to $d$ times goes to $0$. Moreover, by Lemma \ref{lemma:rfrw_high_emp_l_pop_l}, as $T\rightarrow\infty$, we have $\hat{l}(R,j,\tau) \overset{P} \rightarrow l(R,j,\tau)$, which implies that $\hat{l}_\Lambda(A,j,\tau) \overset{P} \rightarrow l_\Lambda(A,j,\tau)$ and $\hat{\mathcal{I}}(R,j,\tau) \overset{P} \rightarrow \mathcal{I}(R,j,\tau)$. We can impose the following assumptions on $\mathcal{I}(R,j,\tau)$ to ensure that the informative features are prioritized to be selected as candidates and to be split, which is similar to \cite{Chen2024-sup}. 

\begin{assumption}
\label{assump:rfrw_high_signal}
\edef\assprefix{\theassumption}
\begin{enumerate}[label=(\roman*), ref=\assprefix(\roman*)]
  \item[] 
  \item\label{assum:rfrw_high_Q_delta} (Splitting of informative features). For an index set $Q$ with $|Q|=o(\log T)$, define $\Delta(\upsilon)$ as follows,
  $$\Delta(\upsilon)=\inf_{j\in Q}\inf_{\{R=\bigtimes_{i=1}^p[r_j^-,r_j^+]\subseteq [0,1]^p:r_j^+-r_j^-=\upsilon\}}\mathcal{I}(R,j,\frac{1}{2}).$$
  It holds that $\Delta(\upsilon)>0$ for any $\upsilon>0$.

  \item\label{assum:rfrw_high_q1_q} (Uninformative feature controlling). There exists a index set $Q_1 \supseteq Q$ with $|Q_1|=o(\log T)$ and $p-m_{\text{try}}=O(p/|Q_1|)$, such that for any rectangle $R$ and $i\notin Q_1$, it holds that 
  $$\sup_{\tilde{\xi}\leq\tau\leq 1-\tilde{\xi}}\mathcal{I}(R,i,\tau)<\psi\max_{j\in Q_1}\sup_{\tilde{\xi}\leq\tau\leq 1-\tilde{\xi}}\mathcal{I}(R,j,\tau),$$
  where constant $\psi\in (0,1)$.
\end{enumerate}

\end{assumption}

Assumption \ref{assum:rfrw_high_Q_delta} imposes the assumption on the lower bound of the variance reduction of splitting along the informative features, which is similar to Assumption 6 in \cite{Chen2024-sup}. Assumption \ref{assum:rfrw_high_q1_q} guarantees that the variance reduction of splitting along the uninformative features is uniformly less than the maximum variance reduction of splitting the features in $|Q_1|$ by a multiplicative constant $\psi$, which is similar to Assumption 7 in \cite{Chen2024-sup}. It implies that as $T\rightarrow\infty$, when all the features in $Q_1$ are simultaneously chosen to be the candidates, the split will be placed along a feature in $Q_1$. 

Compared to \cite{Wager2015-sup}, the conditions we impose do not require the informative features to be independent of the response. In contrast to \cite{Wager2015-sup,Chen2024-sup} that assume the informative feature set $Q$ to be fixed, we allow for a growing $|Q|$ as sample sizes $T$ increase.

In the following, we introduce Lemma \ref{lemma:rfrw_high_delta_lower_bound}-\ref{lemma:rfrw_high_I_upper_bound}, where Lemma \ref{lemma:rfrw_high_delta_lower_bound} provides the lower bound of $\Delta(\upsilon)$ and Lemma \ref{lemma:rfrw_high_I_upper_bound} obtains the upper bound of $\mathcal{I}(R,j,\tau)$ for splitting feature $j\in Q_1$. These Lemmas are used in Lemma \ref{lemma:rfrw_high_split_q1}-\ref{lemma:rfrw_high_split_q_infty}, which ensures that all informative features are selected as candidates and split with increasing number of times in the splitting process.

\begin{lemma}
\label{lemma:rfrw_high_delta_lower_bound}
Suppose that Assumption \ref{assump:rfrw_all}, \ref{assump:rfrw_high_lip}, and \ref{assum:rfrw_high_Q_delta} are satisfied, it holds that 
$$\inf_{c\leq \upsilon\leq 1}\Delta(\upsilon)>0,$$
where $0<c<1$.
\end{lemma}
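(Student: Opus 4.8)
The plan is to upgrade the pointwise positivity granted by Assumption~\ref{assum:rfrw_high_Q_delta} to a uniform bound on the compact interval $[c,1]$ by showing that $\upsilon\mapsto\Delta(\upsilon)$ is Lipschitz there; once this is established, $\Delta$ attains its infimum on $[c,1]$ at some $\upsilon_\ast$, and Assumption~\ref{assum:rfrw_high_Q_delta} gives $\inf_{c\le\upsilon\le1}\Delta(\upsilon)=\Delta(\upsilon_\ast)>0$. (If $Q=\emptyset$ the statement is vacuous, so assume $Q\neq\emptyset$, and note that the boxes entering the infimum have positive Lebesgue measure.) The entire difficulty lies in making the Lipschitz estimate uniform over all boxes $R$ appearing in the infimum defining $\Delta(\upsilon)$, since their non-$j$ side lengths may be arbitrarily small and the raw quantities $\mu(R)$ and $\eta(R)$ then degenerate.

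The first step is to rewrite the midpoint split gain in a normalized form whose ingredients live in a fixed compact set. By the law of total variance together with (\ref{eqn:rfrw_tree_split_max_population}), for $j\in Q$ one has $\mathcal{I}(R,j,\tfrac12)=\npr_L\npr_R(\eta_L-\eta_R)^2$, where $\npr_L:=\npr(Z\in R_\text{left}\mid Z\in R)$, $\npr_R:=1-\npr_L$, $\eta_L:=\mathbb{E}[Y\mid Z\in R_\text{left}]$, $\eta_R:=\mathbb{E}[Y\mid Z\in R_\text{right}]$. Write $R=R_Q\times R_{Q^C}$ for the product of its informative- and noise-coordinate sub-boxes; since $j\in Q$, $R_\text{left}=R_{L,Q}\times R_{Q^C}$ and $R_\text{right}=R_{R,Q}\times R_{Q^C}$, where $R_{L,Q},R_{R,Q}$ are the two halves of $R_Q$ along coordinate $j$. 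By Assumption~\ref{assum:rfrw_high_lip}, $\mathbb{E}[Y\mid Z]=g([Z]_Q)$ with $g([Z]_Q)=\mathbb{E}[f(X)\mid[Z]_Q]$, hence $|g|\le M$ by Assumption~\ref{assum:rfrw_fbound}. Integrating out the noise coordinates and setting $w([z]_Q):=\mathrm{Leb}(R_{Q^C})^{-1}\int_{R_{Q^C}}h_Z([z]_Q,[z']_{Q^C})\,dz'$, Lemma~\ref{lemma:rfrw_zeta} yields $\zeta^{-1}\le w\le\zeta$; with the normalizations $x:=\mathrm{Leb}(R_Q)^{-1}\int_{R_{L,Q}}w$, $y:=\mathrm{Leb}(R_Q)^{-1}\int_{R_{R,Q}}w$, $u:=\mathrm{Leb}(R_Q)^{-1}\int_{R_{L,Q}}gw$, $v:=\mathrm{Leb}(R_Q)^{-1}\int_{R_{R,Q}}gw$ the factors $\mathrm{Leb}(R_{Q^C})$ and $\mathrm{Leb}(R_Q)$ cancel in all ratios, giving $\npr_L=x/(x+y)$, $\eta_L=u/x$, $\eta_R=v/y$, so that $\mathcal{I}(R,j,\tfrac12)=\Phi(x,y,u,v)$ with $\Phi(x,y,u,v):=\dfrac{xy}{(x+y)^2}\Bigl(\dfrac{u}{x}-\dfrac{v}{y}\Bigr)^2$. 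Because $R_{L,Q}$ and $R_{R,Q}$ each have half the Lebesgue measure of $R_Q$, the bounds $x,y\in[(2\zeta)^{-1},\zeta/2]$, $x+y\in[\zeta^{-1},\zeta]$ and $|u|,|v|\le M\zeta/2$ hold for \emph{every} admissible $R$ and every $j\in Q$; thus $(x,y,u,v)$ ranges over a fixed compact set $K$ on which $\Phi$, having denominators bounded away from zero, is Lipschitz.

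The second step is the perturbation argument. Fix $\upsilon,\upsilon'\in[c,1]$ and, given any admissible $R$ with $j$-th side length $\upsilon$, construct $R'$ differing from $R$ only in coordinate $j$, with $j$-th side length $\upsilon'$, so that the two $j$-th intervals have symmetric difference of length at most $|\upsilon-\upsilon'|$ and centres differing by at most $|\upsilon-\upsilon'|$; this is always achievable by an elementary interval manipulation inside $[0,1]$ (anchoring at an endpoint, or using $[1-\upsilon',1]$ near the boundary), and it also covers the limiting case $\upsilon'\uparrow1$. Since the noise sub-box is unchanged, $w$ is the same for $R$ and $R'$, and $\mathrm{Leb}(R_{L,Q}\triangle R'_{L,Q})$ and $\mathrm{Leb}(R_{R,Q}\triangle R'_{R,Q})$ are at most $(2/c)\,|\upsilon-\upsilon'|\,\mathrm{Leb}(R_Q)$ (using $\upsilon\ge c$). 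Hence, after normalizing by $\mathrm{Leb}(R_Q)$, each of $x,y,u,v$ changes by at most a $(\zeta,M,c)$-multiple of $|\upsilon-\upsilon'|$, and the perturbed quadruple again lies in a fixed compact subset of the region where $\Phi$ is Lipschitz. Therefore $|\mathcal{I}(R,j,\tfrac12)-\mathcal{I}(R',j,\tfrac12)|\le C_\Delta|\upsilon-\upsilon'|$ with $C_\Delta$ depending only on $\zeta,M,c$, uniformly over $j\in Q$ and over admissible $R$; taking infima over $j\in Q$ and such $R$ gives $|\Delta(\upsilon)-\Delta(\upsilon')|\le C_\Delta|\upsilon-\upsilon'|$, the desired Lipschitz property.

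The main obstacle I expect is exactly this uniformity in $R$: a naive continuity argument fails because $\mu(R)$ and $\eta(R)$ are not controlled when the non-$j$ sides of $R$ (informative or noise) shrink, and the remedy is the factorization $R=R_Q\times R_{Q^C}$ together with the normalizations by $\mathrm{Leb}(R_{Q^C})$ and $\mathrm{Leb}(R_Q)$, which reduce the split gain to a bounded function $\Phi$ of a quadruple confined to a fixed compact set. A secondary point to carry out carefully is the interval construction near the endpoints of $[0,1]$, which is what makes the perturbation two-sided and hence delivers genuine Lipschitz (not merely one-sided) continuity on $[c,1]$.
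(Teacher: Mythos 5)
Your proof is correct and follows essentially the same route as the paper's: both rest on a perturbation estimate showing that $\mathcal{I}(R,j,\tfrac12)$ changes by at most a $(\zeta,M,c)$-multiple of $|\upsilon-\upsilon'|$ when the $j$-th side length is altered (uniformly over the remaining sides, via $\zeta^{-1}\leq h_Z\leq\zeta$ and $|f|\leq M$), combined with compactness of $[c,1]$ and the pointwise positivity of $\Delta$ from Assumption \ref{assum:rfrw_high_Q_delta}. The only difference is organizational: the paper argues by contradiction along a convergent subsequence $\upsilon_n\downarrow\upsilon^*$ and bounds the child probabilities, conditional means and variances separately, whereas you first reduce the gain to $\npr_L\npr_R(\eta_L-\eta_R)^2$ and a Lipschitz function on a fixed compact set and then invoke continuity plus the extreme value theorem directly --- a tidier but equivalent packaging.
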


\begin{proof}
The proof of Lemma \ref{lemma:rfrw_high_delta_lower_bound} is similar to Lemma B.5 in \cite{Chen2024-sup}, where they focus on the situation of $|Q|=1$. The same arguments can be applied to the case of $|Q|\rightarrow\infty$. Suppose that $\inf_{c\leq \upsilon\leq 1}\Delta(\upsilon)>0$ does not hold for any $c>0$, then $\inf_{c\leq \upsilon\leq 1}\Delta(\upsilon)=0$ for some $c$, and there exists $\{\upsilon_n\}_{n=1}^\infty\subseteq[c,1]$ such that $\Delta(\upsilon_n)\rightarrow0$. Assume that $\upsilon_n\downarrow\upsilon^*$ for some constant $\upsilon^*>0$, there exists a sequence of rectangles $\{R_n\}_{n=1}^\infty$ with $R_n=\bigtimes_{i=1}^p[r_{i,n}^-,r_{i,n}^+]$ and $r_{i,n}^+-r_{i,n}^-=\upsilon_n$, such that for splitting feature $j$, we have
$$\mathcal{I}(R_n,j,1/2)<(1+2^{-n})\Delta(\upsilon_n).$$
Define $\{\bar{R}_n\}_{n=1}^\infty$, where $\bar{R}_n=\bigtimes_{i=1}^p[\bar{r}_{i,n}^-,\bar{r}_{i,n}^+]$ and $\bar{R}_n=R_n\cap\{Z^{(i)}\leq \upsilon^*+r_{i,n}^-\}$ such that $\bar{r}_{i,n}^+-\bar{r}_{i,n}^-=\upsilon^*$. For splitting feature $j$, we have $\mathcal{I}(\bar{R}_n,j,1/2)\geq\Delta(\upsilon^*)$, and we would like to show that
\begin{equation}
\label{eqn:rfrw_high_delta_lower_bound_1}
|\mathcal{I}(\bar{R}_n,j,1/2)-\mathcal{I}(R_n,j,1/2)|\leq c^{-1}C(\upsilon_n-\upsilon^*),
\end{equation}
where $C>0$ is a constant independent of $c$, the rectangles $R_n$, $\bar{R}_n$, and splitting feature $j$. Denote $R_n^{(-j)}:=\bigtimes_{i\neq j}[r_{i,n}^-,r_{i,n}^+]$, consider that
\begin{eqnarray*}
&&|\mathbb{E}[Y|Z\in R_n] - \mathbb{E}[Y|Z\in\bar{R}_n]| \nonumber \\
&=& \left|\frac{\int\limits_{z^{(-j)}\in R_n^{(-j)}}\int\limits_{z^{(j)}=r_{j,n}^-}^{r_{j,n}^-+\upsilon_n}f(\iota_h^{-1}(z))h_Z(z)dz}{\int\limits_{z^{(-j)}\in R_n^{(-j)}}\int\limits_{z^{(j)}=r_{j,n}^-}^{r_{j,n}^-+\upsilon_n}h_Z(z)dz}-\frac{\int\limits_{z^{(-j)}\in R_n^{(-j)}}\int\limits_{z^{(j)}=r_{j,n}^-}^{r_{j,n}^-+\upsilon^*}f(\iota_h^{-1}(z))h_Z(z)dz}{\int\limits_{z^{(-j)}\in R_n^{(-j)}}\int\limits_{z^{(j)}=r_{j,n}^-}^{r_{j,n}^-+\upsilon^*}h_Z(z)dz}\right| \nonumber \\
&\leq& \left|\frac{\int\limits_{z^{(-j)}\in R_n^{(-j)}}\int\limits_{z^{(j)}=r_{j,n}^-}^{r_{j,n}^-+\upsilon_n}f(\iota_h^{-1}(z))h_Z(z)dz}{\int\limits_{z^{(-j)}\in R_n^{(-j)}}\int\limits_{z^{(j)}=r_{j,n}^-}^{r_{j,n}^-+\upsilon_n}h_Z(z)dz}-\frac{\int\limits_{z^{(-j)}\in R_n^{(-j)}}\int\limits_{z^{(j)}=r_{j,n}^-}^{r_{j,n}^-+\upsilon^*}f(\iota_h^{-1}(z))h_Z(z)dz}{\int\limits_{z^{(-j)}\in R_n^{(-j)}}\int\limits_{z^{(j)}=r_{j,n}^-}^{r_{j,n}^-+\upsilon_n}h_Z(z)dz}\right| \nonumber \\
&+&  \left|\frac{\int\limits_{z^{(-j)}\in R_n^{(-j)}}\int\limits_{z^{(j)}=r_{j,n}^-}^{r_{j,n}^-+\upsilon^*}f(\iota_h^{-1}(z))h_Z(z)dz}{\int\limits_{z^{(-j)}\in R_n^{(-j)}}\int\limits_{z^{(j)}=r_{j,n}^-}^{r_{j,n}^-+\upsilon_n}h_Z(z)dz}-\frac{\int\limits_{z^{(-j)}\in R_n^{(-j)}}\int\limits_{z^{(j)}=r_{j,n}^-}^{r_{j,n}^-+\upsilon^*}f(\iota_h^{-1}(z))h_Z(z)dz}{\int\limits_{z^{(-j)}\in R_n^{(-j)}}\int\limits_{z^{(j)}=r_{j,n}^-}^{r_{j,n}^-+\upsilon^*}h_Z(z)dz}\right| \nonumber \\
&\leq& \left|\frac{\int\limits_{z^{(-j)}\in R_n^{(-j)}}\int\limits_{z^{(j)}=r_{j,n}^-+\upsilon^*}^{r_{j,n}^-+\upsilon_n}f(\iota_h^{-1}(z))h_Z(z)dz}{\int\limits_{z^{(-j)}\in R_n^{(-j)}}\int\limits_{z^{(j)}=r_{j,n}^-}^{r_{j,n}^-+\upsilon_n}h_Z(z)dz}\right| +\left|\int\limits_{z^{(-j)}\in R_n^{(-j)}}\int\limits_{z^{(j)}=r_{j,n}^-}^{r_{j,n}^-+\upsilon^*}f(\iota_h^{-1}(z))h_Z(z)dz\right| \nonumber \\
&\times& \left| \frac{\int\limits_{z^{(-j)}\in R_n^{(-j)}}\int\limits_{z^{(j)}=r_{j,n}^-+\upsilon^*}^{r_{j,n}^-+\upsilon_n}h_Z(z)dz}{\left(\int\limits_{z^{(-j)}\in R_n^{(-j)}}\int\limits_{z^{(j)}=r_{j,n}^-}^{r_{j,n}^-+\upsilon_n}h_Z(z)dz\right)\left(\int\limits_{z^{(-j)}\in R_n^{(-j)}}\int\limits_{z^{(j)}=r_{j,n}^-}^{r_{j,n}^-+\upsilon^*}h_Z(z)dz\right)}\right| \nonumber \\
&\leq& M\frac{\zeta \text{Leb}\left(R_n^{(-j)}\right)(\upsilon_n-\upsilon^*)}{\zeta^{-1} \text{Leb}\left(R_n^{(-j)}\right)\upsilon_n} + M\frac{\zeta^2 \left(\text{Leb}\left(R_n^{(-j)}\right)\right)^2(\upsilon_n-\upsilon^*)\upsilon^*}{\zeta^{-2} \left(\text{Leb}\left(R_n^{(-j)}\right)\right)^2\upsilon_n\upsilon^*} \nonumber \\
&\leq& M \zeta^2\frac{\upsilon_n-\upsilon^*}{\upsilon_n} + M \zeta^4\frac{\upsilon_n-\upsilon^*}{\upsilon_n}.
\end{eqnarray*}
This shows that $|\mathbb{E}[Y|Z\in R_n] - \mathbb{E}[Y|Z\in\bar{R}_n]|\leq c^{-1}C(\upsilon_n-\upsilon^*)$. We can similarly show that $|\mathbb{E}[Y^2|Z\in R_n] - \mathbb{E}[Y^2|Z\in\bar{R}_n]|\leq c^{-1}C(\upsilon_n-\upsilon^*)$, which brings about $|\sigma^2(Y|Z\in R_n) - \sigma^2(Y|Z\in\bar{R}_n)|\leq c^{-1}C(\upsilon_n-\upsilon^*).$

We consider the split positions $r_{j,n}^-+{\upsilon_n/2}$ and $r_{j,n}^-+{\upsilon^*/2}$ for rectangles $R_n$ and $\bar{R}_n$, respectively. By applying similar arguments, we have 
$$\left|\frac{\npr\left(Z\in R_n\cap\{Z^{(j)}\leq r_{j,n}^-+{\upsilon_n/2}\}\right)}{\npr(Z\in R_n)}-\frac{\npr\left(Z\in \bar{R}_n\cap\{Z^{(j)}\leq r_{j,n}^-+{\upsilon^*/2}\}\right)}{\npr(Z\in \bar{R}_n)}\right|\leq c^{-1}C(\upsilon_n-\upsilon^*).$$

$$\left|\frac{\npr\left(Z\in R_n\cap\{Z^{(j)}> r_{j,n}^-+{\upsilon_n/2}\}\right)}{\npr(Z\in R_n)}-\frac{\npr\left(Z\in \bar{R}_n\cap\{Z^{(j)}> r_{j,n}^-+\upsilon^*/2\}\right)}{\npr(Z\in \bar{R}_n)}\right|\leq c^{-1}C(\upsilon_n-\upsilon^*).$$

$$|\sigma^2\left(Y|Z\in R_n\cap\{Z^{(j)}\leq r_{j,n}^-+\upsilon_n/2\}\right) - \sigma^2\left(Y|Z\in\bar{R}_n\cap\{Z^{(j)}\leq r_{j,n}^-+{\upsilon^*/2}\}\right)|\leq c^{-1}C(\upsilon_n-\upsilon^*).$$

$$|\sigma^2\left(Y|Z\in R_n\cap\{Z^{(j)}> r_{j,n}^-+{\upsilon_n/2}\}\right) - \sigma^2\left(Y|Z\in\bar{R}_n\cap\{Z^{(j)}> r_{j,n}^-+{\upsilon^*/2}\})|\leq c^{-1}C(\upsilon_n-\upsilon^*\right).$$
Combining the results obtained above, we prove (\ref{eqn:rfrw_high_delta_lower_bound_1}), which further shows that
$$\Delta(\upsilon^*)\leq (1+2^{-n})\Delta(\upsilon_n) + c^{-1}C(\upsilon_n-\upsilon^*).$$
When $n\rightarrow\infty$, $\Delta(\upsilon^*)=0$, which contradicts to Assumption \ref{assum:rfrw_high_Q_delta}. Therefore, $\inf_{c\leq \upsilon\leq 1}\Delta(\upsilon)>0$ holds for any $c>0$.
\end{proof}

\begin{lemma}
\label{lemma:rfrw_high_I_upper_bound}
Suppose that Assumption \ref{assump:rfrw_all}, and \ref{assump:rfrw_high_lip} are satisfied. For any rectangle $R$ and splitting feature $j\in Q_1$, it holds that
$$\mathcal{I}(R,j,\tau)\leq C(r_j^+-r_j^-),$$
where $C>0$ is a constant independent of rectangle $R$, splitting feature $j$, and relative split position $\tau$.
\end{lemma}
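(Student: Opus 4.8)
The plan is to first reduce the variance reduction to a squared gap between the two child means, and then control that gap by the Lipschitz structure supplied by Assumption~\ref{assump:rfrw_high_lip}. Writing $R_{\text{left}},R_{\text{right}}$ for the children of $R$ under the split $(j,\tau)$, $p_L=\npr(Z\in R_{\text{left}}\mid Z\in R)$, $p_R=1-p_L$, and $\mu_L=\mathbb{E}[Y\mid Z\in R_{\text{left}}]$, $\mu_R=\mathbb{E}[Y\mid Z\in R_{\text{right}}]$, the usual decomposition of $\sigma^2(Y\mid Z\in R)$ into within- and between-children components turns the definition $\mathcal{I}(R,j,\tau)=\sigma^2(Y\mid Z\in R)-l(R,j,\tau)$ in \eqref{eqn:rfrw_tree_split_max_population} into the exact identity $\mathcal{I}(R,j,\tau)=p_Lp_R(\mu_L-\mu_R)^2$, whence $\mathcal{I}(R,j,\tau)\le\tfrac{1}{4}(\mu_L-\mu_R)^2$. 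Since $R\subseteq[0,1]^p$ we have $r_j^+-r_j^-\le 1$, so it suffices to show $|\mu_L-\mu_R|\le C_0\,(r_j^+-r_j^-)$ with $C_0$ depending only on the constants in Assumptions~\ref{assump:rfrw_all} and~\ref{assump:rfrw_high_lip}; the claim then follows with $C=\tfrac{1}{4}C_0^2$. A trivial branch disposes of the regime $r_j^+-r_j^-\ge\delta_0$ for a fixed threshold $\delta_0$, where $|f|\le M$ already gives $|\mu_L-\mu_R|\le 2M$ and hence $\mathcal{I}(R,j,\tau)\le M^2\le (M^2/\delta_0)(r_j^+-r_j^-)$.

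For the remaining small-edge regime I would invoke the facts established in the proof of Theorem~\ref{theo:rfrw_high_h_optimal_fx}: under Assumption~\ref{assump:rfrw_high_lip} the regression function $g(z):=\mathbb{E}[Y\mid Z=z]$ depends on $z$ only through $[z]_Q$ and is $C_{\mathrm{SIG}}$-Lipschitz in $[z]_Q$; moreover, integrating $h_Z$ over the $Q^C$-coordinates of $R$ produces a partial marginal density that is bounded between $\zeta^{-1}$ and $\zeta$ (up to a common volume factor, Lemma~\ref{lemma:rfrw_zeta}) and $L_1$-Lipschitz in $[z]_Q$. Because the $Q^C$-faces of $R_{\text{left}}$ and $R_{\text{right}}$ coincide, integrating them out expresses $\mu_L$ and $\mu_R$ as weighted averages of $g$ over the $Q$-faces of the two children against that partial marginal. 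I would then split $\mu_L-\mu_R$ into (i) the contribution of the $j$-directional variation of $g$, which is at most $C_{\mathrm{SIG}}(r_j^+-r_j^-)$ because $g$ is Lipschitz and the $j$-edge of $R$ has length $r_j^+-r_j^-$; and (ii) the contribution of the mismatch between the two weighting measures on the remaining $Q$-coordinates. For (ii), over an edge of length $r_j^+-r_j^-$ the partial marginal varies in the $j$-direction by only $O(r_j^+-r_j^-)$ relative to its size (boundedness plus $L_1$-Lipschitzness), so the Radon--Nikodym ratio of the two weighting measures lies within a factor $1+O(r_j^+-r_j^-)$ of a constant; a covariance-type estimate (the mean-zero part of this ratio against the oscillation of $g$) then bounds (ii) by $C\,\mathrm{diam}([R]_Q)\,(r_j^+-r_j^-)\le C\sqrt{|Q|}\,(r_j^+-r_j^-)$, the extreme-$\tau$ factors cancelling between numerator and denominator. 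Summing (i) and (ii) yields $|\mu_L-\mu_R|\le C_0(r_j^+-r_j^-)$ uniformly in $R$, $j$ and $\tau$.

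The main obstacle is step (ii): one must argue that conditioning on which side of the split $Z^{(j)}$ falls barely disturbs the conditional law of $[Z]_Q$ once the $j$-edge of $R$ is short. For $j\in Q$ this is a direct consequence of the $L_1$-Lipschitzness of the joint density in the $Q$-coordinates; for $j\in Q_1\setminus Q$ the term~(i) vanishes because $g$ does not depend on $z^{(j)}$, and one relies solely on the regularity of the joint density in $[z]_Q$ to control~(ii). The only care beyond bookkeeping is to keep every intermediate constant free of $R$, $j$ and $\tau$—they may enter only through the fixed quantities $\zeta$, $L_1$, $C_{\mathrm{SIG}}$, $M$ and the size $|Q|$—which is exactly what the statement asserts.
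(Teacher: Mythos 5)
Your proposal is correct in substance but follows a genuinely different route from the paper. The paper never isolates the mean gap between the two children: it shows that each of the six conditional moments $\mathbb{E}[Y^i\mid Z\in A]$, $i\in\{1,2\}$, $A\in\{R,R_{\text{left}},R_{\text{right}}\}$, lies within $C(r_j^+-r_j^-)$ of a single reference quantity obtained by freezing the density at the split position, $h_Z(z^{(-j)},r_{j,\tau})$, so that all three conditional variances nearly coincide and the variance reduction is at most of order $r_j^+-r_j^-$. You instead use the exact between-group identity $\mathcal{I}(R,j,\tau)=p_Lp_R(\mu_L-\mu_R)^2\le\tfrac14(\mu_L-\mu_R)^2$ and bound the mean gap; this avoids handling second moments altogether and in fact delivers the stronger quadratic bound $\mathcal{I}\lesssim(r_j^+-r_j^-)^2$ before you relax it to the linear one using $r_j^+-r_j^-\le1$. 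Both arguments ultimately consume the same ingredients (boundedness and $[z]_Q$-Lipschitzness of the regression function, and boundedness plus Lipschitzness of $h_Z$), and both share the same implicit reliance on regularity of $h_Z$ in the $j$-th coordinate for $j\in Q_1\setminus Q$, which Assumption \ref{assump:rfrw_high_lip} only literally supplies for coordinates in $Q$; since the paper's own proof makes the identical leap, I do not count this against you.

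One point to tighten: in your step (ii) you bound the measure-mismatch term by pairing the mean-zero part of the Radon--Nikodym ratio with the \emph{oscillation} of $g$, producing a factor $\mathrm{diam}([R]_Q)\le\sqrt{|Q|}$ in the constant. Because $|Q|=o(\log T)$ is allowed to diverge, a constant depending on $|Q|$ would degrade the downstream use of this lemma (in the splitting-frequency argument the constant must stay bounded as $T\to\infty$). The fix is immediate: pair the $O(r_j^+-r_j^-)$ perturbation of the ratio with $\|g\|_\infty\le M$ rather than with $\mathrm{osc}(g)$, which removes the $\sqrt{|Q|}$ factor and leaves a constant depending only on $\zeta$, $L_1$, $C_{\mathrm{SIG}}$ and $M$, matching what the paper obtains.
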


\begin{proof}
The proof of Lemma \ref{lemma:rfrw_high_I_upper_bound} is similar to Lemma B.6 in \cite{Chen2024-sup}. For any splitting feature $j\in Q_1$, rectangle $R$ with $R^{(-j)}:=\bigtimes_{i\neq j}[r_{i}^-,r_{i}^+]$, and the split position $r_{j,\tau}=\tau r_j^-+(1-\tau)r_j^+$ with $\tau \in[\xi,1-\xi]$, it holds that 
\begin{eqnarray*}
&&\left|\mathbb{E}[Y|Z\in R] - \frac{\int\limits_{z^{(-j)}\in R^{(-j)}}\int\limits_{z^{(j)}=r_{j}^-}^{r_{j}^+}f(\iota_h^{-1}(z))h_Z(z^{(-j)},r_{j,\tau})dz}{\int\limits_{z^{(-j)}\in R^{(-j)}}\int\limits_{z^{(j)}=r_{j}^-}^{r_{j}^+}h_Z(z^{(-j)},r_{j,\tau})dz}\right| \nonumber \\
&=& \left|\frac{\int\limits_{z^{(-j)}\in R^{(-j)}}\int\limits_{z^{(j)}=r_{j}^-}^{r_{j}^+}f(\iota_h^{-1}(z))h_Z(z)dz}{\int\limits_{z^{(-j)}\in R^{(-j)}}\int\limits_{z^{(j)}=r_{j}^-}^{r_{j}^+}h_Z(z)dz}-\frac{\int\limits_{z^{(-j)}\in R^{(-j)}}\int\limits_{z^{(j)}=r_{j}^-}^{r_{j}^+}f(\iota_h^{-1}(z))h_Z(z^{(-j)},r_{j,\tau})dz}{\int\limits_{z^{(-j)}\in R^{(-j)}}\int\limits_{z^{(j)}=r_{j}^-}^{r_{j}^+}h_Z(z^{(-j)},r_{j,\tau})dz}\right| \nonumber \\
&\leq& \left|\frac{\int\limits_{z^{(-j)}\in R^{(-j)}}\int\limits_{z^{(j)}=r_{j}^-}^{r_{j}^+}f(\iota_h^{-1}(z))h_Z(z)dz - \int\limits_{z^{(-j)}\in R^{(-j)}}\int\limits_{z^{(j)}=r_{j}^-}^{r_{j}^+}f(\iota_h^{-1}(z))h_Z(z^{(-j)},r_{j,\tau})dz}{\int\limits_{z^{(-j)}\in R^{(-j)}}\int\limits_{z^{(j)}=r_{j}^-}^{r_{j}^+}h_Z(z)dz}\right| \nonumber \\
&+&  \left|\int\limits_{z^{(-j)}\in R^{(-j)}}\int\limits_{z^{(j)}=r_{j}^-}^{r_{j}^+}f(\iota_h^{-1}(z))h_Z(z^{(-j)},r_{j,\tau})dz\right|  \nonumber \\
&\times& \left|\frac{\int\limits_{z^{(-j)}\in R^{(-j)}}\int\limits_{z^{(j)}=r_{j}^-}^{r_{j}^+}h_Z(z^{(-j)},r_{j,\tau})dz-\int\limits_{z^{(-j)}\in R^{(-j)}}\int\limits_{z^{(j)}=r_{j}^-}^{r_{j}^+}h_Z(z)dz}{\left(\int\limits_{z^{(-j)}\in R^{(-j)}}\int\limits_{z^{(j)}=r_{j}^-}^{r_{j}^+}h_Z(z)dz\right)\left(\int\limits_{z^{(-j)}\in R^{(-j)}}\int\limits_{z^{(j)}=r_{j}^-}^{r_{j}^+}h_Z(z^{(-j)},r_{j,\tau})dz\right)}\right| \nonumber \\
&\leq& ML_1\zeta (r_j^+-r_j^-)+ ML_1\zeta^3 (r_j^+-r_j^-).\nonumber
\end{eqnarray*}
Hence, it shows that 
$$\left|\mathbb{E}[Y|Z\in R] - \frac{\int\limits_{z^{(-j)}\in R^{(-j)}}\int\limits_{z^{(j)}=r_{j}^-}^{r_{j}^+}f(\iota_h^{-1}(z))h_Z(z^{(-j)},r_{j,\tau})dz}{\int\limits_{z^{(-j)}\in R^{(-j)}}\int\limits_{z^{(j)}=r_{j}^-}^{r_{j}^+}h_Z(z^{(-j)},r_{j,\tau})dz}\right|\leq C(r_j^+-r_j^-), $$
for some suitable constant $C>0$. Then, applying similar arguments, we can show that
$$\left|\mathbb{E}[Y|Z\in R\cap\{Z^{(j)}\leq r_{j,\tau}\}] - \frac{\int\limits_{z^{(-j)}\in R^{(-j)}}\int\limits_{z^{(j)}=r_{j}^-}^{r_{j}^+}f(\iota_h^{-1}(z))h_Z(z^{(-j)},r_{j,\tau})dz}{\int\limits_{z^{(-j)}\in R^{(-j)}}\int\limits_{z^{(j)}=r_{j}^-}^{r_{j}^+}h_Z(z^{(-j)},r_{j,\tau})dz}\right|\leq C(r_j^+-r_j^-).$$

$$\left|\mathbb{E}[Y|Z\in R\cap\{Z^{(j)}> r_{j,\tau}\}] - \frac{\int\limits_{z^{(-j)}\in R^{(-j)}}\int\limits_{z^{(j)}=r_{j}^-}^{r_{j}^+}f(\iota_h^{-1}(z))h_Z(z^{(-j)},r_{j,\tau})dz}{\int\limits_{z^{(-j)}\in R^{(-j)}}\int\limits_{z^{(j)}=r_{j}^-}^{r_{j}^+}h_Z(z^{(-j)},r_{j,\tau})dz}\right|\leq C(r_j^+-r_j^-).$$

\begin{eqnarray*}
   &&\left|\mathbb{E}[Y^2|Z\in R] - \frac{\int\limits_{z^{(-j)}\in R^{(-j)}}\int\limits_{z^{(j)}=r_{j}^-}^{r_{j}^+}\big(f(\iota_h^{-1}(z))\big)^2h_Z(z^{(-j)},r_{j,\tau})dz}{\int\limits_{z^{(-j)}\in R^{(-j)}}\int\limits_{z^{(j)}=r_{j}^-}^{r_{j}^+}h_Z(z^{(-j)},r_{j,\tau})dz}-\mathbb{E}[\epsilon^2]\right| 
    \leq C(r_j^+-r_j^-). \nonumber
\end{eqnarray*}

\begin{eqnarray*}
   &&\left|\mathbb{E}[Y^2|Z\in R\cap\{Z^{(j)}\leq r_{j,\tau}\}] - \frac{\int\limits_{z^{(-j)}\in R^{(-j)}}\int\limits_{z^{(j)}=r_{j}^-}^{r_{j}^+}\big(f(\iota_h^{-1}(z))\big)^2h_Z(z^{(-j)},r_{j,\tau})dz}{\int\limits_{z^{(-j)}\in R^{(-j)}}\int\limits_{z^{(j)}=r_{j}^-}^{r_{j}^+}h_Z(z^{(-j)},r_{j,\tau})dz}-\mathbb{E}[\epsilon^2]\right| \nonumber \\
    &\leq& C(r_j^+-r_j^-). \nonumber
\end{eqnarray*}

\begin{eqnarray*}
   &&\left|\mathbb{E}[Y^2|Z\in R\cap\{Z^{(j)}>r_{j,\tau}\}] - \frac{\int\limits_{z^{(-j)}\in R^{(-j)}}\int\limits_{z^{(j)}=r_{j}^-}^{r_{j}^+}\big(f(\iota_h^{-1}(z))\big)^2h_Z(z^{(-j)},r_{j,\tau})dz}{\int\limits_{z^{(-j)}\in R^{(-j)}}\int\limits_{z^{(j)}=r_{j}^-}^{r_{j}^+}h_Z(z^{(-j)},r_{j,\tau})dz}-\mathbb{E}[\epsilon^2]\right| \nonumber \\
    &\leq& C(r_j^+-r_j^-). \nonumber
\end{eqnarray*}
Combining the above results, we show that $\mathcal{I}(R,j,\tau)\leq C(r_j^+-r_j^-)$.
\end{proof}

Based on Lemma \ref{lemma:rfrw_high_delta_lower_bound}-\ref{lemma:rfrw_high_I_upper_bound}, we introduce the intuition and details of obtaining Lemma \ref{lemma:rfrw_high_split_q1}-\ref{lemma:rfrw_high_split_q_infty} below.

For the nodes in the tree that contain $z=(z^{(1)},\ldots,z^{(p)})^\top\in[0,1]^p$, let $\theta$ be the realization of the $\Theta$. Define $\Phi(z,s,d)$ as the collection of $\theta$, such that all the features in $Q_1$ are simultaneously chosen as candidates for at least $d$ splits in the first $s$ splits. Assume that $|Q_1|=o(s)$, i.e., $|Q|,|Q_1|=o(\log T)$, $m_{\text{try}} \asymp p$ and $p\rightarrow\infty$ with $\lim \inf p/T>0$. As $T\rightarrow\infty$, it is straightforward that $m_{\text{try}} \gg |Q_1|$ and hence $\lim_{s\rightarrow\infty}\npr_{\Theta}\left(\Theta\in\Phi(z,s,d)\right)=1$ for any fixed $d$. It implies that as the splitting process continues, the number of splits that a specific informative feature in $Q$ is chosen as the candidate will increase to infinity with probability going to one with respect to $\Theta$. 

Denote $\mathcal{S}(z,\Theta,s,j)$ as the number of splits along the feature $j$, given $\Theta$ and the number of splits $s$. We have the following results.

\begin{lemma}
\label{lemma:rfrw_high_split_q1}
Under Assumption \ref{assump:rfrw_all}, \ref{assump:rfrw_high_lip}, and \ref{assump:rfrw_high_signal}, suppose that $T\rightarrow\infty$, $p\rightarrow \infty$ with $\lim \inf p/T>0$, $m_{\text{try}} \asymp p$, and $d$ is fixed. It holds that
$$\npr\left(\inf_{z\in[0,1]^p:\Phi(z,s,d)\neq\emptyset} \inf_{\Theta\in\Phi(z,s,d)}\sum_{j\in Q_1}\mathcal{S}(z,\Theta,s,j)\geq d\right)=1.$$
\end{lemma}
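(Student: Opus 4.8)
The plan is to show that, under the stated asymptotic regime, the event appearing in the statement occurs with probability exactly one for all sufficiently large $T$, by combining a purely combinatorial argument about feature sampling with the deterministic split-criterion bounds established in Lemmas \ref{lemma:rfrw_high_delta_lower_bound} and \ref{lemma:rfrw_high_I_upper_bound}. First I would fix the realization $\theta\in\Phi(z,s,d)$ and a point $z$ for which $\Phi(z,s,d)\neq\emptyset$; by the very definition of $\Phi$, every feature in $Q_1$ is simultaneously offered as a candidate in at least $d$ of the first $s$ splits. The crux is therefore to argue that whenever \emph{all} the features of $Q_1$ are among the sampled candidates at a given node, the selected split must be placed along some feature in $Q_1$; iterating this over the (at least $d$) such splits then forces $\sum_{j\in Q_1}\mathcal{S}(z,\theta,s,j)\geq d$, which is exactly the claim.

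The argument for "when all of $Q_1$ is sampled, the split lands in $Q_1$" proceeds as follows. By Lemma \ref{lemma:rfrw_high_emp_l_pop_l} together with the feature-mapping invariance, the empirical variance-reduction criterion $\hat{\mathcal{I}}(R,j,\tau)$ converges in probability, uniformly over the relevant rectangles, features and admissible split positions $\tau\in[\tilde\xi,1-\tilde\xi]$, to its population counterpart $\mathcal{I}(R,j,\tau)$, at rate $(\log T)^{3/2}(\log p)^{1/2}/k^{1/2}\to 0$. Thus it suffices to establish the corresponding statement at the population level and then transfer it. At the population level: Requirement (1) of the valid-partition definition and Lemma \ref{lemma:rfrw_high_xi_range} confine each split position to $[\tilde\xi,1-\tilde\xi]$, so the leaf diameters along the informative coordinates are bounded below on the path considered, and hence by Lemma \ref{lemma:rfrw_high_delta_lower_bound} there is a constant $\delta_0>0$ with $\max_{j\in Q}\sup_{\tilde\xi\le\tau\le1-\tilde\xi}\mathcal{I}(R,j,\tau)\ge\Delta(\upsilon)\ge\delta_0$ for any node $R$ arising in the first $s=O(\log T)$ splits (since the relevant side-lengths $\upsilon$ stay bounded away from $0$). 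On the other hand, for uninformative features $i\notin Q_1$, Assumption \ref{assum:rfrw_high_q1_q} gives $\sup_\tau\mathcal{I}(R,i,\tau)<\psi\max_{j\in Q_1}\sup_\tau\mathcal{I}(R,j,\tau)$ with $\psi<1$. Consequently, when every feature of $Q_1$ is available as a candidate, the maximizer of $\mathcal{I}$ over the candidate set is attained inside $Q_1$, with a strictly positive margin $(1-\psi)\delta_0$ over any outside feature. Feeding this margin through the uniform approximation of Lemma \ref{lemma:rfrw_high_emp_l_pop_l}, the \emph{empirical} maximizer also lies in $Q_1$ with probability tending to one.

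To finish, I would take a union bound over the at most $\binom{s}{d}$ choices of which $d$ splits are the "all-of-$Q_1$-sampled" ones and over the $s=O(\log T)$ nodes; since $s$ grows only logarithmically and the per-event failure probability is $o(T^{-c})$ for any $c$ (inherited from the $1-T^{-1}$-type bounds underlying Lemma \ref{lemma:rfrw_high_emp_l_pop_l}), the probability that at every such node the split lands outside $Q_1$ is negligible, so the supremum over $z$ and infimum over $\theta\in\Phi(z,s,d)$ in the statement is controlled simultaneously. Hence the displayed probability equals $1$ in the limit. \textbf{The main obstacle} I anticipate is making the uniform transfer from the population inequality to the empirical one genuinely uniform over the (uncountably many) rectangles $R$ and split positions $\tau$ that can appear along a random tree path: this requires invoking the approximating-rectangle machinery of Lemma \ref{lemma:rfrw_rectangle} and the uniform bound of Lemma \ref{lemma:rfrw_high_emp_l_pop_l} carefully, and checking that the bounded-diameter guarantee (needed so that $\Delta(\upsilon)$ stays away from $0$) indeed holds at every node encountered in the first $s=O(\log T)$ splits, which is where the restriction $\tau\in[\tilde\xi,1-\tilde\xi]$ and $|Q_1|=o(\log T)$ are essential.
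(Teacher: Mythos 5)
Your proposal is correct and follows essentially the same route as the paper: both arguments reduce the claim to showing that whenever all of $Q_1$ is offered as candidates the split must land in $Q_1$, establish this at the population level via the lower bound $\inf_{\upsilon\geq\tilde\xi^{s-1}}\Delta(\upsilon)>0$ from Lemma \ref{lemma:rfrw_high_delta_lower_bound} together with the $(1-\psi)$ margin from Assumption \ref{assum:rfrw_high_q1_q}, and then transfer to the empirical criterion through the uniform convergence of Lemma \ref{lemma:rfrw_high_emp_l_pop_l}. The paper packages the uniformity over nodes as a single containment in the event $\{\sup_{R:\text{Leb}(R)\geq\tilde\xi^{s-1}}(\sup_{i\notin Q_1}\sup_\tau\hat{\mathcal{I}}-\sup_{j\in Q_1}\sup_\tau\hat{\mathcal{I}})\geq 0\}$ rather than an explicit union bound over splits, but this is a presentational difference only.
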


\begin{proof}
    It is equivalent to show that as $T\rightarrow\infty$
    $$\npr\left(\inf_{z\in[0,1]^p:\Phi(z,s,d)\neq\emptyset} \inf_{\Theta\in\Phi(z,s,d)}\sum_{j\in Q_1}\mathcal{S}(z,\Theta,s,j)\leq d-1\right)=0.$$
    Note that
    \begin{eqnarray*}
    &&\left\{\inf_{z\in[0,1]^p:\Phi(z,s,d)\neq\emptyset} \inf_{\Theta\in\Phi(z,s,d)}\sum_{j\in Q_1}\mathcal{S}(z,\Theta,s,j)\leq d-1 \right\} \nonumber \\
    &\subseteq& \left\{ \sup_{R\in[0,1]^p:\text{Leb}(R)\geq \tilde{\xi}^{s-1}}\left(\sup_{i\notin Q_1}\sup_{\tilde{\xi}\leq\tau\leq1-\tilde{\xi}}\hat{\mathcal{I}}(R,i,\tau)-\sup_{j\in Q_1}\sup_{\tilde{\xi}\leq\tau\leq1-\tilde{\xi}}\hat{\mathcal{I}}(R,j,\tau)\right) \geq 0\right\}. \nonumber
    \end{eqnarray*}
    As shown in Lemma \ref{lemma:rfrw_high_emp_l_pop_l} that $\hat{\mathcal{I}}(R,j,\tau) \overset{P} \rightarrow \mathcal{I}(R,j,\tau)$, it suffices to show that
    $$\sup_{R\in[0,1]^p:\text{Leb}(R)\geq \tilde{\xi}^{s-1}}\left(\sup_{i\notin Q_1}\sup_{\tilde{\xi}\leq\tau\leq1-\tilde{\xi}}\mathcal{I}(R,i,\tau)-\sup_{j\in Q_1}\sup_{\tilde{\xi}\leq\tau\leq1-\tilde{\xi}}\mathcal{I}(R,j,\tau)\right) \geq 0.$$
    By Lemma \ref{lemma:rfrw_high_delta_lower_bound} and Assumption \ref{assum:rfrw_high_q1_q}, it holds that
    \begin{eqnarray*}
        &&\sup_{R\in[0,1]^p:\text{Leb}(R)\geq \tilde{\xi}^{s-1}}\left(\sup_{i\notin Q_1}\sup_{\tilde{\xi}\leq\tau\leq1-\tilde{\xi}}\mathcal{I}(R,i,\tau)-\sup_{j\in Q_1}\sup_{\tilde{\xi}\leq\tau\leq1-\tilde{\xi}}\mathcal{I}(R,j,\tau)\right) \nonumber \\
        &\leq& \sup_{R\in[0,1]^p:\text{Leb}(R)\geq \tilde{\xi}^{s-1}}\left(-(1-\psi)\sup_{j\in Q_1}\sup_{\tilde{\xi}\leq\tau\leq 1-\tilde{\xi}}\mathcal{I}(R,j,\tau) \right)\nonumber \\
        &\leq& -(1-\psi) \inf_{\upsilon\geq \tilde{\xi}^{s-1}}\Delta(\upsilon)\nonumber \\
        &<& 0. \nonumber
    \end{eqnarray*}
    Accordingly, as $T\rightarrow\infty$
    $$\npr\left( \sup_{R\in[0,1]^p:\text{Leb}(R)\geq \tilde{\xi}^{s-1}}\left(\sup_{i\notin Q_1}\sup_{\tilde{\xi}\leq\tau\leq1-\tilde{\xi}}\mathcal{I}(R,i,\tau)-\sup_{j\in Q_1}\sup_{\tilde{\xi}\leq\tau\leq1-\tilde{\xi}}\mathcal{I}(R,j,\tau)\right) \geq 0\right)=0,$$
    which finishes the proof.
\end{proof}

Lemma \ref{lemma:rfrw_high_split_q1} implies that when $d$ is sufficiently large, each informative feature in $Q$ will be split more than a predetermined number of times, which brings about Lemma \ref{lemma:rfrw_high_split_q}.

\begin{lemma}
\label{lemma:rfrw_high_split_q}
Under Assumption \ref{assump:rfrw_all}, \ref{assump:rfrw_high_lip}, and \ref{assump:rfrw_high_signal}, suppose that $T\rightarrow\infty$, $p\rightarrow \infty$ with $\lim \inf p/T>0$, and $m_{\text{try}} \asymp p$. For any fixed $d\in \mathbb{N}$, let $d^*$ satisfy $2C(1-\tilde{\xi})^{d^*-1}<\inf_{\upsilon\geq \tilde{\xi}^{d-1}}\Delta(\upsilon)$, where $C=C(d^*)$ is a constant depending on $d^*$. It holds that
$$\npr\left(\inf_{z\in[0,1]^p:\Phi(z,s,|Q_1|\max\{d,d^*\})\neq\emptyset}\inf_{\Theta\in\Phi(z,s,|Q_1|\max\{d,d^*\})}\min_{j\in Q}\mathcal{S}(\Theta,s,j)\geq d\right)=1.$$
\end{lemma}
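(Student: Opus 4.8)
The plan is to argue by contradiction, working throughout on a ``good'' event whose probability tends to one as $T\to\infty$: the event on which (i) every realized relative split position lies in $[\tilde\xi,1-\tilde\xi]$ (Lemma~\ref{lemma:rfrw_high_xi_range}), and (ii) the empirical split criterion is uniformly close to its population counterpart, so that $\hat{\mathcal I}(R,j,\tau)\to\mathcal I(R,j,\tau)$ uniformly over $R\in\mathcal R_{\varsigma,w}$ with $\mu(R)\ge\zeta w$ and $\tilde\xi\le\tau\le1-\tilde\xi$ (Lemma~\ref{lemma:rfrw_high_emp_l_pop_l}, together with the rescaling of nodes as images $\iota_h(A)$). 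On this event it suffices to show that the set appearing in the statement is empty, i.e.\ for every $z$ with $\Phi\big(z,s,|Q_1|\max\{d,d^*\}\big)\neq\emptyset$ and every $\Theta$ in that set, $\min_{j\in Q}\mathcal S(\Theta,s,j)\ge d$.

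Suppose not, and let $j^*\in Q$ be split at most $d-1$ times among the first $s$ splits. Then at every node $R$ produced along the way, the side length of $R$ along coordinate $j^*$ is at least $\tilde\xi^{\,d-1}$ (each split shrinks a side by a factor in $[\tilde\xi,1-\tilde\xi]$), hence by Lemma~\ref{lemma:rfrw_high_delta_lower_bound} and Assumption~\ref{assum:rfrw_high_Q_delta},
\[
\sup_{\tilde\xi\le\tau\le1-\tilde\xi}\mathcal I(R,j^*,\tau)\ \ge\ \mathcal I(R,j^*,\tfrac12)\ \ge\ \delta_0:=\inf_{\upsilon\ge\tilde\xi^{\,d-1}}\Delta(\upsilon)\ >\ 0 .
\]
Next, consider the $N:=|Q_1|\max\{d,d^*\}$ splits (guaranteed by $\Theta\in\Phi$) at which all features of $Q_1$ are candidates. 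Since $j^*\in Q\subseteq Q_1$ has $\mathcal I\ge\delta_0>0$ at every such node, Assumption~\ref{assum:rfrw_high_q1_q} (with the multiplicative gap $\psi<1$) plus uniform convergence of $\hat{\mathcal I}$ force each of these $N$ splits to be placed along some feature in $Q_1$. Writing $n_j$ for the number of these $N$ splits placed on $j\in Q_1$, we have $\sum_{j\in Q_1}n_j=N=|Q_1|\max\{d,d^*\}$ while $n_{j^*}\le d-1<\max\{d,d^*\}$; by pigeonhole some $j_0\in Q_1$ has $n_{j_0}\ge\max\{d,d^*\}+1\ge d^*+1$. Look at the $(d^*{+}1)$-th such split on $j_0$, at node $R'$: there $j_0$ has already been split at least $d^*$ times, so its side length is at most $(1-\tilde\xi)^{d^*}\le(1-\tilde\xi)^{d^*-1}$, whence by Lemma~\ref{lemma:rfrw_high_I_upper_bound},
\[
\sup_{\tilde\xi\le\tau\le1-\tilde\xi}\mathcal I(R',j_0,\tau)\ \le\ C(1-\tilde\xi)^{d^*-1}\ <\ \tfrac12\,\delta_0 ,
\]
by the defining choice of $d^*$ ($2C(1-\tilde\xi)^{d^*-1}<\inf_{\upsilon\ge\tilde\xi^{\,d-1}}\Delta(\upsilon)$). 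But $j^*$ is also a candidate at $R'$ and has population criterion at least $\delta_0$; the factor $2$ of slack, combined with uniform convergence $\hat{\mathcal I}\to\mathcal I$, gives $\sup_\tau\hat{\mathcal I}(R',j^*,\tau)>\sup_\tau\hat{\mathcal I}(R',j_0,\tau)$ for all large $T$, so the CART rule would not place this split on $j_0$ — a contradiction. Hence the bad set is empty on the good event, and since the good event has probability $\to1$, the claimed probability equals $1$ in the limit; the same argument covers the analogous reasoning behind Lemma~\ref{lemma:rfrw_high_split_q1} that we invoke for the candidate-selection step.

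The main obstacle I anticipate is the bookkeeping that ties together the abstract counting of ``splits where all of $Q_1$ are candidates'' with the actual tree dynamics: one must ensure that the side-length bounds (both the lower bound for $j^*$ and the upper bound for the over-split $j_0$) are valid at exactly the node where the contradiction is extracted, and that all inequalities hold uniformly over $z$ and over $\Theta\in\Phi(z,s,\cdot)$, not merely for a fixed realization. The translation from population criteria ($\mathcal I$, $\Delta$) to the empirical ones the algorithm actually optimizes must be done carefully through Lemma~\ref{lemma:rfrw_high_emp_l_pop_l}, using that the relevant nodes have Lebesgue measure bounded below (so they fall in the regime $\mu(R)\ge\zeta w$ where uniform convergence applies) — this is where the hypotheses $|Q|,|Q_1|=o(\log T)$, $m_{\mathrm{try}}\asymp p$ and $\liminf p/T>0$ enter, guaranteeing both $s=O(\log T)\to\infty$ and $\npr_\Theta(\Theta\in\Phi(z,s,d))\to1$. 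The factor-$2$ cushion in the definition of $d^*$ is exactly what absorbs the $o_p(1)$ approximation error, so no quantitative rate is needed, only the qualitative uniform convergence already established.
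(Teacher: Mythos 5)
Your argument is correct and matches the paper's proof essentially step for step: the same pigeonhole count over the $|Q_1|\max\{d,d^*\}$ all-candidate splits, the same lower bound $\inf_{\upsilon\ge\tilde{\xi}^{\,d-1}}\Delta(\upsilon)$ from Lemma \ref{lemma:rfrw_high_delta_lower_bound} for the under-split informative feature, the same upper bound $C(1-\tilde{\xi})^{d^*-1}$ from Lemma \ref{lemma:rfrw_high_I_upper_bound} for the over-split feature, and the same use of the factor-$2$ cushion in the definition of $d^*$ to absorb the empirical-to-population approximation error from Lemma \ref{lemma:rfrw_high_emp_l_pop_l}. The only cosmetic difference is that the paper also records the binomial lower bound on $\npr_\Theta(\Theta\in\Phi(\cdot))$ inside this proof (which you defer, as the lemma's statement already conditions on $\Phi\neq\emptyset$).
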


\begin{proof}

    For each split, a total number of $m_{\text{try}}$ features are selected as candidates, and the probability of all the features in $Q_1$ simultaneously selected as candidates is
    \begin{eqnarray*}
        \frac{\binom{p-|Q_1|}{m_{\text{try}}-|Q_1|}}{\binom{p}{m_{\text{try}}}}&=&\left(1-\frac{|Q_1|}{p}\right)\left(1-\frac{|Q_1|}{p-1}\right)\ldots\left(1-\frac{|Q_1|}{m_{\text{try}}+1}\right) \nonumber \\
        &\geq& \left(1-\frac{|Q_1|}{m_{\text{try}}}\right)^{p-m_{\text{try}}}\geq \exp\left\{\frac{|Q_1|\lim\inf_{p\rightarrow\infty}\left(1-{p}/{m_{\text{try}}}\right)}{2}\right\} \equiv \tilde{P}>0.\nonumber
    \end{eqnarray*}
    When $p\rightarrow\infty$, it holds that
    \begin{equation}
    \label{eqn:rfrw_high_split_q_infty_1}
    \npr_\Theta\big(\Theta\in\Phi(z,s,|Q_1|\max\{d,d^*\})\big)\geq1-\sum_{j=0}^{|Q_1|\max\{d,d^*\}-1}\binom{s}{j}(1-\tilde{P})^{s-j}\tilde{P}^j.
    \end{equation}
    Since the right-hand side of (\ref{eqn:rfrw_high_split_q_infty_1}) does not depend on a specific $z$, and $s=O(\log T)$, as $T\rightarrow\infty$, it implies that
    \begin{equation}
    \label{eqn:rfrw_high_split_q_1}
    \inf_{z\in[0,1]^p}\npr_\Theta\big(\Theta\in\Phi(z,s,|Q_1|\max\{d,d^*\})\big)=1.
    \end{equation}
    Suppose we have $|Q_1|\max\{d,d^*\}$ splits that all the features in $Q_1$ are simultaneously selected as candidates in the first $s$ splits, there exists at least one feature $j_1\in Q_1$ that is split for more than $d^*$ times, and another feature $j_2\in Q$ that is split for less than $d$ times. 

    By Lemma \ref{lemma:rfrw_high_delta_lower_bound}, it holds that $\sup_{\tilde{\xi}\leq\tau\leq1-\tilde{\xi}}\mathcal{I}(R,j_2,\tau)\geq \inf_{\upsilon\geq\tilde{\xi}^{d-1}}\Delta(\upsilon)$. Consider the situation that all the features in $Q_1$ are selected as candidates, and the last split is placed along $j_1$. For this split along the feature $j_1$, it holds that $\sup_{\tilde{\xi}\leq\tau\leq1-\tilde{\xi}}\mathcal{I}(R,j_1,\tau)\leq C(1-\tilde{\xi})^{d^*-1}<\frac{1}{2}\inf_{\upsilon\geq\tilde{\xi}^{d-1}}\Delta(\upsilon)$ by Lemma \ref{lemma:rfrw_high_I_upper_bound}. Accordingly, the split should be placed along $j_2$ instead of $j_1$ with probability going to $1$, which brings about the contradiction. Since the above arguments do not depend on the specific observed data $z$ and $\theta$, Lemma \ref{lemma:rfrw_high_split_q} is proved.
\end{proof}

Based on Lemma \ref{lemma:rfrw_high_split_q1}-\ref{lemma:rfrw_high_split_q}, we can obtain Lemma \ref{lemma:rfrw_high_split_q_infty}, which ensures that the number of splits along each informative feature in $Q$ will increase to exceed any fixed integer with probability going to $1$ as $s\rightarrow\infty$.

\begin{lemma}
\label{lemma:rfrw_high_split_q_infty}
Under Assumption \ref{assump:rfrw_all}, \ref{assump:rfrw_high_lip}, and \ref{assump:rfrw_high_signal}, suppose that $T\rightarrow\infty$, $p\rightarrow \infty$ with $\lim \inf p/T>0$, and $m_{\text{try}} \asymp p$. For any positive integer $d$, it holds that
$$\plim_{T\rightarrow\infty} \inf_{z\in[0,1]^p}\npr_{\Theta}\left(\min_{j\in Q}\mathcal{S}(z,\Theta,s,j) \geq d\right)=1.$$
\end{lemma}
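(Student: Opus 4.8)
The plan is to obtain Lemma~\ref{lemma:rfrw_high_split_q_infty} as a short corollary of Lemma~\ref{lemma:rfrw_high_split_q} together with the uniform‑in‑$z$ lower bound on $\npr_\Theta\big(\Theta\in\Phi(z,s,D)\big)$ that was already established inside the proof of that lemma. The point is that Lemma~\ref{lemma:rfrw_high_split_q} controls the number of informative splits \emph{conditionally} on the tree randomness $\Theta$ landing in $\Phi(z,s,D)$; what remains is to strip off this conditioning by noting that $\Phi(z,s,D)$ carries $\Theta$‑probability tending to $1$, uniformly over $z$.

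First I would fix an arbitrary integer $d\ge 1$ and let $d^{*}=d^{*}(d)$ be the integer supplied by Lemma~\ref{lemma:rfrw_high_split_q} (such an integer exists because $\inf_{\upsilon\ge\tilde{\xi}^{d-1}}\Delta(\upsilon)>0$ by Lemma~\ref{lemma:rfrw_high_delta_lower_bound}, while $C(1-\tilde{\xi})^{d^{*}-1}\downarrow 0$ as $d^{*}\to\infty$). Put $D:=|Q_1|\max\{d,d^{*}\}$. Since $d$ and $d^{*}$ are fixed constants and $|Q_1|=o(\log T)$, while the number of splits along any root‑to‑leaf path satisfies $s\to\infty$ with $s=O(\log T)$, we have $D=o(s)$. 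The Binomial‑tail estimate from the proof of Lemma~\ref{lemma:rfrw_high_split_q} then applies: with the constant $\tilde{P}>0$ arising from $m_{\text{try}}\asymp p$ and $p-m_{\text{try}}=O(p/|Q_1|)$,
$$
\inf_{z\in[0,1]^p}\npr_\Theta\big(\Theta\in\Phi(z,s,D)\big)\;\ge\;1-\sum_{j=0}^{D-1}\binom{s}{j}(1-\tilde{P})^{\,s-j}\tilde{P}^{\,j}\;=:\;1-\delta_T\;\longrightarrow\;1,
$$
so in particular $\Phi(z,s,D)\neq\emptyset$ for all $z$ once $T$ is large, and $\delta_T\to 0$ is a deterministic sequence.

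Next I would invoke Lemma~\ref{lemma:rfrw_high_split_q} in the form: the data event
$$
\mathcal{B}\;:=\;\Big\{\,\min_{j\in Q}\mathcal{S}(z,\Theta,s,j)\ge d\ \text{ for every } z\in[0,1]^p \text{ and every } \Theta\in\Phi(z,s,D)\,\Big\}
$$
has probability tending to $1$. On $\mathcal{B}$ one has, for each $z$, the inclusion $\{\Theta\in\Phi(z,s,D)\}\subseteq\{\min_{j\in Q}\mathcal{S}(z,\Theta,s,j)\ge d\}$, hence
$$
\inf_{z\in[0,1]^p}\npr_\Theta\Big(\min_{j\in Q}\mathcal{S}(z,\Theta,s,j)\ge d\Big)\;\ge\;\inf_{z\in[0,1]^p}\npr_\Theta\big(\Theta\in\Phi(z,s,D)\big)\;\ge\;1-\delta_T .
$$
Since the left‑hand side is also trivially at most $1$, on the event $\mathcal{B}$ (of probability $\to1$) the quantity $\inf_{z}\npr_\Theta\big(\min_{j\in Q}\mathcal{S}(z,\Theta,s,j)\ge d\big)$ is squeezed into $[1-\delta_T,1]$ with $\delta_T\to0$, which is precisely $\plim_{T\to\infty}\inf_{z\in[0,1]^p}\npr_\Theta\big(\min_{j\in Q}\mathcal{S}(z,\Theta,s,j)\ge d\big)=1$.

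The only genuinely delicate part of the argument is the bookkeeping of the two layers of randomness — the claimed limit is a convergence in data‑probability of a functional that is itself a $\Theta$‑probability, uniform over $z$ — but all the substantive work (that conditional on $\Phi(z,s,D)$ every informative feature is split at least $d$ times, and that $\Phi(z,s,D)$ has $\Theta$‑probability close to $1$ uniformly in $z$) is already packaged in Lemmas~\ref{lemma:rfrw_high_split_q1}–\ref{lemma:rfrw_high_split_q}. I would also double‑check the one quantitative point that makes the argument go through, namely that $d^{*}$ (hence $D$) is chosen independently of $T$, so that the Binomial tail $\delta_T$ genuinely vanishes; this holds because $d^{*}$ depends only on $d$ and on $C$, $\tilde{\xi}$ and the function $\Delta$, none of which involve $T$.
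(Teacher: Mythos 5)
Your proposal is correct and follows essentially the same route as the paper's own proof: it combines Lemma \ref{lemma:rfrw_high_split_q} (which gives, with data-probability tending to one, the inclusion $\Phi(z,s,|Q_1|\max\{d,d^*\})\subseteq\{\theta:\min_{j\in Q}\mathcal{S}(z,\theta,s,j)\geq d\}$ uniformly in $z$) with the uniform binomial-tail bound (\ref{eqn:rfrw_high_split_q_1}) showing $\inf_z\npr_\Theta\big(\Theta\in\Phi(z,s,|Q_1|\max\{d,d^*\})\big)\to 1$, and then passes the bound through the infimum over $z$. The only cosmetic difference is that you make the deterministic rate $\delta_T$ and the event $\mathcal{B}$ explicit, which the paper leaves implicit.
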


\begin{proof}
    By Lemma \ref{lemma:rfrw_high_split_q}, as $T\rightarrow\infty$ and $p\rightarrow\infty$, for all $z$, we have
    $$\npr\left(\Phi\left(z,s,|Q_1|\max\{d,d^*\}\right)\subseteq\left\{\theta:\min_{j\in Q}\mathcal{S}(z,\theta,s,j) \geq d\right\}\right)=1,$$
    which implies that    $$\npr\left(\inf_{z\in[0,1]^p}\npr_\Theta\big(\Theta\in\Phi\left(z,s,|Q_1|\max\{d,d^*\}\right)\big)\leq\inf_{z\in[0,1]^p}\npr_\Theta\left\{\min_{j\in Q}\mathcal{S}(z,\Theta,s,j) \geq d\right\}\right)=1.$$
    It follows from (\ref{eqn:rfrw_high_split_q_1}) that for any $c \in (0,1)$, as $T\rightarrow\infty$, we have
    $$\npr\left\{ \inf_{z\in[0,1]^p}\npr_{\Theta}\left(\min_{j\in Q}\mathcal{S}(z,\Theta,s,j) \geq d\right)>c\right\}=1,$$
    which finishes the proof.
\end{proof}

Since the results in Lemma \ref{lemma:rfrw_high_split_q_infty} hold for any $d$, it implies that the expected number of splits (with respect to $\Theta$) along the informative features in $Q$ will increase to infinity with probability going to 1, which ensures the upper bound on the right-hand side of (\ref{eqn:rfrw_high_h_optimal_fx}) goes to $0$ and obtains the consistency of high-dimensional RF-RW. 

Based on Lemma \ref{lemma:rfrw_high_split_q_infty}, we introduce the proof of Theorem \ref{theo:rfrw_high_consistent} below.

\begin{proof}[Proof of Theorem \ref{theo:rfrw_high_consistent}]
Based on Theorem \ref{theo:rfrw} and Theorem \ref{theo:rfrw_high_h_optimal_fx}, to obtain uniform consistency of high-dimensional RF-RW, it suffices to show that
$$\sup_{x\in\mathbb{R}^p}\mathbb{E}_\Theta\text{diam}\left([\iota_h\big(L_\Lambda(x)\big)]_Q\right)\overset{P}{\longrightarrow}0,$$
as $T\rightarrow\infty$. By Lemma \ref{lemma:rfrw_high_split_q_infty}, we can obtain that
\begin{eqnarray*}
    &&\sup_{z\in[0,1]^p} \mathbb{E}_\Theta\normalfont{\text{diam}}\big([L(z)]_Q\big) \nonumber \\
    &\leq& \sup_{z\in[0,1]^p}\mathbb{E}_\Theta \left [\normalfont{\text{diam}}\big([L(z)]_Q\big) \Big{|} \min_{j\in Q} \mathcal{S}(z,\Theta,s,j)
    \geq d\right]\npr_{\Theta}\left(\min_{j\in Q}\mathcal{S}(z,\Theta,s,j) \geq d\right)\nonumber \\
    &+&  \sup_{z\in[0,1]^p}\mathbb{E}_\Theta \left [\normalfont{\text{diam}}\big([L(z)]_Q\big) \Big{|} \min_{j\in Q} \mathcal{S}(z,\Theta,s,j)
    < d\right]\npr_{\Theta}\left(\min_{j\in Q}\mathcal{S}(z,\Theta,s,j) < d\right)\nonumber \\
    &\leq& (1-\tilde{\xi})^d\nonumber.
\end{eqnarray*}
Since $L(z)=\iota_h(L_{\bar\Lambda}(x))=\iota_h(L_{\Lambda}(x))$ for $x\in\mathbb{R}^p$, and $z=\iota_h(x)$, it follows that
\begin{equation}
\label{eqn:rfrw_high_consistent_proof}
\npr\left(\sup_{x\in\mathbb{R}^p} \mathbb{E}_\Theta\text{diam}\left([\iota_h\big(L_\Lambda(x)\big)]_Q\right)\leq (1-\tilde{\xi})^d\right)=1.
\end{equation}
Since (\ref{eqn:rfrw_high_consistent_proof}) holds for any $d$, it implies that $\sup_{x\in\mathbb{R}^p}\mathbb{E}_\Theta\text{diam}\left([\iota_h\big(L_\Lambda(x)\big)]_Q\right)\overset{P}{\longrightarrow}0$ as $T\rightarrow\infty$, which finishes the proof.
\end{proof}

\section{Additional simulations}\label{sec:appen_sim}

In this section, we present additional results of the simulations in Section \ref{sec:sim} and conduct further experiments of the overspecified scenario, where we consider more lag features than those in the data-generating processes.

Table \ref{table:rfrw_dgp1}-\ref{table:rfrw_dgp3} report the mean RMTSE values of the same simulation studies in Section \ref{sec:sim}, where we consider the scenarios of the methods include: fewer lag features than those in the true underlying process (underspecified); exact lag features (correctly specified); more lag features (overspecified).

\begin{table}[!ht]
\centering
\caption{Mean RMTSE values of M1, including 3 lag features (underspecified), 5 lag features (correctly specified), 15 lag features (overspecified)} 
\label{table:rfrw_dgp1}
\begin{tabular}{ c c c c c c c c c c} 
 \hline
 &  \multicolumn{3}{c}{3 lag features} & \multicolumn{3}{c}{5 lag features} & \multicolumn{3}{c}{15 lag features} \\
 \cmidrule(lr){2-4}\cmidrule(lr){5-7}\cmidrule(lr){8-10}
 Method & $500$ & $1000$ & $2000$ & $500$ & $1000$ & $2000$ & $500$ & $1000$ & $2000$\\
 \hline
  RF-RW-1& 2.12 & 2.09 & 2.08 & 1.56 & 1.51 & 1.47 & 1.44 & 1.39 & 1.34\\
  RF-RW-2 & \textbf{2.11} & \textbf{2.07} & \textbf{2.06} & \textbf{1.28} & \textbf{1.24} & \textbf{1.20} & \textbf{1.31} & \textbf{1.26} & \textbf{1.20}\\
  tsRF & 2.12 & 2.09 & 2.08 & 1.36 & 1.32 &  1.27 & 1.59 & 1.53 &  1.49 \\
  SVM & 2.22 & 2.20 & 2.18 & 1.73 & 1.63 &  1.54 & 2.01 & 1.91 &  1.84 \\
  ExtraTrees & 2.16 & 2.15 & 2.14 &  1.71 & 1.68 &  1.63 & 1.96 & 1.92 &  1.89 \\
  RF & 2.33 & 2.31 & 2.29 &  2.32 & 2.30 &  2.29 & 2.34 & 2.30 &  2.29 \\ 
  LSTM & 2.29 & 2.28 & 2.26 & 2.29 & 2.27 &  2.26 & 2.31 & 2.27 &  2.25 \\

 \hline
\end{tabular}
\end{table}

\begin{table}[!ht]
\centering
\caption{Mean RMSTE values of M2, including 2 lag features (underspecified), 3 lag features (correctly specified), 15 lag features (overspecified)} 
\label{table:rfrw_dgp2}
\begin{tabular}{ c c c c c c c c c c} 
 \hline
 &  \multicolumn{3}{c}{2 lag features} & \multicolumn{3}{c}{3 lag features} & \multicolumn{3}{c}{15 lag features} \\
 \cmidrule(lr){2-4}\cmidrule(lr){5-7}\cmidrule(lr){8-10}
 Method & $500$ & $1000$ & $2000$ & $500$ & $1000$ & $2000$ & $500$ & $1000$ & $2000$\\
 \hline
  RF-RW-1 & 1.77 & 1.74 & \textbf{1.73} & 1.49 & 1.42 & 1.38 &  1.55 & 1.46 & 1.38\\
  RF-RW-2 & 1.78 & 1.75 & 1.73 & \textbf{1.22} & \textbf{1.15} & \textbf{1.10} &  \textbf{1.24} & \textbf{1.17} & \textbf{1.10}\\
  tsRF & 1.77 & 1.74 & 1.73 & 1.49 & 1.43 & 1.38 &  1.65 & 1.58 & 1.53 \\
  SVM & 1.80 & 1.77 & 1.76 &  1.80 & 1.76 &  1.74 &  1.86 & 1.85 &  1.82 \\
  ExtraTrees & 1.76 & \textbf{1.73} & 1.73 & 1.65 & 1.62 & 1.60 &  1.73 & 1.71 & 1.69 \\
  RF & 1.79 & 1.76 & 1.75 &  1.79 & 1.75 & 1.75 &  1.79 & 1.76 & 1.74 \\ 
  LSTM & \textbf{1.75} & 1.73 & 1.73 & 1.75 & 1.73 & 1.74 &  1.76 & 1.74 & 1.74\\

 \hline
\end{tabular}
\end{table}

\begin{table}[!ht]
\centering
\caption{Mean RMSTE values of M3, including 6 lag features (underspecified), 8 lag features (correctly specified), 15 lag features (overspecified)} 
\label{table:rfrw_dgp3}
\begin{tabular}{ c c c c c c c c c c} 
 \hline
 &  \multicolumn{3}{c}{6 lag features} & \multicolumn{3}{c}{8 lag features} & \multicolumn{3}{c}{15 lag features} \\
 \cmidrule(lr){2-4}\cmidrule(lr){5-7}\cmidrule(lr){8-10}
 Method & $500$ & $1000$ & $2000$ & $500$ & $1000$ & $2000$ & $500$ & $1000$ & $2000$\\
 \hline
  RF-RW-1 & \textbf{1.63} & \textbf{1.57} & \textbf{1.50} & \textbf{1.61} & \textbf{1.54} & \textbf{1.44}  & \textbf{1.60}  & \textbf{1.54} & \textbf{1.43}\\
  RF-RW-2 & 1.69 & 1.64 & 1.56 & 1.66 & 1.60 & 1.50 & 1.65 & 1.59 & 1.48\\
  tsRF & 1.78 & 1.72 & 1.64 & 1.74 & 1.67 & 1.57 &  1.87 & 1.82 &  1.71\\
  SVM & 2.38 & 2.34 & 2.27 & 2.37  & 2.35 & 2.26  & 2.41  & 2.42 & 2.32  \\
  ExtraTrees & 2.19 & 2.17 & 2.10 & 2.17 & 2.17 & 2.09 & 2.24  & 2.25 & 2.17 \\
  RF & 2.47 & 2.48 & 2.41 & 2.48  & 2.49 & 2.41 & 2.47  & 2.49 &  2.41\\ 
  LSTM & 2.39 & 2.42 & 2.37 & 2.39 & 2.42 & 2.37 & 2.39  & 2.42 & 2.36\\

 \hline
\end{tabular}
\end{table}

\clearpage

\begin{figure}[!ht]
\centering
\includegraphics[width=1\textwidth]{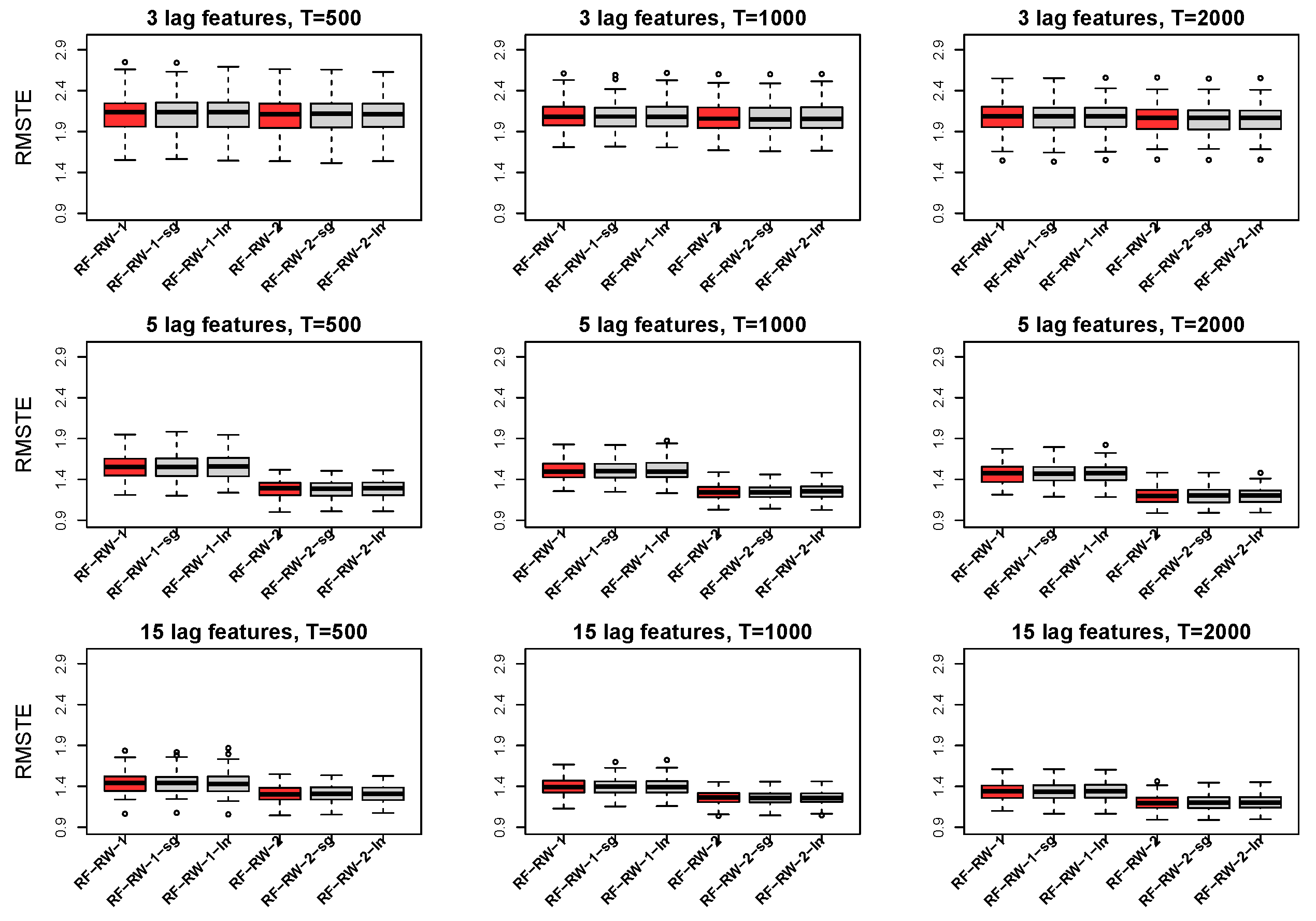}
\caption{RMSTE of M1 for RF-RW, with the random weights sampled from $\text{Exp}(1)$ (RF-RW-1, RF-RW-2), $\text{Lognormal}\left(\mu=-\log2/{2},\sigma^2=\log 2\right)$ (RF-RW-1-ln, RF-RW-2-ln), and $\text{Square root of Gamma}(\alpha=0.295,\theta=6.803)$ (RF-RW-1-sg, RF-RW-2-sg).}
\label{fig:rfrw_weights_compare}
\end{figure}

In Figure \ref{fig:rfrw_weights_compare}, we investigate how the distribution of random weights affects the predictive performance of RF-RW. We consider the random weights sampled from the following distributions: $\text{Exp}(1)$ (as in Section \ref{sec:sim}), $\text{Lognormal}\left(\mu=-\log2/{2},\sigma^2=\log 2\right)$, and $\text{Square root of Gamma}(\alpha=0.295,\theta=6.803)$. The results show that the predictive performance of RF-RW varies little across different distributions of random weights, indicating that RF-RW can be implemented with a wide range of random weight distributions.


\begin{figure}[!ht]
\centering
\includegraphics[width=1\textwidth]{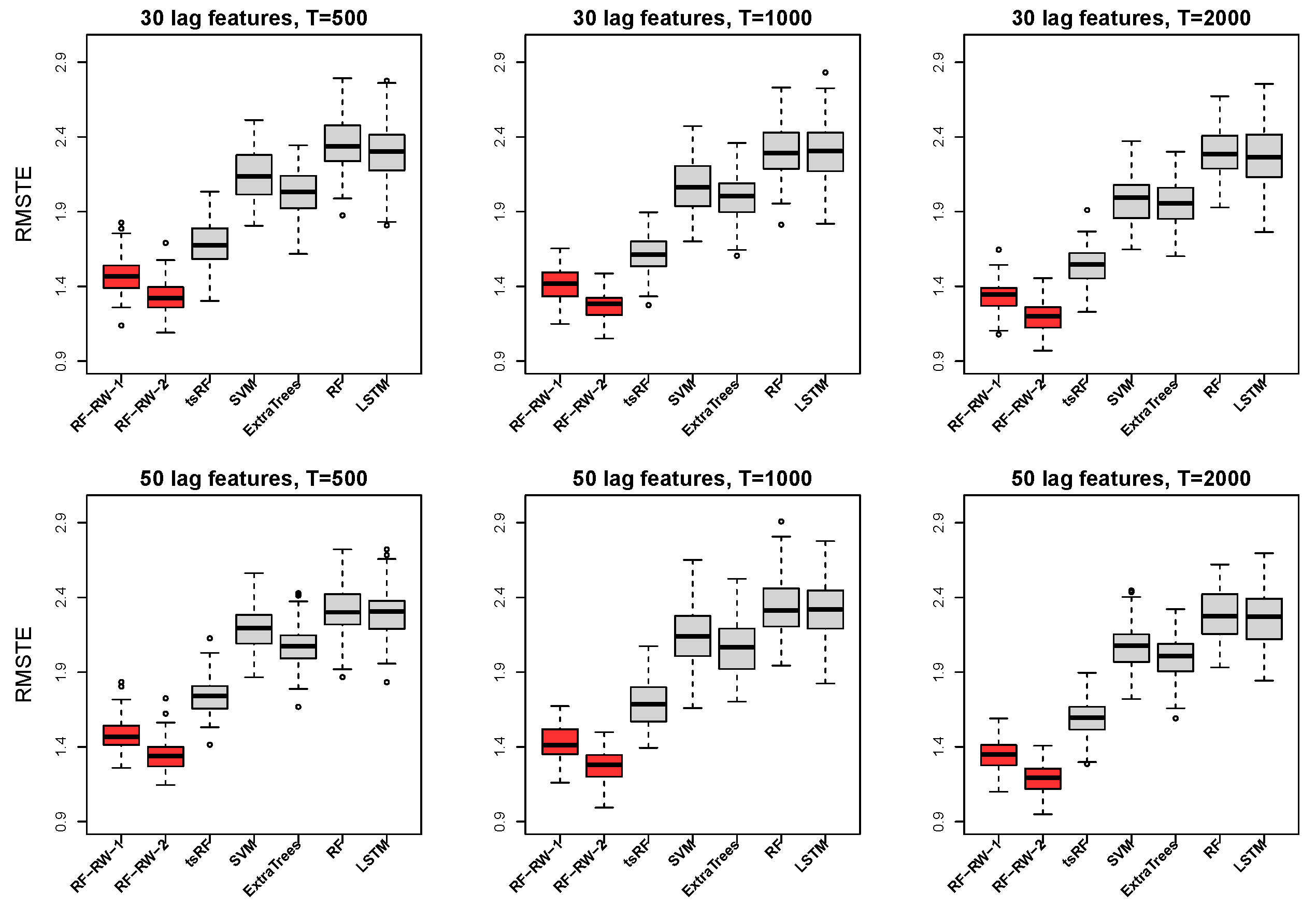}
\caption{RMSTE of M1, overspecified scenarios of methods including 30 and 50 lag features.}
\label{fig:rfrw_dgp1_30_50}
\end{figure}

\begin{table}[!ht]
\centering
\caption{Mean RMSTE values of M1, overspecified scenarios of methods including 30 and 50 lag features} 
\label{table:rfrw_dgp1_30_50}
\begin{tabular}{ c c c c c c c } 
 \hline
 &  \multicolumn{3}{c}{30 lag features} & \multicolumn{3}{c}{50 lag features}\\
 \cmidrule(lr){2-4}\cmidrule(lr){5-7}
 Method & $500$ & $1000$ & $2000$ & $500$ & $1000$ & $2000$ \\
 \hline
  RF-RW-1 & 1.47 &  1.41 & 1.34 & 1.48 & 1.43 & 1.35\\
  RF-RW-2 & \textbf{1.33} & \textbf{1.27} & \textbf{1.20} & \textbf{1.34} & \textbf{1.28} & \textbf{1.20} \\
  tsRF & 1.68 &  1.61 & 1.54 & 1.74 & 1.69 &  1.59 \\
  SVM &  2.14 & 2.07 & 1.98 & 2.19 & 2.14 &  2.07\\
  ExtraTrees & 2.04 & 2.00 & 1.95 & 2.07 & 2.06 &  2.00 \\
  RF &  2.35 & 2.31 & 2.28 & 2.33 & 2.33 &  2.28 \\ 
  LSTM & 2.32 & 2.29 & 2.26 & 2.30 & 2.32 &  2.27\\

 \hline
\end{tabular}
\end{table}


\begin{figure}[!ht]
\centering
\includegraphics[width=1\textwidth]{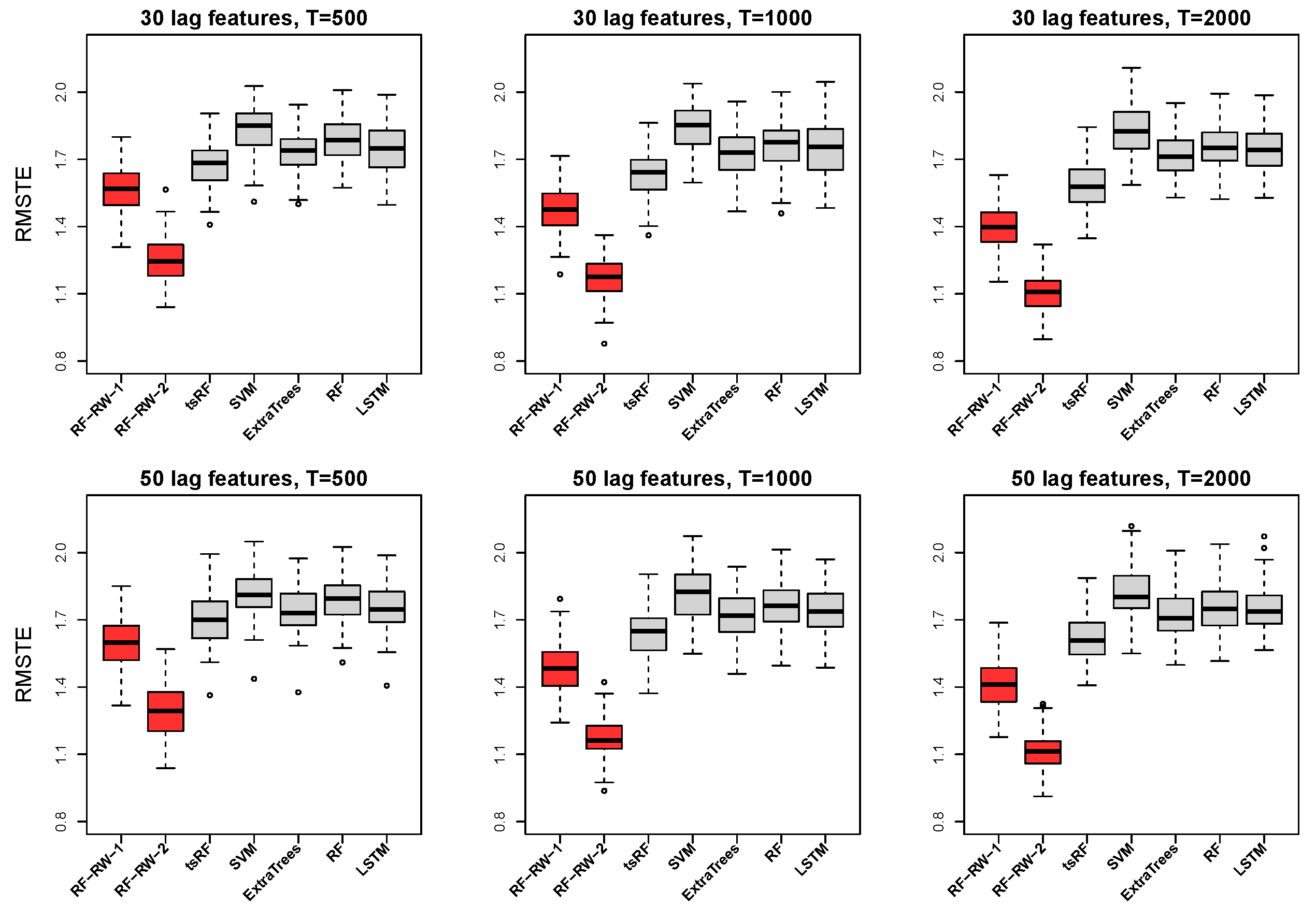}
\caption{RMSTE of M2, overspecified scenarios of methods including 30 and 50 lag features.}
\label{fig:rfrw_dgp2_30_50}
\end{figure}

\begin{table}[!ht]
\centering
\caption{Mean RMSTE values of M2, overspecified scenarios of methods including 30 and 50 lag features} 
\label{table:rfrw_dgp2_30_50}
\begin{tabular}{ c c c c c c c } 
 \hline
 &  \multicolumn{3}{c}{30 lag features} & \multicolumn{3}{c}{50 lag features}\\
 \cmidrule(lr){2-4}\cmidrule(lr){5-7}
 Method & $500$ & $1000$ & $2000$ & $500$ & $1000$ & $2000$ \\
 \hline
  RF-RW-1 & 1.57 & 1.48 & 1.40 & 1.60 & 1.49 & 1.41\\
  RF-RW-2 & \textbf{1.26} & \textbf{1.18} & \textbf{1.11} & \textbf{1.30} & \textbf{1.17} & \textbf{1.12} \\
  tsRF & 1.68 & 1.63 & 1.58 & 1.70 & 1.65 &  1.62 \\
  SVM &  1.83 & 1.84 & 1.83 & 1.82 & 1.82 &  1.82\\
  ExtraTrees & 1.73 & 1.72 & 1.71 & 1.74 & 1.72 &  1.72 \\
  RF &  1.79 & 1.76 & 1.75 & 1.79 & 1.75 &  1.75 \\ 
  LSTM & 1.75 & 1.75 & 1.75 & 1.76 & 1.74 &  1.75\\

 \hline
\end{tabular}
\end{table}

\begin{figure}[!ht]
\centering
\includegraphics[width=1\textwidth]{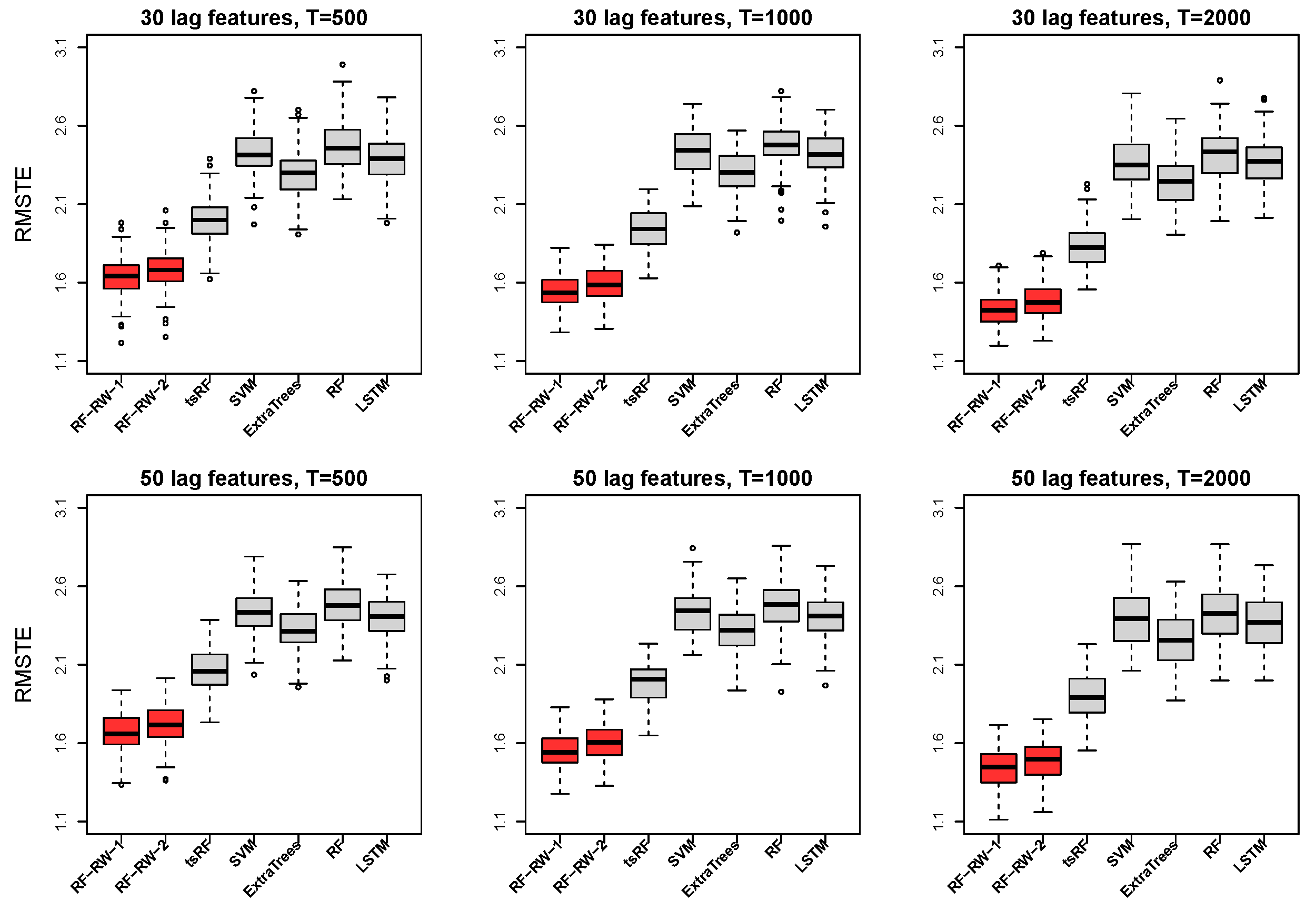}
\caption{RMSTE of M3, overspecified scenarios of methods including 30 and 50 lag features.}
\label{fig:rfrw_dgp3_30_50}
\end{figure}

\begin{table}[!ht]
\centering
\caption{Mean RMSTE values of M3, overspecified scenarios of methods including 30 and 50 lag features} 
\label{table:rfrw_dgp3_30_50}
\begin{tabular}{ c c c c c c c } 
 \hline
 &  \multicolumn{3}{c}{30 lag features} & \multicolumn{3}{c}{50 lag features}\\
 \cmidrule(lr){2-4}\cmidrule(lr){5-7}
 Method & $500$ & $1000$ & $2000$ & $500$ & $1000$ & $2000$ \\
 \hline
  RF-RW-1 & \textbf{1.64} & \textbf{1.55} & \textbf{1.43} & \textbf{1.67} & \textbf{1.55} & \textbf{1.44}\\
  RF-RW-2 & 1.68 & 1.59 & 1.48 & 1.72 & 1.60 & 1.49 \\
  tsRF & 2.00 & 1.94 & 1.83 & 2.06 & 1.99 &  1.89 \\
  SVM &  2.43 & 2.43 & 2.37 & 2.44 & 2.44 & 2.39 \\
  ExtraTrees & 2.30 & 2.30 & 2.24 & 2.32 & 2.31 & 2.26  \\
  RF & 2.48  & 2.48 & 2.41 & 2.48 & 2.47 &  2.42 \\ 
  LSTM & 2.39 & 2.41 & 2.37 & 2.39 & 2.40 & 2.37 \\

 \hline
\end{tabular}
\end{table}

We further investigate the predictive performance of the methods under the overspecified scenario. In particular, we consider the methods that include 30 and 50 lag features, which constitute more severe overspecification than the 15 lags used in the experiments of Section \ref{sec:sim}. The boxplots of the RMSTE of the data-generating processes are shown in Figure \ref{fig:rfrw_dgp1_30_50}-\ref{fig:rfrw_dgp3_30_50}, with the corresponding mean RMTSE values presented in Table \ref{table:rfrw_dgp1_30_50}-\ref{table:rfrw_dgp3_30_50}. The results once again show the superiority of the proposed RF-RW. In conjunction with the simulation results presented in Section \ref{sec:sim}, it supports the conclusion that the predictive performance of the proposed RF-RW is robust to the increase of lag features included,  which implies that RF-RW has the potential to mitigate the curse of dimensionality. 

In contrast, the predictive performance of the competitive \texttt{tsRF} continues to deteriorate as the number of lag features increases. This drawback of \texttt{tsRF} restricts its applicability in large-dimensional time series applications, where the methods often include more lag features to incorporate the temporal information.  

\putbib[literature_supp]
\end{bibunit}







\end{document}